\newcommand{\old}[1]{\textcolor{gray}{OLD: #1}}
\newcommand{\odelta}{\overline{\delta}}
\renewcommand{\th}{^\textrm{th}}
\newcommand{\comment}[1]{\textcolor{black!70}{\textsf{** #1 **}}}
\newcommand{\om}{\widetilde{m}}
\newcommand{\hm}{\hat{m}}
\newcommand{\eps}{\epsilon}
\newcommand{\poly}{\textrm{poly}}
\newcommand{\Ex}{\mathbb{E}}
\newcommand{\EX}{\mathbb{E}}
\newcommand{\eqdef}{\stackrel{\rm def}{=}}
\newcommand{\ot}{\widetilde{t}\,}
\newcommand{\hatt}{\hat{t}}
\newcommand{\calA}{{\mathcal{A}}}
\newcommand{\mA}{{\mathcal{A}}}
\newcommand{\mG}{{\mathcal{G}}}
\newcommand{\badadvice}{\textsf{``bad advice"}}
\newcommand{\settauD}{\frac{8m\gamma^2}{\eps\cdot \ot}}
\newcommand{\settauT}{12\gamma/\eps}
\newcommand{\setQ}{\frac{m}{(\eps \ot)^{2/3}}\cdot 30\ln(4/\delta)}
\newcommand{\setq}{\frac{16m\cdot \tau_t\cdot \ln(4/\delta)}{\eps^2\cdot \ot}}
\newcommand{\vvv}{{v}}
\newcommand{\vecV}{\vec{P}}
\newcommand{\hv}{\hat{v}}
\newcommand{\ov}{\overline{v}}
\newcommand{\ert}[1]{E_{rt}[#1]}
\newcommand{\invokeA}{\calA}
\newcommand{\search}{\textsf{Search}}
\newcommand{\oa}{\overline{\alpha}}
\newcommand{\ta}{\widetilde{\alpha}}
\newcommand{\aG}{\alpha(G)}
\renewcommand{\hatt}{\hat{t}}
\newtheorem{theorem}{Theorem}[section]
\newtheorem{corollary}[theorem]{Corollary}
\newtheorem{lemma}{Lemma}[section]
\newtheorem{claim}[lemma]{Claim}
\newtheorem{definition}[lemma]{Definition}
\newtheorem{observation}[lemma]{Observation}
\newtheorem{notation}[lemma]{Notation}
\newtheorem{remark}[lemma]{Remark}
\def\moverlay{\mathpalette\mov@rlay}
\def\mov@rlay#1#2{\leavevmode\vtop{%
   \baselineskip\z@skip \lineskiplimit-\maxdimen
   \ialign{\hfil$\m@th#1##$\hfil\cr#2\crcr}}}
\newcommand{\charfusion}[3][\mathord]{
    #1{\ifx#1\mathop\vphantom{#2}\fi
        \mathpalette\mov@rlay{#2\cr#3}
      }
    \ifx#1\mathop\expandafter\displaylimits\fi}
  \newcommand{\isHeavy}{\hyperref[alg:IsHeavy]{\color{black}{\sf IsHeavy}}}
\newcommand{\IsAssigned}{\hyperref[alg:IsAssigned]{\color{black}{\sf IsAssigned}}}
\newcommand{\TrianglesApprox}{\hyperref[alg:triangles]{\color{black}{\sf Approx-Triangles-With-Arboricity-Advice}}}
\newcommand{\ApproxTriangles}{\hyperref[alg:triangles]{\color{black}{\sf Approx-Triangles-With-Arboricity-Advice}}}
 \newcommand{\approxEdges}{\hyperref[alg:edges]{\color{black}{\sf Approx-Edges-With-Arboricity-Advice}}}
\newcommand{\alg}[2]{
 \begin{figure*}[htb!]
	\centering
       \fbox{\parbox{0.95\linewidth}{
	#2
        }}
        \label{#1}
\end{figure*}
}
\begin{document}

\title{Testable algorithms for approximately counting edges and triangles in sublinear time and space}
\author{Talya Eden\thanks{Bar-Ilan University,  talyaa01@gmail.com. 
Talya gratefully acknowledges the support of the MIT International Science and Technology Initiatives (MISTI) and the MIT-Israel Zuckerman STEM Fund. 
Part of this work was conducted while the authors were visiting the Simons
Institute for the Theory of Computing. 
} 
\and
Ronitt Rubinfeld\thanks{Massachusetts Institute of Technology, ronitt@mit.edu. 
Supported by NSF awards DMS-2022448 and CCF-2310818,
and the MIT-Israel Zuckerman STEM Fund. Part of this work was conducted while the authors were visiting the Simons
Institute for the Theory of Computing.
} 
\and
Arsen Vasilyan\thanks{University of Texas at Austin. Supported by NSF AI Institute for Foundations of Machine
Learning (IFML) and NSF award
CCF-2310818. 
Arsen gratefully acknowledges the support of the MIT International Science and Technology Initiatives (MISTI) and the MIT-Israel Zuckerman STEM Fund.
Part of this work was conducted while the authors were visiting the Simons
Institute for the Theory of Computing. 
}
}

%

\maketitle

\date{}

\begin{abstract}

We consider the fundamental
problems of 
approximately counting the numbers of edges and
triangles in a graph in sublinear time.
Previous algorithms for these tasks 
are significantly more efficient under a promise that the arboricity of the graph is bounded by some parameter $\oa$. However, when this promise is violated, the estimates given by these algorithms are no longer guaranteed to be correct.

For the triangle counting task, we give an algorithm that requires no promise on the input graph $G$, and computes a $(1\pm \epsilon)$-approximation for the number of triangles $t$ in $G$ in time $O^*\left(
\frac{m\cdot \aG}{t}
+
\frac{m}{t^{2/3}}
 \right)$, where $\aG$ is the arboricity of the graph. The algorithm can be used on any graph $G$ (no prior knowledge the arboricity $\aG$ is required), and the algorithm adapts its run-time on the fly based on the graph $G$.

We accomplish this by trying a sequence of candidate values $\ta$ for $\aG$ and using a novel algorithm in the framework of testable algorithms.
This ensures that wrong candidates $\ta$ cannot lead to incorrect estimates: as long as the advice is incorrect, the algorithm detects it and continues with a new candidate.
Once the algorithm accepts the candidate, its output is guaranteed to be correct with high probability. 

We prove that this approach preserves - up to an additive overhead - the dramatic efficiency gains obtainable when good arboricity bounds are known in advance, while ensuring robustness against misleading advice. We further complement this result with a lower bound, showing that such an overhead is unavoidable whenever the advice may be faulty.

We further demonstrate implications of our results for triangle counting in the streaming model.

\end{abstract}
\newpage
\tableofcontents

\newpage
\section{Introduction}
\vspace{-0.3em}
Counting triangles is a fundamental graph problem with applications in network analysis, database systems, and social network theory, among others (see
\cite{al2018triangle} for a comprehensive survey).
A multitude of algorithms for counting triangles have been developed 
which 
work well in sequential~\cite{nesetril,ChNi85,itai1977finding}, streaming~\cite{kallaugher2019complexity,mcgregor2016better,chen2022triangle},
sublinear time~\cite{ELRS,ERS_cliques,BiswasER21,hamiltonian,ERS19_dd,AKK,Peng}, parallel and distributed models~\cite{Biswas_MPC_triangles,fischer2018possibilities,censor2022deterministic,czumaj2020detecting,pagh2012colorful,KoPiPlSe13,park2013efficient},
both from 
theoretical and practical perspectives.
The best known
\emph{sublinear-time} algorithm for
estimating the number of triangles
in general graphs runs in 
$O^*(\frac{m^{3/2}}{t})$ time~\cite{ELRS}, where  $m$ and $t$ denote the number of edges and triangles in the input graph $G$, respectively.
This algorithm is 
known to be optimal due to query complexity lower bounds (in the augmented\footnote{In the augmented query access model it is assumed that the graph has access to uniform vertex samples, degree queries, uniform neighbor queries, pair queries, as well as  uniform edge samples.} query access model).

Given the importance of the triangle counting problem, 
there is much interest in even faster algorithms for
typical real world graphs.   
These graphs often have structural properties that make
it possible to  perform faster computations.
An important example of such
a property is the graph’s \emph{arboricity}, 
a sparsity parameter
that corresponds (up to a constant  factor of 2) 
to the maximum average degree over all subgraphs. 
The arboricity
is typically small in many real-world networks~\cite{goel2006bounded,JS17,shin2018patterns,danisch2018listing}, 
including planar graphs and preferential attachment networks. 
When an upper bound on the arboricity is
known, there are fast sublinear-time triangle-counting algorithms 
\cite{ERS_faster, bishnu2025arboricity}
that circumvent
the 
known lower bounds for general graphs.


The aforementioned arboricity-based algorithms require an upper bound $\oa$ on the actual arboricity of the graph, denoted $\aG$,
as one of their input parameters. Those algorithms
rely crucially on the correctness of this assumption. 
If the arboricity of the graph exceeds $\oa$, 
these algorithms are no longer guaranteed to give correct estimates for the number of triangles.
\vspace{-1em}
\subsection{Our Contribution}
In this work, we develop a novel algorithm in the augmented query access model that does not require any information about the arboricity of the input graph.
For \emph{all graphs} $G$, our algorithm gives a correct $(1\pm \epsilon)$-approximation to the number of triangles in $G$, and has a run-time that scales with the arboricity of the input graph. Specifically, the run-time of our algorithm is $O^*\left(
\frac{m\cdot 
\aG}{t}
+
\frac{m}{t^{2/3}}
 \right)$.\footnote{We use the $O^*(\cdot)$ notation to hide $\poly(1/\eps, \log n)$ dependencies.}

A naive approach would be to 
combine the sublinear-time arboricity estimator of~\cite{EstArb} with 
the triangle counting results of~\cite{ERS_faster, bishnu2025arboricity}. However, \cite{EstArb} runs in 
essentially optimal time $\Theta(n/\aG)$, and so 
invoking it would dominate the complexity and substantially degrade the overall runtime, reflecting the inherently high query cost of arboricity estimation.

We avoid estimating arboricity directly, by first designing an algorithm in the formal framework of 
testable algorithms, introduced by~\cite{RubinfeldV23} in the context of agnostic learning. A testable algorithm receives advice about the input
that may or may not be accurate. The algorithm is allowed to either output an estimate or terminate early  indicating $\badadvice$.  
It must satisfy the following two properties: 
(1) {\em Completeness}: If the advice is correct, the algorithm outputs a correct estimate (with high probability).
(2) {\em Soundness}: If the algorithm outputs an estimate (i.e., does not return $\badadvice$), then that estimate is correct (with high probability), even if the advice was incorrect.
This framework captures natural scenarios in which domain knowledge provides partial information about the input class, but the reliability of this information cannot be guaranteed.

We note that previous work on testable algorithms focused on settings in learning theory, and our central conceptual contribution is extending this framework to approximate counting and estimation tasks. 
Specifically, we design efficient testable algorithms for estimating the number of 
edges and triangles in a graph, given a potentially unreliable upper bound $\ta$ on its arboricity as advice.\footnote{Throughout the paper we use $\aG$ to denote the actual arboricity of the graph $G$, $\oa$ to denote a guaranteed upper bound on the arboricity, and $\ta$ to denote an unreliable upper bound.}
Our algorithms  return a $(1 \pm \varepsilon)$-approximation with high probability, provided it does not output $\badadvice$. 
Moreover, if the given advice is accurate, then the algorithm is not likely to output $\badadvice$.


\begin{theorem}[Testable approximate triangle counting algorithm]
\label{thm: main theorem after search for ot}
Let $G$ be the input
graph, and $\eps$and $\delta$ approximation and error parameters.
Let $m$ and $t$ be the number of edges and triangles in $G$, respectively. 
There exists an algorithm that
gets as input
augmented query
access to $G$, the number\footnote{ We remark that the requirement on knowing
$m$ can be removed at an additive (essential) overhead of 
$O^*(\frac{n\aG}{m})$, as implied by Corollary~\ref{cor:edges} below or $O^*(n^{1/4})$ due to~\cite{beretta2025faster}. However, for the sake of clarity, we chose to present the two results separately and focus here on the triangle estimation problem, as this is the main focus of our work. } of edges $m$, 
as well as 
an \emph{advice:}  a positive integer $\ta$.
The algorithm has
expected run-time and query complexity of $O^*\left(\left(\frac{m \ta }{ t}+\frac{m }{ t^{2/3}} \right)\log \frac{1}{\delta}\right)$. Upon each invocation, the algorithm either returns \badadvice{} or a number $\hat{t}$, satisfying
\begin{itemize}
\item  (Completeness)
For every $G$, if $\aG\leq  \ta$,
then the algorithm can output $\badadvice$ only with probability at most $\delta$.

\item  (Soundness)
For every $G$ and $
\ta$,  w.p. at least $1-\delta$ the algorithm will either return $\badadvice$ or output $\hat{t} \in (1 \pm \epsilon)t$.
\end{itemize}
\end{theorem}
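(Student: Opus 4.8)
The proof has two ingredients. The first is a \emph{core} testable estimator that, in addition to $\ta$, receives a guess $\ot$ for the triangle count $t$; the second is a geometric search over $\ot$ that removes the need for this guess (this is the content reflected in the theorem's reference to a ``search for $\ot$''). For the core estimator, the plan is to start from the arboricity-promised triangle counting algorithm of \cite{ERS_faster, bishnu2025arboricity}, which, \emph{given a correct upper bound} $\oa \ge \aG$ and a correct scale $\ot = \Theta(t)$, returns a $(1\pm\eps)$-approximation to $t$ in time $O^*(m\oa/t + m/t^{2/3})$, and to equip it --- in the spirit of the testable-algorithms framework of \cite{RubinfeldV23} --- with a battery of sublinear tests that verify precisely those structural consequences of ``$\aG \le \ta$'' on which the estimator's correctness and running time rely. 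If any test rejects, the algorithm outputs $\badadvice$; otherwise it runs the estimator with $\oa := \ta$ and returns its output.

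\textbf{Isolating and testing the structural properties.} I would re-read the analysis of \cite{ERS_faster, bishnu2025arboricity} and extract the list of quantities it bounds using only ``$\aG \le \oa$'': each is a statement of the form $Q_i \le B_i(\oa,m,t,\eps)$, e.g.\ a bound on the number of high-degree (``heavy'') vertices above a threshold $\theta \approx \sqrt{m\oa}$, a bound on the number of triangles \emph{charged} to any light edge or vertex under the estimator's degeneracy-order assignment rule, and a bound on the number of heavy--heavy triangles. For each such property I would design a test $T_i$ that, using $O^*(m\ta/\ot + m/\ot^{2/3})$ samples (degree-proportional vertex sampling via edge endpoints, edge sampling plus bounded neighbourhood exploration, etc.), rejects if $Q_i$ substantially exceeds $B_i(\ta,\dots)$ and otherwise passes with probability at least $1-\delta/\poly(\log n,1/\eps)$ whenever $Q_i \le B_i(\ta,\dots)$. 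The two required guarantees then follow:
\begin{itemize}
\item \emph{Completeness.} If $\aG \le \ta$, then every $Q_i \le B_i(\ta,\dots)$ by the structural lemmas of \cite{ERS_faster, bishnu2025arboricity}, so by a union bound no test rejects except with probability $\delta$; hence $\badadvice$ is output with probability at most $\delta$, and conditioned on no test rejecting, the estimator itself is correct with high probability exactly as in \cite{ERS_faster, bishnu2025arboricity}.
\item \emph{Soundness.} Condition on the high-probability event that every test behaves correctly. If the algorithm does not output $\badadvice$, then every $Q_i \le (1+o(1))B_i(\ta,\dots)$; that is, the graph behaves \emph{for the purposes of the estimator} exactly as a graph of arboricity at most $\ta$ would, no matter what $\aG$ actually is. Re-running the variance/concentration argument of \cite{ERS_faster, bishnu2025arboricity} with the hypotheses ``$Q_i \le B_i$'' in place of ``$\aG \le \ta$'' then gives $\hatt \in (1\pm\eps)t$ with high probability.
\end{itemize}

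\textbf{Geometric search over $\ot$.} To obtain the stated algorithm --- which is not given $t$ --- run the core estimator for $\ot = T_0, T_0/2, T_0/4, \dots$, a geometric sequence starting from a crude upper bound $T_0 \ge t$, granting the iteration with current guess $\ot$ a time budget $O^*((m\ta/\ot + m/\ot^{2/3})\log(1/\delta))$ and treating an overrun as a non-fatal failure of that iteration. Propagate $\badadvice$ if it is ever returned; otherwise accept the first iteration whose output $\hatt$ is consistent with its own guess, i.e.\ $\hatt \ge c\,\ot$ for a suitable constant $c$. The standard analysis of such searches (as in the $\ot$-search of \cite{ELRS}) gives: (i) by the time $\ot$ drops to within a constant of $t$ the core run succeeds and is accepted, so the search halts; (ii) acceptance forces $\ot = \Theta(t)$, so the core estimator's guarantee applies and $\hatt \in (1\pm\eps)t$; (iii) soundness is preserved since the core estimator is sound for \emph{every} value of $\ot$; (iv) the running time is a geometric series dominated by its last term, namely $O^*((m\ta/t + m/t^{2/3})\log(1/\delta))$ in expectation; a union bound over the $O^*(1)$ iterations (after rescaling the per-iteration failure probability by a logarithmic factor) keeps the total failure probability at $\delta$.

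\textbf{Main obstacle.} The genuinely new difficulty --- beyond assembling known components --- lies in the second step: identifying a list of structural quantities that is simultaneously (a) implied by ``$\aG \le \ta$'', (b) \emph{sufficient} to drive the estimator's analysis, and (c) each estimable within the \emph{same} $O^*(m\ta/t + m/t^{2/3})$ budget, so that no test dominates the cost. The ``few triangles charged to any light edge'' condition is the delicate one: testing a worst-case maximum over edges is too expensive, so one must instead isolate the weaker aggregate statistic that the variance bound actually consumes --- roughly, a second moment of the charges restricted to a random sample of edges --- and re-derive the estimator's concentration from that statistic rather than from a per-edge maximum. Getting this decomposition right, and matching each test's cost to the estimator's cost, is the technical heart of the proof.
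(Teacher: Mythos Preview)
Your high-level architecture (a testable core estimator parameterized by a guess $\ot$, plus a geometric search over $\ot$) matches the paper. But the two places where real work is needed are handled incorrectly.

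\textbf{The ``main obstacle'' is mis-diagnosed and the proposed fix does not close the gap.} You propose to replace the per-edge maximum ``few triangles charged to any light edge'' by an aggregate second moment of the charges on a random edge sample, and to re-derive concentration from that. This addresses the \emph{variance} of the estimator but not its \emph{bias}: even if $\sum_e a_P(e)^2$ is small on your sample, nothing you have tested certifies that $\sum_e a_P(e)\approx t$, i.e., that the number of triangles with all three edges heavy is $\le \eps t$. In the promised setting that bound comes from $|H|\cdot\aG$, which you cannot use here. The paper's key idea is different: it \emph{changes the heaviness thresholds} to $\tau_t\asymp\gamma/\eps$ and $\tau_d\asymp m\gamma^2/(\eps\ot)$ with $\gamma=\max\{\ta,\ot^{1/3}\}$, so that whenever $\ta\ge\aG$ one has $|H_d|,|H_t|\le(\eps\ot)^{2/3}$. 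It then tests the single scalar ``$|H|\le O((\eps\ot)^{2/3})$'' by checking the fraction of heavy edges in the sample $R$. If this passes, the triangles fully inside $H$ number at most $O(|H|^{3/2})\le O(\eps t)$ by the universal $t\le 2m^{3/2}$ bound---no arboricity needed. This is precisely what produces the additive $m/t^{2/3}$ term (that is the sample size required to detect $|H|/m\gtrsim(\eps\ot)^{2/3}/m$), and it is the idea your proposal is missing. The per-edge bound $a_P(e)\le 2\tau_t$ is not tested at all; it is \emph{enforced} by the partition rule, and the only thing that must be tested is the size of the heavy side of the partition (plus the easy check $d(R)\le O(|R|\,\ta/\delta)$ to control running time).

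\textbf{The search-soundness claim is wrong as stated.} Item (iii), ``the core estimator is sound for every value of $\ot$'', is false: for $\ot\gg t$ the core routine may (with probability up to $1-\eps/4$) output an arbitrarily large $\hat t$, so your ``accept the first $\ot$ with $\hat t\ge c\,\ot$'' rule can accept a bad iteration. The paper's core estimator only guarantees (a) strong correctness when $\ot\in[t/4,t]$, and (b) the weak one-sided bound $\Pr[\hat t\le(1+\eps)t]\ge\eps/4$ when $\ot>t$. To turn this into a sound search, the paper invokes the Search Theorem of \cite{ERS_cliques} and, to carry the $\badadvice$ branch through that theorem, introduces an auxiliary oracle $\mA'$ that replaces $\badadvice$ by the true value $t$; the Search Theorem applied to $\mA'$ yields $\hat t\in(1\pm\eps)t$ with probability $4/5$, and the actual algorithm either agrees with the $\mA'$-based search or has already output $\badadvice$. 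Your proposal needs this (or an equivalent) mechanism; the bare self-consistency check does not suffice.
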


In order to eliminate the need for advice, we use this algorithm 
with a sequence of guesses serving as our advice value $\ta$. We start with running the algorithm with a setting of $\ta=1$, and continue invoking the algorithm while doubling 
the guessed value of $\ta$ until the algorithm outputs an answer (rather than $\badadvice$). 
By Soundness, the first accepting run outputs a correct estimate with high probability, and by Completeness, the number of guesses is bounded, so the overhead is only logarithmic. Consequently, we obtain an algorithm applicable to any graph $G$.
%

 \begin{corollary}[Instance-adaptive approximate triangle-counting algorithm.]
\label{cor: triangle-counting algorithm that adapts to arboricity}
    There exists an algorithm with the following guarantees. The algorithm receives augmented query access to a graph $G$, and the number of edges $m$ in $G$ together with  parameters $\epsilon,\delta\in(0,1)$. With probability at least $1-\delta$, the algorithm produces an estimate $\hat{t}$ that with probability at least $1-\delta$ satisfies $\hat{t}\in (1\pm \epsilon)t$, where $t$ is the number of triangles in $G$. The expected run-time of the algorithm is $O^*\left(\left(\frac{m\cdot \aG}{t}
+
\frac{m }{t^{2/3}}
 \right) \log\left(\frac{1}{\delta}\right)\right)$, where $\aG$ is the arboricity of $G$.    
\end{corollary}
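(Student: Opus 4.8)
The plan is to derive Corollary~\ref{cor: triangle-counting algorithm that adapts to arboricity} from Theorem~\ref{thm: main theorem after search for ot} by the ``doubling search'' over the advice value $\ta$ already sketched in the text. Concretely, I would run the testable algorithm of Theorem~\ref{thm: main theorem after search for ot} with $\ta = 1, 2, 4, 8, \ldots$ (powers of two), each invocation instantiated with error parameter $\delta' = \delta / (c \log n)$ for a suitable constant $c$ (so that a union bound over all $O(\log n)$ invocations still leaves total failure probability at most $\delta$), and stop at the first invocation that returns a number $\hat{t}$ rather than $\badadvice$; output that $\hat t$.

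The first thing to verify is \emph{correctness of the output}. By Soundness, conditioned on the (high-probability) event that no invocation ``lies'', every invocation that returns a number returns a value in $(1\pm\eps)t$; hence the value we output is a $(1\pm\eps)$-approximation. Union-bounding the Soundness failure over the at most $\lceil \log_2 n\rceil + 1$ candidate values (the arboricity is at most $n$, and in fact the search terminates well before that) gives overall success probability $1-\delta$, after adjusting $\delta'$ as above; this accounts for the nested ``with probability at least $1-\delta$'' in the statement. Next is \emph{termination and the bound on the number of invocations}. Let $j^\star$ be the smallest index with $\ta_{j^\star} := 2^{j^\star} \geq \aG$; note $j^\star \le \lceil \log_2 \aG \rceil$ and $\ta_{j^\star} < 2\aG$. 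By Completeness, each invocation with $\ta \ge \aG$ returns $\badadvice$ with probability at most $\delta'$, so with high probability the search halts at or before index $j^\star$; thus the number of invocations is $O(\log \aG) = O^*(1)$.

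Finally I would bound the \emph{expected run-time}. The run-time of the full procedure is the sum of the run-times of the individual invocations up to the one that accepts. For each $j$, the $j$-th invocation (when reached) has expected cost $O^*\!\big((\tfrac{m\, 2^j}{t} + \tfrac{m}{t^{2/3}})\log\tfrac{1}{\delta'}\big)$ by Theorem~\ref{thm: main theorem after search for ot}, and since the search reaches index $j$ only if all earlier invocations returned $\badadvice$ (which, for $j \ge j^\star$, happens with probability at most $(\delta')^{\,j - j^\star}$ by Completeness and independence of the invocations' coins), the expected total cost is dominated by the geometric sum $\sum_{j \le j^\star} O^*\!\big(\tfrac{m\,2^j}{t} + \tfrac{m}{t^{2/3}}\big)$ plus a negligibly-weighted tail for $j > j^\star$. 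The geometric sum over $j \le j^\star$ is $O^*\!\big(\tfrac{m\, 2^{j^\star}}{t} + \tfrac{m}{t^{2/3}}\big) = O^*\!\big(\tfrac{m\,\aG}{t} + \tfrac{m}{t^{2/3}}\big)$ since $2^{j^\star} < 2\aG$, and the tail $\sum_{j > j^\star} (\delta')^{\,j-j^\star}\, O^*(\tfrac{m\,2^j}{t}+\tfrac{m}{t^{2/3}})$ converges (the factor $2^{j-j^\star}$ is beaten by $(\delta')^{\,j-j^\star}$ for $\delta'$ small, e.g. $\delta' < 1/4$) and contributes only a lower-order term. Folding the $\log(1/\delta')=O^*(\log(1/\delta))$ factor in, this gives the claimed $O^*\!\big((\tfrac{m\,\aG}{t}+\tfrac{m}{t^{2/3}})\log\tfrac1\delta\big)$ expected run-time.

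The only genuinely delicate point is the run-time accounting: one must make sure that the contribution of invocations with $j > j^\star$ — which are reached only due to ``false $\badadvice$'' events — is controlled, and that the cost of a single such (potentially very expensive, since $2^j$ may be as large as $n$) invocation, weighted by its small reaching-probability, still sums to something lower order. This is exactly why it is convenient to take $\delta'$ a small constant times $1/\log n$ (or smaller), so that the per-step ``escape'' probability $\delta'$ decays the cost geometrically; everything else is a routine union bound and geometric-series estimate.
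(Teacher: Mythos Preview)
Your proposal is correct and follows essentially the same approach as the paper: a doubling search over $\ta$ using the testable algorithm of Theorem~\ref{thm: main theorem after search for ot}, with Soundness plus a union bound for correctness, Completeness to bound the stopping index, and a geometric tail argument (the $(\delta')^{j-j^\star}$ factor beating the $2^{j-j^\star}$ growth) for the expected run-time. The only cosmetic differences are that the paper caps the search at $j\le \log m$ (using $\aG\le 2\sqrt{m}$) and sets $\odelta=\delta/(10\log m)$, whereas you use $\log n$; both are absorbed by $O^*$.
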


We refer to this algorithm as an \emph{instance-adaptive algorithm} since it adapts ``on the fly" to the structure of 
$G$, scaling with 
$\aG$ without requiring any a priori bound on it.

\vspace{-1em}
\paragraph{Optimality of the bound.}
The query complexity of our testable algorithm for
approximating the number of triangles lies  between  two extremes: the known lower bound in the general setting (with no advice), 
and the more efficient algorithms achievable under a reliable arboricity promise. 
We prove that this tradeoff is inherent: any algorithm that uses significantly fewer queries may produce an underestimate of the triangle count when the advice is incorrect, thus establishing the optimality of our approach. The lower bound also establishes a strict separation between the instance-adaptive estimation setting we consider here, and the trusted-advice setting of \cite{ERS_faster}.
Formally, 
we give an information-theoretic lower-bound showing that $\Omega\left(
\frac{m}{t^{2/3}}
 \right)$ queries are necessary,  which together with the 
 $\Omega(\frac{m\cdot \aG}{t})$ lower bound of~\cite{ERS_faster}
 demonstrates the optimality of our algorithm.

\begin{theorem}\label{thm:lb_triangle_counting}
    Let $\mA$ be a testable algorithm for approximating the number of triangles in a graph $G$ given arboricity advice $\ta$ and augmented query access to $G$. 
    Let $n,m$ and $t$ denote the number of nodes, edges and triangles in $G$. Finally, assume that the algorithm knows $n$ and $m$, and that $\ta\geq m/n$.
    Then
    $\mA$ must perform $\Omega(\frac{m}{t^{2/3}}+\frac{m\cdot \aG}{t})$ queries in order to succeed with high probability. 
\end{theorem}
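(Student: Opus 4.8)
The plan is to prove the two terms of the bound separately. The term $\Omega(m\cdot\aG/t)$ is exactly the lower bound of~\cite{ERS_faster}, which already holds in the easier \emph{promise} model in which a correct arboricity bound is guaranteed to the algorithm; since a testable algorithm run with correct advice $\ta\geq\aG$ must, by Completeness and Soundness, output a $(1\pm\eps)$-approximation with high probability, it is a legal algorithm for the promise model, so that bound transfers verbatim. It remains to prove the $\Omega(m/t^{2/3})$ term. For this I will exhibit two distributions, $\calD_{\mathrm{yes}}$ and $\calD_{\mathrm{no}}$, over $n$-vertex, $m$-edge labelled graphs such that: (i) every graph in the support of $\calD_{\mathrm{yes}}$ has arboricity at most $\ta$ and exactly $t$ triangles; (ii) every graph in the support of $\calD_{\mathrm{no}}$ has arboricity strictly larger than $\ta$ and at least $2t$ triangles; and (iii) any algorithm issuing $o(m/t^{2/3})$ augmented queries has views on $\calD_{\mathrm{yes}}$ and $\calD_{\mathrm{no}}$ that are $o(1)$-close in total variation. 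Given (i)--(iii), the theorem follows in the usual way: on a draw from $\calD_{\mathrm{yes}}$, Completeness and Soundness together force the algorithm to output a value in $(1\pm\eps)t$ with probability at least $1-2\delta$; by (iii) it does the same on a draw from $\calD_{\mathrm{no}}$ up to an additive $o(1)$; but on $\calD_{\mathrm{no}}$ that output is neither \badadvice{} nor, for $\eps<1$, a $(1\pm\eps)$-approximation of the true count $\geq 2t$, contradicting Soundness when $\delta$ is a small constant. Yao's principle then yields the bound against randomized algorithms.

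The two distributions share a common \emph{bulk}: a vertex-disjoint union of cliques of size $\Theta(\max(2,\,t/m))$ together with a matching of the remaining edges, arranged to contain exactly $t$ triangles, $\Theta(m)$ edges, and to have arboricity $a:=\Theta(\max(1,\,t/m))$; we work in the regime $m=\Omega(t^{2/3})$ and $m=O(\sqrt{nt})$, in which $a\geq m/n$, and we set the advice to $\ta:=a$ (so $\ta\geq m/n$, consistent with the hypothesis; outside this regime $n$ is padded with isolated vertices, and for still larger $m$ the term $m\cdot\aG/t$ dominates $m/t^{2/3}$). Fix $k:=\Theta(t^{1/3})$ large enough that a clique on $k$ vertices contains at least $t$ triangles and satisfies $\lceil k/2\rceil>a$. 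A sample from $\calD_{\mathrm{yes}}$ is the bulk together with an auxiliary triangle-free gadget $H_0$ of arboricity at most $a$ that carries $\binom k2$ edges on a set $S$ of $V_0:=\Theta(k^2/a)$ vertices (for instance the complete bipartite graph $K_{a,\,V_0-a}$), with the whole vertex set relabelled uniformly at random and padded with isolated vertices so that the graph has exactly $n$ vertices and $m$ edges. A sample from $\calD_{\mathrm{no}}$ is obtained by instead placing a clique $K$ on $k$ of the vertices of $S$ and leaving the remaining $V_0-k$ vertices of $S$ isolated, again relabelled uniformly. Both distributions then have exactly $n$ vertices and $m$ edges, while $\calD_{\mathrm{yes}}$ has arboricity at most $a=\ta$ (advice correct) and $\calD_{\mathrm{no}}$ has arboricity $\lceil k/2\rceil>\ta$ (advice incorrect) and at least $2t$ triangles; note also that $\binom k2=\Theta(t^{2/3})$ is negligible against $m$, so the bulk carries a $1-o(1)$ fraction of the $m$ edges.

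For (iii) I will couple the two experiments so that they proceed identically until the algorithm \emph{touches the special part}, meaning it is returned a vertex of $S$ or an edge with both endpoints in $S$; conditioned on this not having happened, the answer to the next query — whether a uniform random vertex, a uniform random edge, a degree, a uniform random neighbour, or a pair answer — is identically distributed in the two experiments, because the relabelling is uniform, the special vertex set $S$ is the same in both (only the edge set induced on $S$ differs), and $S$ has no edges to the bulk. It then suffices to bound the probability of touching the special part. The special part has $V_0=\Theta(k^2/a)$ vertices and $\Theta(k^2)$ edges in either distribution, so a uniform random vertex touches it with probability $O(k^2/(an))$ — and, crucially, $a\geq m/n$ gives $an\geq m$, so this is $O(k^2/m)$; a uniform random edge touches it with probability $O(k^2/m)$; degree and neighbour queries issued on bulk vertices never leave the bulk; and a single pair query names two vertices of $S$ with probability only $O((k^2/(an))^2)$. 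A union bound over $q$ queries therefore bounds the probability of touching the special part by $O(qk^2/m)$, which is $o(1)$ as soon as $q=o(m/k^2)=o(m/t^{2/3})$ (using $k^2=\Theta(t^{2/3})$), establishing (iii).

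The step I expect to be the main obstacle is the accounting around (iii): realizing the target parameters $(n,m,t)$ exactly while pinning the bulk's arboricity at $a$, and — more importantly — certifying that \emph{no} adaptive combination of the five augmented query types does better than ``hitting the special part $S$''. The delicate point is structural: a triangle-free graph of arboricity $a$ with $\binom k2\approx t^{2/3}$ edges must span $\Theta(k^2/a)$ vertices, far more than the $k$ vertices a clique with the same number of edges would use; hence in $\calD_{\mathrm{no}}$ that many vertices are rendered isolated and could a priori be detected by vertex sampling. It is precisely the hypothesis $\ta\geq m/n$ — which gives $an\geq m$ — that keeps the vertex-sampling detection probability $O(k^2/(an))$ at most the genuinely unavoidable edge-sampling detection probability $O(k^2/m)$, so that neither route beats $\Theta(m/t^{2/3})$. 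Everything else — the arboricity and triangle-count arithmetic, the coupling argument, Yao's principle, and the Completeness/Soundness contradiction — is routine.
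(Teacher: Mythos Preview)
Your proposal is correct and follows essentially the same approach as the paper: cite the $\Omega(m\aG/t)$ bound from \cite{ERS_faster}, and for $\Omega(m/t^{2/3})$ build two hard-to-distinguish families that agree on a large ``bulk'' and differ only on a small gadget of $\Theta(t^{2/3})$ edges (a triangle-free bipartite piece vs.\ a $\Theta(t^{1/3})$-clique), then argue that touching the gadget needs $\Omega(m/t^{2/3})$ queries, using $\ta\geq m/n$ to make vertex-sampling no cheaper than edge-sampling.

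The one noteworthy simplification in the paper is that its bulk is a \emph{bipartite} $\ta$-regular graph, hence triangle-free: the first family has $0$ triangles and the second $\Theta(t)$, so the output-separation is $0$ versus $\Theta(t)$ and there is no need to engineer a bulk with exactly $t$ triangles and arboricity $a$ via unions of small cliques as you do; your $t$-versus-$2t$ variant works but adds bookkeeping (matching $t$ exactly, ensuring clique size $\Theta(t/m)$ gives arboricity $\le \ta$, etc.) that the paper avoids.
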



\paragraph{Further results.}

Using a known reduction from sublinear-time algorithms in the augmented model to streaming algorithms in the edge arrival model (see, e.g., ~\cite{EstArb,fichtenberger2022approximately}), 
we prove the following.
\begin{theorem}
\label{cor: streaming}
    There exists a $9$-pass $O^*\left(\frac{m\cdot \ta \log 1/\delta}{t}
+
\frac{m \log 1/\delta}{t^{2/3}}
 \right)$-space algorithm for testable triangle-counting with arboricity advice (Definition \ref{def:motif_counting_algorithm}) in the arbitrary-order insert-only edge arrival streaming model. As is standard and unavoidable in the streaming model, the algorithm requires a rough initial estimate $\ot = \Theta(t)$ of the triangle-count $t$.\footnote{In our implementation we assume $\ot\in[t/4,t]$, but the same argument can be adapted to work with any given $\ot \in [\frac{1}{c}\cdot t,c\cdot t]$ for any known constant $c$.}
\end{theorem}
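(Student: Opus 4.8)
The plan is to invoke the standard reduction from sublinear-time algorithms in the augmented query model to multi-pass streaming algorithms in the edge-arrival model (as used in, e.g., \cite{EstArb,fichtenberger2022approximately}), applied to the testable algorithm of Theorem~\ref{thm: main theorem after search for ot}. The key observation is that the number of passes is governed not by the query complexity of the algorithm but by its \emph{adaptivity depth}: the number of rounds in which a batch of augmented queries is issued, where the queries within a round depend only on the answers returned in strictly earlier rounds. Each such round can be served by a constant number of scans of the stream.

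Concretely, I would implement each augmented-query primitive as follows. A pool of uniform random vertices and a pool of uniform random edges, each of the prescribed size, can both be produced in a single pass via reservoir sampling, storing $O(\log n)$ bits per sampled item; this is order-oblivious, so the arbitrary-order assumption is harmless. Given a set $S$ of vertices fixed before a pass, a single scan suffices to compute $\deg(v)$ for all $v\in S$ simultaneously, to draw a uniform random neighbor of each $v\in S$ (reservoir sampling restricted to the edges incident to $v$), and --- given a set of vertex pairs fixed before the pass --- to answer all the corresponding pair queries. Thus one round of the augmented-model algorithm translates into $O(1)$ streaming passes, and the total space is the total number of sampled vertices/edges/pairs times $O(\log n)$, i.e. $O^*\!\left(\frac{m\ta}{t}+\frac{m}{t^{2/3}}\right)$ words.

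It then remains to bound the adaptivity depth of the testable triangle-counting algorithm. Inspecting its structure --- sample edges; look up the degrees of their endpoints (to perform the heavy/light classification and the advice-consistency checks that make the algorithm testable); sample neighbors of the lighter endpoints; look up the degrees of those neighbors; perform pair queries to detect the closing edges of triangles --- one sees that only a constant number of adaptive rounds occur, and a careful accounting yields at most $9$ passes. Two further points must be handled. First, in the streaming model the sizes of all sample pools must be fixed \emph{before} the corresponding pass; since those sizes depend on $t$ in the original algorithm, this is exactly where the rough estimate $\ot=\Theta(t)$ is needed, and substituting $\ot$ for $t$ throughout changes the sample sizes (hence the space) only by a constant factor while preserving the approximation and error guarantees. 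Second, since Theorem~\ref{thm: main theorem after search for ot} bounds the run-time only in expectation, I would truncate: run the simulation with a budget of a large constant times the expected complexity and abort (outputting \badadvice{}) if it is exceeded; by Markov's inequality this inflates the failure probability by at most a constant, absorbed by adjusting $\delta$, and it makes the space bound worst-case. The completeness and soundness guarantees are inherited verbatim, since the streaming simulation is a faithful implementation of the same randomized procedure.

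The main obstacle is the pass-counting bookkeeping: one has to verify that every batch of queries issued in a given round genuinely depends only on information gathered in strictly earlier rounds --- so that it can indeed be answered in a single pass --- and in particular that the advice-verification steps introduced by the testable framework do not add hidden rounds of adaptivity beyond the claimed budget. A secondary technical point is ensuring that degree, neighbor, and pair queries about vertices discovered mid-stream are always deferred to the next pass rather than answered on the fly, and that the auxiliary edge-count estimation needed when $m$ is not assumed known is not required here (since $\ot$, not $m$, is the quantity we are forced to pre-supply).
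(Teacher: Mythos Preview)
Your approach is essentially the paper's: invoke the known reduction from augmented-query algorithms to multi-pass streaming, with space equal to query complexity and pass count equal to adaptivity depth, and observe that supplying $\ot=\Theta(t)$ is forced because the search over $\ot$ cannot be done in $O(1)$ passes.

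There is, however, one genuine slip. You say you will apply the reduction to the algorithm of Theorem~\ref{thm: main theorem after search for ot}. That algorithm already performs the geometric search over $\ot$ internally (via the \search{} procedure), and this search has adaptivity depth $\Theta(\log n)$, not $O(1)$; simulating it would not give a constant-pass algorithm. What you must simulate is the \emph{pre-search} procedure \ApproxTriangles{} (Theorem~\ref{thm: main theorem before search for ot}), which takes $\ot$ explicitly as an input parameter. Your later remark ``substituting $\ot$ for $t$ throughout'' shows you have the right picture, but it is incoherent as stated: there is no parameter $t$ to substitute in the post-search algorithm. Once you point at the correct procedure, the rest of your argument (reservoir-sampling implementations of the query primitives, pass-by-pass bookkeeping, and the Markov-based truncation to convert expected to worst-case space) goes through and matches the paper's proof in Section~\ref{sec: streaming}. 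The paper, for its part, simply asserts that \ApproxTriangles{} has adaptivity depth $9$ by inspection; your sketch of the round structure is in the same spirit, though you omit the \isHeavy{} neighbor-sampling and pair-query rounds inside \IsAssigned{}, which are what push the count up to $9$.
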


In the guaranteed advice setting, Bera and Seshadhri~\cite{bera2020degeneracy} gave an arboricity-dependent 6-pass $O^*(m\oa/t)$-space  streaming algorithm for triangle counting.  

We note that  the instance-adaptive version of our algorithm can also be implemented in the streaming model, however with a $O(\log n)$ round complexity, due to the need to search for the correct advice $\oa$. For further discussion see Section~\ref{sec: streaming}.
We also note that lower bounds do not directly transfer between the two models, and so it remains open as to
whether
the space bounds and round complexity of Theorem~\ref{thm: main theorem after search for ot}  are optimal.


Finally, we state our result for edge estimation in the testable setting.

\begin{theorem}\label{thm: testable edge estimation theorem}

    There exists a testable algorithm in the augmented query model (Definition \ref{def:motif_counting_algorithm}) for $(1\pm \epsilon)$-approximate counting of edges on a graph $G$ with $n$ vertices, $m$ edges and $t$ triangles, and has expected run-time of $O^*\left(
\frac{n\cdot \oa \log 1/\delta }{m}
 \right)$.
\end{theorem}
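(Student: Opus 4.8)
The plan is to realize the algorithm underlying the theorem (call it \textsf{Approx-Edges-With-Arboricity-Advice}) as two subroutines that share their samples: a \emph{degree-based estimator} that returns a $(1\pm\eps)$-approximation of the number $m_\ell$ of edges ``charged'' to low-degree vertices, and a cheap \emph{consistency test}, run with uniform edge samples, that certifies $m_\ell$ is in fact close to $m$. Fix a threshold $\theta\eqdef 8\oa/\eps$, call a vertex \heavy{} if its degree exceeds $\theta$ and \light{} otherwise, and charge each edge to its endpoint of smaller degree (ties broken by id). Writing $H$ for the heavy vertices, an edge charged to a heavy vertex necessarily has both endpoints in $H$, so $m-m_\ell=|E(H)|$; since arboricity is monotone under taking subgraphs, $|E(H)|<\aG\cdot|H|\le 2\aG m/\theta=(\eps/4)(\aG/\oa)\,m$. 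Hence if the advice is correct, i.e.\ $\aG\le\oa$, then $m_\ell\ge(1-\eps/4)m$, while $m_\ell\le m$ always. Everything reduces to (i) estimating $m_\ell$ cheaply, and (ii) verifying that the unknown ratio $|E(H)|/m$ is small.

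For (i) I would use the standard nested-sampling estimator: repeatedly draw a uniform vertex $v$, query $d(v)$, and if $v$ is light draw one uniform neighbor $u\sim N(v)$, query $d(u)$, and set $X\eqdef d(v)$ if $\{u,v\}$ is charged to $v$ and $X\eqdef 0$ otherwise (heavy or isolated $v$ also contribute $X\eqdef 0$). Then $\Ex[X]=m_\ell/n$, and since $X\in\{0\}\cup[1,\theta]$ one has $\Var(X)\le\Ex[X^2]\le\theta\,\Ex[X]$, so averaging $k=\Theta\big(\tfrac{n\theta}{m_\ell\eps^2}\log\tfrac1\delta\big)=\Theta\big(\tfrac{n\oa}{m_\ell\eps^3}\log\tfrac1\delta\big)$ samples (via median-of-means) gives a $(1\pm\eps)$-estimate $\hm_\ell$ of $m_\ell$ with probability $1-\delta$ at $O(1)$ queries per sample. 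As $m$ (hence $m_\ell$) is unknown, $k$ is not known in advance, so the estimator is run in the usual geometric-search manner — keep sampling until the running sum of the $X$'s first crosses a fixed threshold $\Theta(\theta\eps^{-2}\log\tfrac1\delta)$ — which stops after $\Theta(k)$ samples and needs only the routine stopping-time analysis.

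For (ii) the point is that one can estimate the \emph{ratio} $p\eqdef|E(H)|/m$ directly, using the augmented model's uniform edge samples, thereby sidestepping the need to estimate $|E(H)|$ or $m$ in isolation: draw $N=\Theta(\eps^{-2}\log\tfrac1\delta)$ uniform edges, query both endpoints' degrees, and let $\hat p$ be the fraction that are both-heavy; by Hoeffding $|\hat p-p|\le\eps/4$ w.p.\ $1-\delta$, at cost $O^*(\log\tfrac1\delta)$. The algorithm returns \badadvice{} if $\hat p>\eps/2$, and otherwise outputs $\hat m\eqdef\hm_\ell$. \emph{Completeness:} if $\aG\le\oa$ then $p\le\eps/4$, so $\hat p\le\eps/2$ whp, and the estimator terminates within $O^*(\tfrac{n\oa}{m}\log\tfrac1\delta)$ steps (using $m_\ell\ge(1-\eps/4)m$), so \badadvice{} is output with probability $\le\delta$. \emph{Soundness:} if \badadvice{} is not returned then $\hat p\le\eps/2$, hence $p\le 3\eps/4$, hence $m_\ell\ge(1-3\eps/4)m$, so $\hat m\in(1\pm\eps)m_\ell\subseteq(1\pm2\eps)m$; rescaling $\eps$ gives the claim.

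The main obstacle is controlling the running time under \emph{adversarial} advice: if $\oa$ is wrong but only mildly so ($p$ just under the acceptance threshold), $m_\ell$ can shrink to $\Theta(\eps m)$, and if $\oa$ is badly wrong the estimator could in principle run a long time before the test catches it. I would resolve this by interleaving the two subroutines — running the edge-sample test on an ever-growing pool of edge samples in lockstep with the vertex-sampling estimator, so that after $K$ total samples the test certifies $p=\tilde O(\sqrt{\log(1/\delta)/K})$ — and aborting with \badadvice{} as soon as this certified bound falls below $\eps$ while $\hat p$ is still large. In the regime $m\le\oa n$ this ensures the estimator never runs past $O^*(\tfrac{n\oa}{m}\log\tfrac1\delta)$ samples for \emph{any} advice, yielding the stated (worst-case, hence expected) running time while leaving completeness and soundness untouched. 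Checking that the interleaving does not corrupt the estimator's stopping rule, and verifying the variance and concentration bounds with the right constants, are the routine-but-nontrivial remaining parts.
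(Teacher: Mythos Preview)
Your proposal is correct and matches the paper's approach at the core: both set a degree threshold $\Theta(\oa/\eps)$, use uniform edge samples to estimate the fraction $p$ of heavy--heavy edges and output \badadvice{} if $\hat p$ is too large, and otherwise run the standard light-vertex/neighbor estimator (charge each edge to its lower-degree endpoint, return $d(v)$ if the sampled edge is outgoing from a light $v$) to approximate $m_\ell$. The structural fact $|E(H)|\leq (\eps/4)(\aG/\oa)m$ and the observation that edge samples let one estimate the \emph{ratio} $|E(H)|/m$ without knowing $m$ are exactly what the paper uses.

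The one substantive difference is how the unknown $m$ is handled. The paper runs the estimator with a \emph{fixed} sample size $q=\Theta(n\oa/(\om\eps^3)\log(1/\delta))$ determined by an explicit guess $\om$, and then wraps everything in the black-box Search Theorem of~\cite{ERS_cliques} to locate a good $\om$; this keeps the per-call running time deterministically bounded regardless of the advice and makes the analysis modular. You instead propose a threshold-stopping rule for the estimator and interleave it with an ever-refining edge test to cap the running time when $m_\ell$ is tiny. This is a valid alternative (essentially the Dagum--Karp--Luby--Ross style stopping estimator), and it avoids the search machinery, but it does require the bespoke stopping-time and sequential-test analysis you flag as ``routine-but-nontrivial.'' One small correction: in the accepted case $p\leq 3\eps/4$ always forces $m_\ell\geq(1-3\eps/4)m$, so $m_\ell$ cannot shrink to $\Theta(\eps m)$ there; the only genuine danger is the rejected-but-not-yet-detected case, which your interleaving handles provided $n\oa/m\geq 1$ (and as you note, when $\oa<m/n$ the advice is certifiably wrong anyway).
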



Interestingly, this result does not incur any additive overhead over the $O^*(\frac{n\oa}{m})$ results of~\cite{ERS19_dd}. It does, however, 
require the  uniform edge samples of the stronger augmented query access model, 
as opposed to the algorithm of \cite{ERS19_dd}.
This stronger query access is necessary if one wants to obtain an arboricity-dependent algorithm in the  testable setting, as otherwise 
the presence of a small (hard to find)
clique of  size  $\sqrt{m}$ can fool any algorithm
which uses at most $o(n/\sqrt{m})$ vertex samples. For small-arboricity graphs, the resulting algorithm is much faster than the state-of-the-art edge-counting algorithm for general graphs  in the augmented model
~\cite{beretta2025faster}.
In particular, for constant arboricity graphs, the running time of our algorithm is $O^*(1)$, whereas the 
general algorithm of~\cite{beretta2025faster}, which does not rely on arboricity,  runs in 
essentially optimal time $\Theta^*(n^{1/4})$.

As in the case of triangle counting,
an easy consequence of our testable algorithms is an instance-adaptive algorithm for approximating the number of edges (using a similar doubling trick).
 Formally, the resulting instance-adaptive approximate edge counting algorithm runs in time $O^*\left(
\frac{n\cdot 
\aG}{m}
 \right)$.

\medskip
 Our results are summarized below.


\begin{table}[h]
\label{fig: table of results}
\centering
\renewcommand{\arraystretch}{1.8}
\begin{tabular}{|c|c|c|c|c|}
\hline
  & Computational model
  & General Graphs
  & \shortstack{Guaranteed Bound\\on Arboricity}
  & \shortstack{Our Results\\(Testable Algorithms)} \\
\hline
\multirow{2}{*}{Triangles}
  & \shortstack{sublinear-time,\\augmented}
  & $\Theta^*\!\left(\tfrac{m^{3/2}}{t}\right)$
  & \multirow{2}{*}{$\Theta^*\!\left(\tfrac{m\oa}{t}\right)$}
  & $\Theta^*\!\left(\tfrac{m\cdot \aG}{t} + \tfrac{m}{t^{2/3}}\right)$ \\
\cline{2-3} \cline{5-5}
  & \shortstack{streaming,\\edge arrival}
  & $\Theta^*\!\left(\min\!\left\{\tfrac{m^{3/2}}{t}, \tfrac{m}{\sqrt{t}}\right\}\right)$
  &
  & $O^*\!\left(\tfrac{m\ta}{t} + \tfrac{m}{t^{2/3}}\right)$ \\
\hline
Edges
  & augmented model
  & $\Theta^*\!\left(n^{1/4}\right)$
  & $\Theta^*\!\left(\tfrac{n\oa}{m}\right)$
  & $\Theta^*\!\left(\tfrac{n\cdot \aG}{m}\right)$ \\
\hline
\end{tabular}

\caption{Summary of our results, compared to the state of the art in the same model for general graphs as well as bounded arboricity graphs. 
Our results can be seen as lying in between these two ``extremes".
Here (and elsewhere) $\aG$ denotes the actual arboricity of the graph, $\oa$ a guaranteed upper bound on $\aG$, and $\ta$ a potentially-incorrect advice on $\aG$.  }
\end{table}

\subsection{Overview of the algorithms and lower bounds for triangle estimation}

\subsubsection{Testable triangles algorithm}

\textbf{The ERS algorithm.} The algorithm by~\cite{ERS_faster}, henceforth ERS,  
begins by assigning triangles to one of their \emph{light} edges, defined with respect to a \emph{triangle-heaviness threshold $\tau_t$}. 
Triangles for which all three edges are heavy -- \emph{heavy triangles} -- remain unassigned. 
This assignment rule yields a low-variance, 
nearly unbiased estimator. Consequently, 
sampling a set $R$ of $O(m\tau_t/t)$ edges uniformly at random ensures that, 
with high probability, the number of triangles assigned to some edge in $R$ is close to its expected value. To estimate this quantity, the algorithm repeatedly tries to sample triangles incident to edges of $R$, and whenever a triangle incident to some edge $e\in R$ is found, it checks whether the triangle is indeed assigned to $e$.

ERS prove that estimating the number of triangles assigned to \(R\) requires only \(O(d(E)/t)\) triangle-sampling attempts in expectation, where \(d(E)\eqdef\sum_{e\in E} d(e)\),  and $d(e)$ for $e=\{u,v\}$ is defined as $\min\{d(u),d(v)\}$.
Once a triangle is found, verifying whether it is assigned to an edge $e$ can be done in $O(d(e)/\tau_t)$ queries. To bound this term, they  define a \emph{degree-heaviness threshold $\tau_d$}
beyond which edges are also classified as heavy (independent of their triangle count) and
then ignored. Hence, the cost of an  assignment verification step is $O(\tau_d/\tau_t)$.

The arboricity guarantee then allows them to bound each of these three terms by $O(m\oa/t)$, through the following arguments:
\begin{enumerate}[(i)]
\item \textbf{Triangle-heaviness:} 
In bounded-arboricity graphs,
edges with triangle load above $\tau_t=\oa/\eps$ can be ignored, since 
they can only contribute  a small fraction of triangles. 
This determines the sample size of $R$.

\item \textbf{Sum of edge degrees:} In bounded-arboricity graphs,
the sum of edge degrees satisfies $d(E)=\sum_{e\in E}d(e)=O(m\oa)$, 
implying that the number of triangle sampling attempts is bounded.

\item \textbf{Degree-heaviness:} In bounded-arboricity graphs,
only few edges exceed $\tau_d=m\oa^2/t$, 
so their contribution to the total triangles count is small and assignment verification remains efficient.
\end{enumerate}

Plugging in each of these bounds in their respective terms gives the $O(m\oa/t)$ running time.

\medskip
\textbf{The testable algorithm.} 
Our testable algorithm uses the three properties above to verify the advice $\ta$. If these properties hold, then the 
correctness does not require that the actual arboricity be small; 
rather, these properties suffice to guarantee the quality of the estimate.

However, since the advice is not guaranteed,  setting the thresholds $\tau_t$ and $\tau_d$ as in ERS no longer ensures that the number of unassigned heavy triangles is small. We must therefore modify the thresholds and explicitly check that the number of heavy edges  remains bounded. In ERS, the set of heavy edges $H$ satisfies $|H|=O(\eps t/\oa)$, and arboricity is used to argue that $H$ induces at most $O(\oa|H|)=O(\eps t)$ triangles. Without the arboricity guarantee, we instead rely on the general bound that any $m$-edge graph contains at most $O(m^{3/2})$ triangles. Hence we would like to make sure that $|H|=O((\eps t)^{2/3})$.
 Accordingly, we set 
\[\gamma=c\cdot \max\{\ta,t^{1/3}\}, \;\;\; 
\tau_t=\gamma/\eps, \;\;\;  \text{and} \;\; \tau_d=m\gamma^2/t,\]
for some small constant $c$. 
We show that these settings ensure that if the advice is correct then $|H|$ will be small. Therefore,
we can estimate the size of $H$, and if $|H|>(\eps t)^{2/3}$ we can safely output \badadvice.
Otherwise, if $|H|$ does not exceed the bound (so we do not reject), then it is small enough that --  even in high-arboricity graphs -- it cannot induce too many heavy triangles, thereby preserving Properties (i) and (iii). For Property (ii), once the set $R$ is sampled, we explicitly verify that $d(R)\eqdef \sum_{e\in R}d(e) =O(\ta |R|)$; if this fails, we reject the advice.
These modifications yield the runtime bound  \[O^*(m\tau_t/t+m\ta/t +\tau_d/\tau_t)=O^*(m/t^{2/3}+m\ta/t)\;.\]

\subsubsection{The $\Omega(m/t^{2/3})$ lower bound}

We prove a lower bound of $\Omega(m/t^{2/3}+m\aG/t)$ for any testable triangle-counting algorithm in the augmented model.
The $\Omega(m\aG/t)$  term follows from  the no-advice  lower bound of~\cite{ERS_faster}, and it remains to prove the second term, which is dominant whenever $\ta<t^{1/3}$.

Consider a testable algorithm invoked with advice $\ta < t^{1/3}$. The lower bound  is shown by constructing two contrasting families of graphs.

In the first family, every graph has arboricity $\ta$ and contains no triangles. The second family is obtained by taking graphs from the first family and augmenting them with a clique of size about $t^{1/3}$. (In the formal proof, the construction is refined to ensure that graphs in both families have exactly the same numbers of vertices and edges.) Consequently, graphs in the second family contain $\Theta(t)$ triangles and have arboricity $\Theta(t^{1/3})$, and so the advice is incorrect.

By definition, any testable algorithm must, with high probability, output an estimate $\hat{t} = 0$ on graphs from the first family, and either \badadvice{} or an estimate $\hat{t} \in (1 \pm \eps)t$ on graphs from the second family. Hence, the algorithm must be able to distinguish between the two families. This requires it to hit the planted clique with high probability, which in turn forces it to perform $\Omega(m/t^{2/3})$ samples.

\subsection{Additional related work}
\paragraph{Testable learning algorithms.}
Testable learning is a framework introduced
by Rubinfeld and Vasilyan
\cite{RubinfeldV23} to allow a user to safely use
a learning algorithm that is designed for specific
distributions on labelled examples (e.g., uniform).
The difficulty is that  ascertaining whether examples come
from the assumed distribution requires way too many samples.
The new framework suggests the design of a tester and learner
as a pair, where the tester efficiently tests a 
weaker property of the distribution, chosen such that the
weaker property is both easier to test
and sufficient for proving that the
algorithm's output is correct.
Their work, and several followups \cite{klivans2023testable,goel2024tolerant, slot2024testably, gollakota2024agnostically,gollakota2023tester,diakonikolas2023efficient,diakonikolas2024testable}
have demonstrated such tester-learner pairs for a number of
learning tasks. The work of \cite{goel2025testing} extends the testable learning framework to handle assumptions on the label noise in classification. 
A recent line of work on Testable Learning with Distribution Shift (TDS learning) \cite{klivans2023testable,klivans2024learning,chandrasekaran2024efficient, goel2024tolerant} tests the assumption that the data encountered by a classifier during deployment comes from the same distribution encountered during training.

In concurrent work, Marcussen, Rubinfeld and Sudan introduce an  analogue of 
testable  algorithms for the setting of random graphs
\cite{marcussenRS2025quality}, referred to as ``quality
control problems''.   
As in this work, the algorithms in \cite{marcussenRS2025quality}
have the goal of reliably approximating
the number of motifs in input graphs.
However, 
 the requirements of the testable algorithm differ in our setting and theirs, particularly with respect to the 
completeness guarantees of the algorithms. Completeness in their setting corresponds to algorithmic correctness 
with high probability over random graphs, while in our setting the algorithm must be correct with high probability on 
any low-arboricity input graph. Our work 
also uses very different techniques because
low-arboricity graphs have disparate
structural properties from random graphs.

\paragraph{Testing and estimating the arboricity in sublinear-time and space.}
The task of estimating the arboricity was studied in a series of works, 
both in the sublinear time and the streaming settings~\cite{bhattacharya2015space,mcgregor2015densest,king2023computing, bahmani2012densest, bhattacharya2015space}. Eden, Mossel and Ron~\cite{EstArb} improved the running times and space complexity to $O(n/\oa)$ at the cost of increasing the approximation parameter to $O(\log^2 n)$ and the  round complexity to logarithmic. 
 Eden, Levi, and Ron~\cite{testing_arboricity_ELR} presented an $O(1/\eps)^{O(\log(1/\eps))}$-time algorithm for testing whether a graph has bounded arboricity in the augmented model. However, being $\eps$-close to a bounded-arboricity graph is not sufficient for applying ERS: even a small number of additional edges can still generate many triangles.

\paragraph{Sublinear-time and space subgraph counting algorithms.}
Sublinear time subgraph counting results have been extensively studied both in various query models~\cite{GRS11,ELRS,ERS_cliques,hamiltonian,AKK,Peng,BiswasER21,Aliak,ERS19_dd,bishnu2025arboricity}.
Estimating the number of edges in sublinear time, 
when given access to uniform edge samples as well as degree and neighbor queries, follows (implicitly) from the work of Motwani, Panigrahy and Xu~\cite{motwani2007estimating} and from the work of Beretta and T{\v{e}}tek~\cite{beretta2024better}. 
Both these results give an essentially optimal bound of $\Theta^*(n^{1/3})$. 
When pair queries are also allowed, implying full augmented query access, the bound was recently improved to 
 $\Theta^*(n^{1/4})$ by Beretta, Chakrabarty and Seshadhri~\cite{beretta2025faster}.

Sublinear arboricity-dependent bounds were studied in~\cite{ERS19_dd,ERS_faster,assadi_lbs,bishnu2025arboricity,bera2020degeneracy,fichtenberger2022approximately}.


\paragraph{Triangle counting with predictions.}
Algorithms with predictions, like testable algorithms, use auxiliary information to overcome worst-case lower bounds. However, the auxiliary information is often more informative (as it models advice given by a machine learning algorithm).
In addition, such algorithms must always produce an output, even when the prediction is
of low quality.
The goal is that the performance improves with prediction accuracy, and, ideally,  never falls below the baseline guarantee.
Triangle counting in the streaming model using predictions was studied in several papers using various types of predictions~\cite{chen2022triangle,boldrin2024fast,boldrin2024fast}.
Common to all is that they require per-edge advice, rather than global information.

\subsection{Organization}

In Section~\ref{sec:preliminaries}, we formally define the notion of a testable algorithm, and the augmented query access model, and provide some useful notations and claims. 
In Section~\ref{sec:triangles}, we provide our sublinear  time and space algorithms and lower bound for triangle counting, with some of the proofs deferred to Appendix~\ref{sec: deferred proofs}. The sublinear-time lower bound is proven in Section~\ref{sec:triangles_lb}.
The streaming result is discussed in Section~\ref{sec: streaming}.
Finally, in Section~\ref{sec:edges}, we provide the algorithm for edge counting.

\section*{Acknowledgments}
We thank Sepehr Assadi for discussions that influenced this work.

\section{Preliminaries}\label{sec:preliminaries}
\subsection{Testable algorithm definition}

The following is our definition of testable algorithms for estimating the number
of motifs  with arboricity advice.  The problems of estimating the number of edges
and the number of triangles are a special case.

\begin{definition}[Testable algorithm    for motif counting  with arboricity advice]
\label{def:motif_counting_algorithm} 
Let $H$ be a fixed subgraph, and let $t$ denote the number of copies of $H$ in an input graph $G$.
A \emph{testable arboricity advice algorithm} for counting the number of copies of $H$ in $G$ (in the augmented query model)  is an algorithm that receives:
augmented query access to $G$, 
    an approximation parameter $\eps$, and 
 a positive integer $\ta$, which serves as a (possibly unreliable) upper bound on the arboricity of $G$.
 
The algorithm outputs either an estimate $\hat{t}$ or the comment \badadvice, as follows:
\begin{itemize}
    \item \textbf{Completeness:} If $\aG \leq \ta$, then whp $\calA$ outputs an estimate $\hat{t}$ (and does not output \badadvice).
    \item \textbf{Soundness:} For every graph $G$ and any $\ta$, if $\calA$ outputs a value $\hat{t}$, then whp
    \(
    \hat{t} \in (1 \pm \eps) t.
    \)
\end{itemize}
\end{definition}







\subsection{The augmented query model}

In the \emph{augmented query model}
the algorithm interacts with an unknown simple graph \(G=([n],E)\) through the following oracles:
  \textbf{Vertex access.} Vertices are labeled \([n]\), so the algorithm can specify any \(v\in[n]\) (equivalently, it can draw a uniform random vertex by choosing \(v\) uniformly from \([n]\)).
  \textbf{Degree query.} On input \(v\), return the degree of $v$, \(d(v)\).
  \textbf{\(i\th\)-neighbor query.} On input \((v,i)\), return the \(i\)th neighbor of \(v\) if \(1\le i\le d(v)\) (with respect to a fixed but arbitrary ordering of \(v\)'s incident edges), and a distinguished null symbol \(\bot\) otherwise.
  \textbf{Pair query.} On input \((u,v)\), return whether \(\{u,v\}\in E\).
  \textbf{Uniform edge sample.} Return an edge drawn uniformly at random from \(E\), independently at each call.

If no edge sample access is allowed, then this is referred to as the \emph{general access model.}

\subsection{Notations and useful inequalities}

\begin{definition}[Arboricity of a graph] The \emph{arboricity} of a graph $G$ is the minimum number of forests required to cover its edge set.
By~\cite{nash1961edge,nash1964decomposition}, it is also equivalent to
\[
\aG=\max_{S\subseteq G}\left\lceil \frac{m(S)}{n(S)-1} \right\rceil,
\]
where $n(S)$ and $m(S)$ denote the number of vertices and edges, respectively, in the subgraph induced by the set $S$.
\end{definition}

\begin{theorem}\label{thm:arb_inequalities_edges}
    For any graph $G$, $\aG\leq \sqrt m$ and $m\leq n\aG$.
\end{theorem}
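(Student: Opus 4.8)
\textbf{Proof plan for Theorem~\ref{thm:arb_inequalities_edges}.}

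The plan is to derive both inequalities directly from the Nash-Williams characterization of arboricity stated just above the theorem, namely $\aG=\max_{S\subseteq G}\lceil m(S)/(n(S)-1)\rceil$. For the bound $m\le n\aG$, I would simply instantiate the maximum at $S=G$ itself: this gives $\aG\ge \lceil m/(n-1)\rceil \ge m/(n-1)$, hence $m\le (n-1)\aG\le n\aG$, which is the desired inequality (and in fact a slightly stronger one). This direction is immediate and requires no case analysis.

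For the bound $\aG\le\sqrt m$, the key point is that the fraction $m(S)/(n(S)-1)$ is maximized, up to the ceiling, by a subgraph that is as dense as possible, and a simple graph on $k$ vertices has at most $\binom{k}{2}$ edges. So I would argue as follows. Fix any $S\subseteq G$ with $n(S)=k\ge 2$ vertices; then $m(S)\le\binom{k}{2}=k(k-1)/2$, so $m(S)/(n(S)-1)\le k/2$. Separately, trivially $m(S)/(n(S)-1)\le m(S)/1 = m(S)\le m$. Now I want to combine these two bounds to get something like $\sqrt m$. If $k\le 2\sqrt m$ then the first bound gives $m(S)/(n(S)-1)\le \sqrt m$. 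If $k>2\sqrt m$, I need a different argument: in that regime $n(S)-1 > 2\sqrt m - 1$, and since $m(S)\le m$, we get $m(S)/(n(S)-1) < m/(2\sqrt m-1)$, which is roughly $\sqrt m/2$ and in particular at most $\sqrt m$ for $m\ge 1$. Taking the maximum over all $S$ and then the ceiling, one gets $\aG\le\lceil\sqrt m\rceil$; to get exactly $\sqrt m$ one has to be slightly more careful with the ceiling, e.g.\ by noting that if $\aG = a$ is an integer then some subgraph $S$ has $m(S)\ge a(n(S)-1)\ge a(a-1)$, and since $S$ has at most $n(S)$ vertices... actually the cleanest route is: the subgraph $S$ witnessing $\aG$ has $m(S) \ge (\aG-1)(n(S)-1)+1$ edges on $n(S)$ vertices, so $\binom{n(S)}{2}\ge (\aG-1)(n(S)-1)+1$, giving $n(S)/2 \ge \aG - 1 + 1/(n(S)-1) > \aG - 1$, i.e.\ $n(S) > 2(\aG-1)$, hence $n(S)\ge 2\aG-1$; then $m \ge m(S) \ge \aG(n(S)-1) \ge \aG(2\aG-2) = 2\aG(\aG-1)$, from which $\aG\le\sqrt m$ follows for $\aG\ge 2$ (and trivially for $\aG\le 1$). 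I would present whichever of these is shortest.

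The main obstacle is purely bookkeeping: handling the ceiling function in the Nash-Williams formula so that the bound comes out as the clean $\sqrt m$ rather than $\lceil\sqrt m\rceil$ or $\sqrt m + O(1)$. Away from that nuisance, both inequalities are one-line consequences of density considerations. I would therefore organize the write-up as: (1) a two-line derivation of $m\le n\aG$ from $S=G$; (2) the density argument for $\aG\le\sqrt m$, pinning down the witnessing subgraph's vertex count via $\binom{n(S)}{2}\ge m(S)$ and concluding; with a short remark that the edge cases $\aG\in\{0,1\}$ and $m\le 1$ are trivial.
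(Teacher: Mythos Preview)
The paper states this theorem without proof, so there is nothing to compare your approach to; I evaluate your argument on its own.

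Your derivation of $m\le n\aG$ by taking $S=G$ in the Nash--Williams formula is correct.

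For $\aG\le\sqrt m$, however, there is a genuine gap in your ``cleanest route''. From $\lceil m(S)/(n(S)-1)\rceil=\aG$ you only get $m(S)>(\aG-1)(n(S)-1)$; the stronger claim $m(S)\ge \aG\,(n(S)-1)$ that you invoke does not follow (for $K_3$, the witnessing $S$ has $m(S)=3$, $n(S)=3$, $\aG=2$, and $\aG(n(S)-1)=4>3$). More to the point, this is not merely a bookkeeping nuisance with the ceiling: the exact inequality $\aG\le\sqrt m$ is \emph{false} as stated. For $G=K_3$ one has $\aG=2$ while $\sqrt m=\sqrt3<2$. So no amount of careful case analysis will yield the clean $\sqrt m$.

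What your corrected argument actually gives is $m\ge m(S)>(\aG-1)(n(S)-1)\ge 2(\aG-1)^2$ (using $n(S)\ge 2\aG-1$, which you derived correctly), hence $\aG\le 1+\sqrt{m/2}$. This is good enough for every use the paper makes of the bound: in the proof of Corollary~\ref{cor: triangle-counting algorithm that adapts to arboricity} the paper in fact invokes only $\aG\le 2\sqrt m$. So the right fix is to either state and prove the inequality with a constant (e.g.\ $\aG\le 2\sqrt m$), or to note that the exact form fails for $K_3$ and that only the asymptotic statement is intended.
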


The following theorem  of Chiba and Nishizeki bounds the number of triangles by the
arboricity.

\begin{theorem}[Due to \cite{ChNi85}]\label{thm:CN}
Let $G$ be a graph with arboricity at most $\aG$. Then 
$$t(G)\leq \sum_{e\in E} d(e)\leq 2m\cdot \aG\leq 2m^{3/2}.$$
\end{theorem}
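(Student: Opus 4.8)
The plan is to prove the three inequalities in the chain from left to right, using an elementary counting argument for the first and the forest decomposition underlying the definition of arboricity for the second.

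First I would handle $t(G)\le \sum_{e\in E}d(e)$. Since every triangle is made up of three edges, $\sum_{e\in E}(\#\text{ triangles containing }e) = 3\,t(G)$. Now fix $e=\{u,v\}$: any triangle through $e$ is determined by a common neighbour of $u$ and $v$, so the number of triangles through $e$ is at most $\min\{d(u),d(v)\}=d(e)$. Summing over all $e$ gives $3\,t(G)\le \sum_{e\in E}d(e)$, which in particular yields the claimed bound $t(G)\le \sum_{e\in E}d(e)$.

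Next I would prove $\sum_{e\in E}d(e)\le 2m\,\aG$. By definition $E$ can be covered by $\aG$ forests $F_1,\dots,F_{\aG}$. In each forest, rooting every tree and orienting each edge from child to parent gives every vertex out-degree at most $1$ within that forest; applying this in all $\aG$ forests simultaneously orients every edge of $G$ and makes each vertex's total out-degree $d^{+}(v)$ at most $\aG$. For an edge $e=\{u,v\}$ oriented (say) $u\to v$ we have $d(e)=\min\{d(u),d(v)\}\le d(u)=d(\mathrm{tail}(e))$, so
$$\sum_{e\in E}d(e)\ \le\ \sum_{e\in E}d(\mathrm{tail}(e))\ =\ \sum_{v\in[n]}d(v)\,d^{+}(v)\ \le\ \aG\sum_{v\in[n]}d(v)\ =\ 2m\,\aG,$$
the last step being the handshake identity. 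Finally, $2m\,\aG\le 2m^{3/2}$ is immediate from Theorem~\ref{thm:arb_inequalities_edges}, which gives $\aG\le\sqrt m$.

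I do not expect a real obstacle here, as this is the classical Chiba–Nishizeki bound. The only point needing a moment's care is the middle step: one must extract a global orientation of bounded out-degree from the forest cover, and then observe that bounding $d(e)$ by the degree of whichever endpoint happens to be the tail is legitimate precisely because $d(e)$ is defined as the \emph{minimum} of the two endpoint degrees.
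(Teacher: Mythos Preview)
Your proof is correct. The paper does not actually prove this theorem: it is stated as a classical result due to Chiba--Nishizeki \cite{ChNi85} and is used without proof, so there is no in-paper argument to compare against. Your argument is the standard one: the first inequality comes from the observation that each triangle contributes to three edges and each edge $e$ participates in at most $d(e)$ triangles; the second uses the bounded-out-degree orientation arising from the forest decomposition (exactly Chiba--Nishizeki's device); and the third is immediate from $\aG\le\sqrt{m}$, which the paper also states separately as Theorem~\ref{thm:arb_inequalities_edges}.
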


\begin{definition}[Degree, neighbors and triangles of an edge]\label{def:edge_deg}
The degree of an edge $e=\{u,v\}$ is the \emph{minimum degree} of its endpoints, $d(e)=d(\{u,v\})=min\{d(u),d(v)\}$. The set of neighbors of an edge, is the set of vertices
that  are neighbors of its \emph{min degree endpoint}.

Also let $t(e)$ denote the number of triangles incident to the edge $e$.    
\end{definition}

Chernoff inequality and a useful corollary  are deferred to Appendix~\ref{sec:missing prliminaries}.

\section{The Triangles Approximation Algorithm}
\label{sec:triangles}
This section is devoted to proving the main theorem of the paper,
 a testable triangle counting algorithm with 
arboricity advice (Definition~\ref{def:motif_counting_algorithm}). We start with an algorithm that takes as an input a value $\ot$ that serves as a rough guess of the true triangle count $t$. In Section \ref{sec: Search Theorem for Triangle Counting} we show how to use Theorem \ref{thm: main theorem before search for ot} together with the Search Theorem of \cite{ERS_cliques} to search for a good guess $\ot$ and obtain a testable triangle counting algorithm that does not require an input of $\ot$.

Before presenting the algorithm, we provide some useful notations and structural claims. 

\subsection{Preliminaries for triangle counting}


    

The algorithm distinguishes those edges which either have high degree or too
many triangles incident to them:  
\begin{definition}[Heavy and light edges]\label{def:heavy}
For a given advice
$\ta$, and a guessed value $\ot$, let $\gamma=\max\{\ta, \ot^{1/3}\}$.
Further 
let $\tau_d=\settauD$ be a degree threshold, and $\tau_t=\settauT$ be a triangles-degree threshold.
Let $H_d$ denote the set of edges for which $d(e)> \tau_d$, and $H_t$ the set of edges for which $t(e)> \tau_t$.
We say that an edge $e$ is \emph{degree- (triangles-) heavy}, if either $e\in H_d$ ($e\in H_t$). Otherwise, we say it is \emph{degree (triangles) light}. Finally, we let $H=H_d\cup H_t$ and refer to it as the set of \emph{heavy edges}, and to $E\setminus H$ as the set of \emph{light edges}.
\end{definition}

The algorithm assigns each triangle to one of its edges as follows:
\begin{definition}[Assigning triangles to edges]\label{def:assign}
    Let $P=(E_0, E_1)$ be a partition of the graphs' edges. 
    We assign each triangle to its first edge in $E_0$ if such exists (according to some arbitrary predefined order). Otherwise, if all of its edges are in $E_1$, then the triangle is not assigned to any edge.
    For an edge $e$, we let $a_P(e)$ denote the number of triangles assigned to $e$
    by partition $P$.
\end{definition}

The algorithm partitions edges into two groups $E_0,E_1$. $E_0$ contains
all light edges. $E_1$ contains all edges that are degree-heavy  and all edges
that are  ``extremely" triangles-heavy -- that is, with a higher threshold than required
to just be triangles-heavy.  Edges that are degree-light, and triangles-heavy
with $\tau_t \leq t(e) \leq 2 \tau_t $ can be in either $E_0$ or $E_1$.
\begin{definition}[A good partition]\label{def:good_partiton}
We say that a partition of the graphs' edges $P=(E_0, E_1)$  is \emph{($\tau_d,\tau_t$)-good} if: (1) any light edge (according to Definition~\ref{def:heavy}) is in $E_0$,
(2)  every edge such that $d(e)> \tau_d$ or $t(e)> 2\tau_t$ is in $E_1$, and
(3) all other edges can be either in $E_0$ or $E_1$.
\end{definition}

Our first claim shows that if the advice is good, then we can ignore triangles for which all edges are heavy,
because there are not many of them.

\begin{claim} \label{clm:assigned-assuming-arbor}
    If $\ta>\aG$ and $\ot\in [t/4,t]$, then   $|H_d|\leq (\eps \ot)^{2/3}$ and $|H_t|\leq (\eps \ot)^{2/3}$. 
\end{claim}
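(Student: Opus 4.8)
The plan is to bound $|H_t|$ and $|H_d|$ separately, in each case by a one-line double-counting argument against the chosen thresholds, and then to simplify using $\ta>\aG$ (to control $\gamma$ through the arboricity) together with $\ot\ge t/4$ (equivalently $t\le 4\ot$).

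For $|H_t|$: I would start from the double-counting identity $\sum_{e\in E}t(e)=3t$, since each triangle is incident to exactly three edges. As every $e\in H_t$ satisfies $t(e)>\tau_t=12\gamma/\eps$, this gives $|H_t|\cdot\tau_t\le\sum_{e\in E}t(e)=3t\le 12\ot$, hence $|H_t|\le \eps\ot/\gamma$. Now $\gamma=\max\{\ta,\ot^{1/3}\}\ge\ot^{1/3}$ turns this into $|H_t|\le\eps\ot^{2/3}$, and since $\eps\in(0,1)$ we have $\eps\le\eps^{2/3}$, so $|H_t|\le(\eps\ot)^{2/3}$.

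For $|H_d|$: here the arboricity hypothesis enters. The graph $G$ has arboricity $\aG<\ta\le\gamma$, so the Chiba--Nishizeki bound (Theorem~\ref{thm:CN}) gives $\sum_{e\in E}d(e)\le 2m\aG\le 2m\gamma$. Since every $e\in H_d$ satisfies $d(e)>\tau_d=8m\gamma^2/(\eps\ot)$, we get $|H_d|\cdot\tau_d\le\sum_{e\in E}d(e)\le 2m\gamma$, i.e.\ $|H_d|\le \eps\ot/(4\gamma)$. Again using $\gamma\ge\ot^{1/3}$ and $\eps\le\eps^{2/3}$ yields $|H_d|\le(\eps\ot)^{2/3}/4\le(\eps\ot)^{2/3}$.

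There is no genuine obstacle here: the claim is essentially a check that the thresholds $\tau_t$ and $\tau_d$ are defined large enough. The only points requiring care are (i) applying Theorem~\ref{thm:CN} with the \emph{true} arboricity $\aG$ and only then weakening $\aG$ to $\gamma$, which is legitimate precisely because $\ta>\aG$; and (ii) the use of $\eps\le 1$ to convert the linear $\eps$-dependence obtained from the counting bounds into the $\eps^{2/3}$ that appears in the target $(\eps\ot)^{2/3}$. I would also note that only the lower bound $\ot\ge t/4$ is actually used, the upper bound $\ot\le t$ playing no role in this particular claim.
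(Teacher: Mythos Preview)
Your proof is correct and follows essentially the same double-counting approach as the paper: bound $|H_t|$ via $\sum_e t(e)=3t$ against the threshold $\tau_t$, and bound $|H_d|$ via the Chiba--Nishizeki inequality $\sum_e d(e)\le 2m\aG$ against $\tau_d$, then simplify using $\gamma\ge\ot^{1/3}$ and $\eps\le\eps^{2/3}$. Your observation that only $\ot\ge t/4$ is used (not $\ot\le t$) is also accurate, and making the $\eps\le 1$ step explicit is a nice touch that the paper leaves implicit.
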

\begin{proof} 
 By Theorem~\ref{thm:CN} and the assumption on $\ta$, $\sum_{e\in E}d(e)\leq 2m\cdot \aG\leq 2m\ta$. Therefore, there could be at most $2m\ta/\tau_d$ edges with degree greater than $\tau_d$.
 By the setting of $\tau_d=\settauD$, we get
 \begin{equation*}
    |H_d|\leq \frac{2m\ta}{\tau_d}=\frac{\eps\ta\ot}{4\gamma^2}\leq 
    \frac{\eps \ot}{4\gamma}\leq \frac{1}{4}\cdot\min\left\{(\eps \ot)^{2/3},\frac{\eps \ot}{\ta}\right\},
 \end{equation*}
  where the last two inequalities are since $\gamma=\max\{\ta, \ot^{1/3}\}$.

By the definition of $H_t$,
it holds that $\sum_{e\in H_t}t(e)\geq |H_t|\cdot \tau_t$ and also, $\sum_{e\in H_t}t(e)\leq 3t$ since this holds for any set of edges, so in particular it holds for $H_t$. 
Hence, by the setting of $\tau_t=\settauT=\frac{12}{\eps}\cdot \max\{\ta, \ot^{1/3}\}$ and since $t\leq 4\ot$,
it follows that
$
|H_t|\leq 3t/\tau_t\leq \eps 
 t/(4\ot^{1/3})\leq (\eps \ot)^{2/3}.
 $
 \end{proof}

\begin{claim} \label{clm:assigned}
   For any  partition $P=(E_0,E_1)$,
    it holds that $\sum_{e\in E}a_P(e)\leq t$.
    Furthermore, if
    $|E_1| \leq  3(\eps \ot)^{2/3}$ 
    and $\ot\in [t/4,t]$, then $\sum_{e\in E_0}a_P(e)\geq (1-12\eps)t$. 
\end{claim}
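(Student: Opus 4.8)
The plan is to prove the two parts of Claim~\ref{clm:assigned} separately, using the assignment rule of Definition~\ref{def:assign} together with the bound on heavy triangles coming from Theorem~\ref{thm:CN} and the general bound $t(G)\le 2m^{3/2}$.

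\textbf{First part.} The inequality $\sum_{e\in E}a_P(e)\le t$ is immediate: by Definition~\ref{def:assign}, each triangle contributes $1$ to $a_P(e)$ for at most one edge $e$ (its first edge in $E_0$), and contributes $0$ if no such edge exists. Summing over all triangles gives $\sum_{e\in E}a_P(e)\le t$ with equality iff every triangle has at least one edge in $E_0$.

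\textbf{Second part.} Here I would bound the number of \emph{unassigned} triangles, i.e.\ those all of whose edges lie in $E_1$. Call this set $T_{E_1}$; since $\sum_{e\in E}a_P(e)=t-|T_{E_1}|$ and all assigned triangles go to edges of $E_0$ (by the rule), it suffices to show $|T_{E_1}|\le 12\eps t$. Every triangle in $T_{E_1}$ is a triangle in the subgraph induced by the edge set $E_1$. Now split $E_1$ according to Definition~\ref{def:good_partiton}: an edge in $E_1$ is either degree-heavy ($d(e)>\tau_d$) or triangles-heavy (with $t(e)>\tau_t$, since $E_1$ contains only edges that are light, degree-heavy, or triangles-heavy, and light edges are forced into $E_0$). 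Write $E_1\subseteq H_d'\cup H_t'$ accordingly. For the triangles-heavy part, a triangle all of whose edges have $t(e)>\tau_t$ can be charged: the number of triangles incident to any single triangles-heavy edge is by definition more than $\tau_t$, but more usefully, $\sum_{e\in H_t'} t(e)\ge 3|T'|$ where $T'$ is the set of triangles with all three edges triangles-heavy — wait, I want the reverse direction. Instead use: the triangles with \emph{some} triangles-heavy edge number at most $\sum_{e: t(e)>\tau_t} t(e)$... that over-counts. The clean approach: by Definition~\ref{def:heavy} and the hypothesis $|E_1|\le 3(\eps\ot)^{2/3}$, the subgraph induced by $E_1$ has at most $3(\eps\ot)^{2/3}$ edges, hence by Theorem~\ref{thm:CN} (the $t(G)\le 2m^{3/2}$ form) contains at most $2\bigl(3(\eps\ot)^{2/3}\bigr)^{3/2}=2\cdot 3^{3/2}\eps\ot\le 12\eps\ot$ triangles. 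Since $\ot\le t$ (from $\ot\in[t/4,t]$), this is at most $12\eps t$. Therefore $|T_{E_1}|\le 12\eps t$, and $\sum_{e\in E_0}a_P(e)=\sum_{e\in E}a_P(e)=t-|T_{E_1}|\ge (1-12\eps)t$, where the first equality holds because every assigned triangle is assigned to an edge of $E_0$.

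\textbf{Main obstacle.} The only subtle point is getting the constant right: one must check that $2\cdot 3^{3/2}\le 12$ (indeed $3^{3/2}=3\sqrt3\approx 5.196$, so $2\cdot 3^{3/2}\approx 10.39\le 12$), and that the hypothesis $|E_1|\le 3(\eps\ot)^{2/3}$ rather than $(\eps\ot)^{2/3}$ is exactly what is needed so the $3^{3/2}$ factor is absorbed. One should also double-check that Theorem~\ref{thm:CN}'s bound $t(G)\le 2m^{3/2}$ applies to the subgraph induced by $E_1$ with $m$ replaced by $|E_1|$ — it does, since that bound holds for every graph. No use of the arboricity of $G$ or of the advice $\ta$ is needed here; that is the content of why this claim (unlike Claim~\ref{clm:assigned-assuming-arbor}) is unconditional apart from the cardinality hypothesis on $E_1$ and $\ot\in[t/4,t]$.
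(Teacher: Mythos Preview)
Your proposal is correct and follows essentially the same approach as the paper: bound the unassigned triangles by the number of triangles in the subgraph on edge set $E_1$, then apply the $t\le 2m^{3/2}$ bound from Theorem~\ref{thm:CN} with $m=|E_1|\le 3(\eps\ot)^{2/3}$ and use $\ot\le t$. The digression invoking Definition~\ref{def:good_partiton} is both unnecessary and slightly off (the claim is stated for an \emph{arbitrary} partition, not a good one), but since you abandon it for the clean argument, the final proof matches the paper's.
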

\begin{proof} 
By the assignment rule in Definition~\ref{def:assign}, every triangle is assigned to at most one edge, so it always holds that $\sum_{e\in E}a_P(e)\leq t$.

We now prove the second part of the claim. If there are at most $3(\eps \ot)^{2/3}$ edges in $E_1$, by Theorem~\ref{thm:CN} these edges induce at most $2\cdot |E_1|^{3/2}\leq 12\eps \ot\leq 12\eps t$ triangles, where the last inequality is by the assumption $\ot\leq t$. 
That is, there are at most $ 12\eps t$ triangles with all three edges are in $E_1$, 
and all the rest of the (at least) $(1-12\eps)t$ triangles will be assigned to one of the edges in $E_0$. 
Hence, $\sum_{e\in E_0} t(e)>(1-12\eps) t$. 
\end{proof}

\subsection{The algorithm}
\medskip
\alg{alg:triangles}{
   \textsf{Approx-Triangles-With-Arboricity-Advice}$(\ot, \eps, \delta,\ta)$
\begin{enumerate}
\item Set $\gamma=\max\{\ta, \ot^{1/3}\}$, $\tau_d=\settauD$ and $\tau_t=\settauT$.
\item \label{step: choose R}Sample a set $R$ of $r=\setq$ edges uniformly at random.
\item Query the degrees of all edges in $R$, and let $d(R)=\sum_{e\in R}d(e)$. \\If $d(R)>|R|\cdot\ta\cdot \frac{4}{\delta}$ then output $\badadvice$\label{step:R_high_deg}
\item Invoke \isHeavy{} on all edges in $R$, and if 
the fraction of  edges for which the algorithm returned ``Heavy" is greater than $\frac{5}{2}\cdot \frac{(\eps \ot)^{2/3}}{\om}$
then output \badadvice.
\label{step:reject-if-bad-edges}
\item[] \comment{Continue with the  ``standard'' algorithm by ERS.} 
\item Set up a data structure to allow sampling each edge in $R$ with probability $d(e)/d(R)$.
\item For $i=1$ to $s=\frac{d(R)}{|R|\cdot (\ot/m)}\cdot\frac{10\ln(8/\delta)}{\eps^2}$ do:
\label{step: main loop of main algo}
\begin{enumerate}
    \item Sample an edge $e\in R$, so that each edge is sampled  with probability $d(e)/d(R)$. \label{step:sample-edge-from-Q} 
    \item Pick $w$ to be a uniform random  neighbor of $e$.
    \item If $e,w$ form a triangle, then invoke \IsAssigned$(e,w)$. \label{step:invoke-is-Assigned} 
    \item If the triangle $(e,w)$ is assigned to $e$, then let $\chi_i=1$. Otherwise, let  $\chi_i=0$.
\end{enumerate}
\item Let  $\chi=\frac{1}{s}\sum_{i=1}^s\chi_i$
\item Return  $\hat{t}\eqdef \frac{d(R)\cdot m}{|R|}\chi$ 
\end{enumerate}
}
 \alg{alg:IsAssigned}{
   \noindent\textsf{IsAssigned}$(e=(u,v),w)$
\begin{enumerate}
\item Invoke \isHeavy{} for each of the edges $(u,v), (u,w), (v,w)$.
\item If $e$ is the first edge in $E_0$ then return YES. Otherwise, return NO.
\end{enumerate}
}

\alg{alg:IsHeavy}{
   \noindent\textsf{IsHeavy}$(e=(u,v), \tau_t, \tau_d)$
\begin{enumerate}
\item If $d(e)> \tau_d$ then return ``Heavy''.
\item Sample $k=18\frac{d(e)}{\tau_t} \ln\frac{10m}{\delta}$ neighbors of $e$ uniformly at random (from its min-degree vertex), and for each check if it forms a triangle with $e$. \\ \comment{If \textsf{IsHeavy$(e=(u,v), \tau_t, \tau_d)$} was called in the past, re-use the same random bits as used previously.}
\item If the number of witnessed triangles is greater than $1.5 k\cdot \frac{\tau_t}{d(e)}$, then return ``Heavy''. Otherwise, return ``Not Heavy''.
\end{enumerate}
}

\begin{theorem}
\label{thm: main theorem before search for ot}
Let $G$ be the input
graph  and $\eps$ and $\delta$ be an approximation and error parameters.
Let $t$ be the number of triangles in $G$. 
Algorithm \ApproxTriangles{} receives:
augmented query access to $G$, the number of edges $m$,
parameters $\eps$ and $\delta$, and 
a pair of positive integers $\ta$ and $\ot$.

The algorithm has an expected run-time and query complexity of $O\left(\frac{m\cdot\max\{\ta, \ot^{1/3}\}}{ \ot}\cdot \frac{\ln \frac{1}{\delta} }{\eps^3\delta} \ln\frac{m}{\delta}
 \cdot (1+t/\ot)
 \right)$. Every time the algorithm is run, it can either return a number $\hat{t}$
or it can return \badadvice, and
\begin{itemize}


\item  (Completeness)
For every $G$ and $
\ot$, if $\aG\leq  \ta$,
then w.p. at least $1-\delta$ the algorithm \ApproxTriangles{} will not output $\badadvice$.

\item  (Used for soundness)
For every $G$ and $
\ta$, if
$\ot \in [  t/4, t ]$ 
then  w.p. at least $1-\delta$ \ApproxTriangles{} will either return $\badadvice$ or output $\hat{t} \in (1 \pm 20\epsilon)t$.

\item  
(Used for soundness)
For every $G$ and $
\ta$, if
$\ot > t $ 
then w.p. at least $5\epsilon$, the algorithm \ApproxTriangles{} will either return $\badadvice$ or output a value of $\hat{t}$ such that $\hat{t} < (1+20\epsilon) t$.


\end{itemize}

\end{theorem}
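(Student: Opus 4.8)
The plan is to analyze Algorithm~\ApproxTriangles{} by separating the three claimed guarantees, and to reduce each to concentration arguments over the random choices of $R$, of the sampled edges in the main loop, and of the random bits used inside \isHeavy. Throughout I would condition on the ``good \isHeavy'' event $\mathcal{E}_{\textrm{heavy}}$: for every edge ever queried, the call to \isHeavy{} correctly classifies it in the sense that every light edge (Definition~\ref{def:heavy}) is reported ``Not Heavy'' and every edge with $d(e)>\tau_d$ or $t(e)>2\tau_t$ is reported ``Heavy'' (edges in the intermediate band may go either way). Since \isHeavy{} samples $k=18\frac{d(e)}{\tau_t}\ln\frac{10m}{\delta}$ neighbors and thresholds the empirical triangle count at $1.5k\tau_t/d(e)$, a Chernoff bound gives failure probability $\le \poly(\delta/m)$ per edge, and a union bound over the $\poly^*(\cdot)$ distinct edges touched keeps $\Pr[\neg\mathcal{E}_{\textrm{heavy}}]\le \delta/4$ (here the re-use of random bits is what makes the classification a well-defined deterministic function of the fixed randomness, so that the partition $P=(E_0,E_1)$ induced by the run is consistent and $(\tau_d,\tau_t)$-good in the sense of Definition~\ref{def:good_partiton}). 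Under $\mathcal{E}_{\textrm{heavy}}$, the quantity the algorithm is really estimating is $\sum_{e\in E_0} a_P(e)$, rescaled: sampling $e\in R$ with probability $d(e)/d(R)$ and then a uniform neighbor $w$ of $e$ hits a given assigned triangle of $e$ with probability $\frac{d(e)}{d(R)}\cdot\frac{a_P(e)}{d(e)}$ times the bookkeeping, so $\Ex[\chi_i\mid R]=\frac{1}{d(R)}\sum_{e\in R}a_P(e)$ and $\hat t=\frac{d(R)m}{|R|}\chi$ is an unbiased estimator of $\frac{m}{|R|}\sum_{e\in R}a_P(e)$, which in turn has expectation $\sum_{e\in E}a_P(e)$ over the choice of $R$.

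For \textbf{Completeness}, assume $\aG\le\ta$. Step~\ref{step:R_high_deg} rejects only if $d(R)>|R|\ta\cdot\frac4\delta$; since by Theorem~\ref{thm:CN} and $\aG\le\ta$ we have $\Ex[d(R)]=|R|\cdot\frac{d(E)}{m}\le 2|R|\ta$, Markov's inequality bounds this rejection probability by $\delta/2$. Step~\ref{step:reject-if-bad-edges} rejects only if the empirical fraction of ``Heavy'' edges in $R$ exceeds $\frac52\cdot\frac{(\eps\ot)^{2/3}}{\om}$; by Claim~\ref{clm:assigned-assuming-arbor}, $|H|\le|H_d|+|H_t|\le 2(\eps\ot)^{2/3}$, so the true heavy fraction is at most $2(\eps\ot)^{2/3}/m$, and since $|R|=r=\setq$ is large enough (this is exactly what the $30\ln(4/\delta)$ and the $(\eps\ot)^{2/3}$ in $r$ are calibrated for — note $\om$ here is the algorithm's stand-in for $m$), a multiplicative Chernoff bound shows the empirical heavy fraction exceeds $\frac52$ times the mean with probability $\le\delta/4$. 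A union bound over these two rejection events plus $\neg\mathcal{E}_{\textrm{heavy}}$ gives total rejection probability $\le\delta$.

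For the two \textbf{soundness} statements, condition on $\mathcal{E}_{\textrm{heavy}}$ and on the algorithm \emph{not} having rejected. Non-rejection at Step~\ref{step:reject-if-bad-edges} forces (again by Chernoff, in the reverse direction) that $|H|\le 3(\eps\ot)^{2/3}$ whp, hence $|E_1|\le 3(\eps\ot)^{2/3}$ since $E_1$ consists only of heavy edges. Case $\ot\in[t/4,t]$: Claim~\ref{clm:assigned} then gives $(1-12\eps)t\le\sum_{e\in E_0}a_P(e)\le t$, so the true target lies in $(1\pm 12\eps)t$; it remains to show $\hat t$ concentrates around this target within a further $(1\pm c\eps)$ factor. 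This needs two concentration steps: (a) $\frac{m}{|R|}\sum_{e\in R}a_P(e)$ is within $(1\pm\eps)$ of $\sum_e a_P(e)=\Theta(t)$ — this is where the choice $r=\setq$ with the $\tau_t$-dependent term matters, because $a_P(e)\le 2\tau_t$ for $e\in E_0$ so the per-edge variance is controlled, and non-rejection at Step~\ref{step:R_high_deg} additionally caps $d(R)$; (b) given $R$, the empirical mean $\chi$ of $s=\frac{d(R)}{|R|(\ot/m)}\cdot\frac{10\ln(8/\delta)}{\eps^2}$ Bernoulli-ish trials is within $(1\pm\eps)$ of its mean — the choice of $s$ is exactly $\Theta(\eps^{-2}\ln(1/\delta))$ times the inverse of the per-trial success probability $\frac{1}{d(R)}\sum_{e\in R}a_P(e)\big/1 \approx \frac{t|R|}{m\,d(R)}$, i.e. of order $\ot/(\text{expected hits})$, so a (multiplicative) Chernoff bound applies with the $10\ln(8/\delta)/\eps^2$ supplying the slack. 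Composing the $(1\pm 12\eps)$, $(1\pm\eps)$, $(1\pm\eps)$ factors and absorbing constants yields $\hat t\in(1\pm 20\eps)t$, with total failure probability $\le\delta$ after a union bound. Case $\ot>t$: here $a_P$ may miss a larger fraction of triangles, but the bound $\sum_{e}a_P(e)\le t$ still holds unconditionally (first part of Claim~\ref{clm:assigned}), so the \emph{expectation} of $\hat t$ (conditioned on not rejecting) is at most $t$; one then only needs a one-sided deviation bound to conclude that with probability at least $\Omega(1)$ — specifically $\ge 5\eps$, which is what the constants in $s$ are tuned to deliver via a Paley–Zygmund / reverse-Markov argument on the nonnegative estimator — the output satisfies $\hat t<(1+20\eps)t$.

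The \textbf{runtime} bound follows by summing: Step~\ref{step:reject-if-bad-edges} runs \isHeavy{} on $|R|=r$ edges at cost $O(d(e)/\tau_t\cdot\ln(m/\delta))$ each, plus $d(e)\le\tau_d$ for surviving edges gives $O(r\cdot\tau_d/\tau_t\cdot\ln(m/\delta))$; the main loop does $s$ iterations each costing $O(1)$ queries plus one \IsAssigned{} call of cost $O(\tau_d/\tau_t\cdot\ln(m/\delta))$; and $d(R)=O(|R|\ta/\delta)$ on the non-rejecting branch. Plugging $\gamma=\max\{\ta,\ot^{1/3}\}$, $\tau_t=\settauT$, $\tau_d=\settauD$, $r=\setq$, $s=\Theta(\frac{d(R)}{|R|(\ot/m)}\cdot\frac{\ln(1/\delta)}{\eps^2})$ and simplifying gives the stated $O\!\big(\frac{m\gamma}{\ot}\cdot\frac{\ln(1/\delta)}{\eps^3\delta}\ln\frac m\delta\cdot(1+t/\ot)\big)$; the $(1+t/\ot)$ factor is needed because when $\ot\ll t$ the per-trial success probability is as large as $\Theta(t|R|/(m\,d(R)))$ and each loop iteration can trigger an \IsAssigned{} call, inflating the expected work.

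\textbf{Main obstacle.} The delicate point is the third bullet (the $\ot>t$ regime): we only get a constant-order (in fact $\ge 5\eps$) lower bound on the probability of a useful one-sided conclusion, and making this precise requires carefully tracking that, conditioned on the complement of the low-probability bad events, the estimator's conditional expectation is genuinely $\le t$ and its fluctuations are small enough on a $5\eps$-probability event — i.e. choosing the constants inside $s$ and inside the Step~\ref{step:reject-if-bad-edges} threshold so that the various $\delta$- and $\eps$-sized error budgets compose correctly without circularity between ``not rejecting'' and ``estimate is accurate.'' The second-hardest part is verifying that re-using \isHeavy's random bits really does make the induced partition $P$ well-defined and $(\tau_d,\tau_t)$-good simultaneously across \emph{all} the (polynomially many, but input-dependent) edges examined by both the Step~\ref{step:reject-if-bad-edges} scan and the \IsAssigned{} calls, so that Claims~\ref{clm:assigned-assuming-arbor} and~\ref{clm:assigned} can be invoked with the same $P$.
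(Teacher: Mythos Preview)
Your plan matches the paper's proof essentially step for step: the $(\tau_d,\tau_t)$-good partition from \isHeavy{} (union bound over all $m$ edges, which the $\ln(10m/\delta)$ in $k$ is calibrated for), Markov for Step~\ref{step:R_high_deg}, Chernoff for Step~\ref{step:reject-if-bad-edges}, and two-level concentration (Claims~\ref{claim:triangles assigned to R are proportional to total triangles} and~\ref{claim: main claim soundness 1}) for the first soundness case.

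Where you overcomplicate --- and where your stated ``main obstacle'' dissolves --- is the third bullet. You condition on non-rejection and then worry that the conditional expectation of $\hat t$ may no longer be $\le t$, since rejection depends on $R$. The paper sidesteps this entirely (Remark~\ref{remark: random variables are defined even if not computed}): define $\hat t$ as a random variable over \emph{all} of the algorithm's coins, regardless of whether it terminates early. Then $\EX[\hat t]=\frac{m}{|R|}\EX[a_P(R)]\le t$ holds unconditionally (first part of Claim~\ref{clm:assigned}), and plain Markov gives $\Pr[\hat t>(1+20\eps)t]\le 1/(1+20\eps)<1-5\eps$. On the complementary event the algorithm either outputs this good $\hat t$ or it has already output $\badadvice$ --- both satisfy the bullet. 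No Paley--Zygmund, no conditioning, no circularity. The same device cleans up your first soundness case: rather than conditioning on non-rejection and arguing backwards to $|E_1|$ small, the paper does a case split on $|E_1|$ (determined by the \isHeavy{} coins, hence independent of $R$), and in each case shows that either $\badadvice$ is output w.h.p.\ or the random variable $\hat t$ is good w.h.p.

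On runtime, your accounting of Step~\ref{step: main loop of main algo} is slightly off: \IsAssigned{} is invoked only when a triangle is found, so the $(1+t/\ot)$ factor comes from bounding the \emph{expected} number of such calls by $O\big(\frac{\ln(1/\delta)}{\eps^2}\cdot t/\ot\big)$ (Claim~\ref{claim: how many times IsAssigned invoked}) times the per-call cost $O(\tau_d/\tau_t\cdot\ln(m/\delta))$; and the cost of Step~\ref{step:reject-if-bad-edges} is bounded via $d(R)\le |R|\ta\cdot 4/\delta$ (guaranteed by surviving Step~\ref{step:R_high_deg}), not via $d(e)\le\tau_d$.
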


The proof of the run-time bound in Theorem \ref{thm: main theorem before search for ot} is deferred to Appendix~\ref{sec:run_time_bound} and the correctness properties in Theorem are shown in Section \ref{sec: correctness of triangle counting} (this includes the completeness property, and the two properties used for soundness). Section \ref{sec: is Heavy} focuses on the sub-routine \textsf{IsHeavy} with the proof of Claim \ref{claim: isHeavy gives good partition} deferred to Appendix~\ref{sec: deferred proofs}.

As mentioned earlier, Theorem \ref{thm: main theorem before search for ot} is used in Section \ref{sec: Search Theorem for Triangle Counting} together with the Search Theorem of \cite{ERS_cliques} to obtain a testable algorithm (Definition \ref{def:motif_counting_algorithm}) in the augmented query model for $(1\pm \epsilon)$-approximate counting of triangles (Theorem \ref{thm: main theorem after search for ot}). The instance-adaptive algorithm (Corollary \ref{cor: triangle-counting algorithm that adapts to arboricity}) is analyzed in section \ref{sec: instance adaptie algo}.


 \subsection{Analyzing procedure \textsf{IsHeavy}}
 \label{sec: is Heavy}
 We rely on the following the guarantees of the procedure \isHeavy{}, and defer the proof  to Section~\ref{sec:proof-is-heavy}.
\begin{claim}
\label{claim: isHeavy gives good partition}
    With probability at least $1-\delta/10$ over all random coins, \isHeavy{} induces a $(\tau_d, \tau_t)$-good partition, as defined in Definition~\ref{def:good_partiton}.
    The query complexity and the run-time of \isHeavy{} is $O\left(\frac{\min(\tau_d,d(e))}{\tau_t} \ln\frac{m}{\delta}\right)$.
\end{claim}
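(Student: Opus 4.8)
The plan is to analyze \isHeavy{} by separating the two ways an edge can be declared heavy: the deterministic degree test in Step~1, and the randomized triangle-sampling test in Steps~2--3. The deterministic test handles exactly the edges with $d(e)>\tau_d$: any such edge is correctly placed in $E_1$, and moreover for such an edge \isHeavy{} returns immediately in $O(1)$ time, which accounts for the $\min(\tau_d,d(e))$ in the claimed complexity (when $d(e)>\tau_d$ the cost is $O(1)$, matching $\min(\tau_d,d(e))/\tau_t$ up to the logarithmic factor only when combined with the sampling bound). So the remaining work concerns edges with $d(e)\le\tau_d$, for which the cost of Step~2 is $k=18\frac{d(e)}{\tau_t}\ln\frac{10m}{\delta}=O\!\left(\frac{\min(\tau_d,d(e))}{\tau_t}\ln\frac{m}{\delta}\right)$, giving the stated run-time and query bounds; Step~3 and the re-use of random bits add no asymptotic cost.

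For correctness, I would set up a Chernoff bound for a single edge $e$ with $d(e)\le\tau_d$. Each of the $k$ sampled neighbors forms a triangle with $e$ independently with probability exactly $t(e)/d(e)$ (sampling from the min-degree endpoint), so the witnessed-triangle count is $\mathrm{Bin}(k, t(e)/d(e))$ with mean $\mu = k\cdot t(e)/d(e)$. The acceptance threshold in Step~3 is $1.5k\cdot\tau_t/d(e)$, i.e. $1.5$ times the mean one would get if $t(e)=\tau_t$. The goal is to show: (a) if $t(e)>2\tau_t$ (so $e$ must land in $E_1$), then $\mu > 2k\tau_t/d(e)$, and the empirical count exceeds $1.5k\tau_t/d(e)$ except with probability $\ll \delta/(10m)$; (b) if $e$ is light, in particular $t(e)\le\tau_t$, then $\mu\le k\tau_t/d(e)$ and the empirical count stays below $1.5k\tau_t/d(e)$ except with probability $\ll\delta/(10m)$. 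In both cases the relevant multiplicative deviation is a constant (from $\mu$ to $1.5\mu/2$ or from $\mu$ to $1.5\mu$), and since $\mu \ge$ some constant times $\ln(10m/\delta)$ exactly when the deviation matters — one needs $k\cdot t(e)/d(e)$ not too small, but note that when $t(e)$ is tiny the count is tiny and trivially below threshold, so the dangerous regime is $t(e)=\Theta(\tau_t)$ where $\mu = \Theta(k\tau_t/d(e)) = \Theta(\ln(10m/\delta))$ — the Chernoff bound with the $\ln\frac{10m}{\delta}$ in $k$ yields failure probability at most $\delta/(10m)$ per edge. A union bound over all $m$ edges then gives failure probability at most $\delta/10$, so with probability $\ge 1-\delta/10$ every light edge is declared ``Not Heavy'' (hence can go to $E_0$) and every edge with $d(e)>\tau_d$ or $t(e)>2\tau_t$ is declared ``Heavy'' (hence goes to $E_1$); edges with $\tau_t<t(e)\le 2\tau_t$ may go either way, which is exactly what Definition~\ref{def:good_partiton} allows. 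This establishes the $(\tau_d,\tau_t)$-good partition.

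One subtlety I would be careful about is the re-use of random bits across repeated calls to \isHeavy{} on the same edge: this is what makes the induced partition $P=(E_0,E_1)$ well-defined (a fixed function of the coins) rather than changing between invocations, so that \IsAssigned{} is consistent. The union bound above is over the $m$ possible edges and is taken once over the shared randomness, so consistency is automatic. The main obstacle I anticipate is getting the constants in the Chernoff application to line up cleanly — verifying that the gap between the $t(e)\le\tau_t$ case and the $t(e)>2\tau_t$ case is wide enough (a factor of $2$ in the mean, split around the $1.5$ threshold) that a single value of $k$ with the $18$ constant suffices for a $\delta/(10m)$ per-edge bound — and handling the low-$t(e)$ regime, where $\mu$ may be much smaller than $\ln(10m/\delta)$; there one argues directly that $\mathrm{Bin}(k,t(e)/d(e))$ exceeds $1.5k\tau_t/d(e)$ only with exponentially small probability because $t(e)/d(e)\le \tau_t/d(e)$ makes the threshold at least $1.5$ times a mean that is itself $\le k\tau_t/d(e)$, so the standard upper-tail Chernoff bound applies regardless of how small $\mu$ is. I would also double-check that when $d(e)\le\tau_d$ we indeed have $k\ge 1$ so the sampling step is meaningful, which holds since $d(e)\ge\tau_t$ is needed for $e$ to be triangles-heavy in the first place (otherwise $t(e)\le d(e)<\tau_t$ trivially and $e$ is light), and in the regime $d(e)<\tau_t$ the claim's $\min(\tau_d,d(e))/\tau_t<1$ bound should be read as $O(\ln\frac{m}{\delta})$, i.e. at least a constant number of samples.
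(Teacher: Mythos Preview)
Your proposal is correct and follows essentially the same approach as the paper: handle degree-heavy edges deterministically in Step~1, apply a per-edge Chernoff bound to the triangle-sampling test in Steps~2--3, and take a union bound over all $m$ edges to get the $1-\delta/10$ guarantee. The paper formalizes the low-$t(e)$ regime via an explicit stochastic-dominance comparison with a reference binomial $Z$ of mean $1.5\,\tau_t/d(e)$ (so that a single Chernoff computation on $Z$ covers all $t(e)\le\tau_t$ at once, and symmetrically all $t(e)\ge2\tau_t$), whereas you appeal to Chernoff on $Y$ directly and argue informally that the bound is monotone in $\mu$; both are valid and amount to the same calculation.
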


\subsection{Correctness of Approx-Triangles-With-Arboricity-Advice}
\label{sec: correctness of triangle counting}
In this section we prove the Completeness condition and the two Soundness conditions in Theorem \ref{thm: main theorem before search for ot}. For the rest of the analysis we will use the following convention:
\begin{remark}
\label{remark: random variables are defined even if not computed}
    In this section we will consider as random variables various quantities in Algorithm \ApproxTriangles such as $\hat{t}$ and $\chi_i$. These random variables are well-defined even if the algorithm decides to terminate in steps  \ref{step:reject-if-bad-edges} or \ref{step:R_high_deg} prior to computing these quantities. This is true because   
    all these random variables are determined on (a) the partition defined by \isHeavy (b) the set $R$ and (c) the edges sampled in Step \ref{step: main loop of main algo}.    
\end{remark}
\subsubsection{Proof of Completeness}
In this sub-section we prove that the algorithm \TrianglesApprox{} satisfies the completeness condition in Theorem \ref{thm: main theorem before search for ot}. Specifically, we show that for every $G$ and $
\ot$, if $\aG\leq  \ta$,
then w.p. at least $1-\delta$ the algorithm \ApproxTriangles{ will not output $\badadvice$. Recalling that the algorithm has the option to reject in Step  \ref{step:R_high_deg} and Step \ref{step:reject-if-bad-edges}, in this section we show that either is unlikely if $\aG\leq  \ta$. We start by considering Step \ref{step:reject-if-bad-edges}.

\begin{claim}
\label{claim: estimating fraction of heavy edges}
   If the partition $P=(E_0,E_1)$ induced by the procedure \isHeavy{} satisfies $E_1\leq2(\eps \ot )^{2/3}$, then the algorithm outputs \badadvice{} in Step~\ref{step:reject-if-bad-edges} with probability only at most $\delta/4$. 
    Moreover, 
    $P=(E_0,E_1)$
    satisfies  
    $|E_1|\geq3(\eps \ot)^{2/3}$, then the algorithm will output \badadvice{} in Step~\ref{step:reject-if-bad-edges} with probability at least $1-\delta/4$.

\end{claim}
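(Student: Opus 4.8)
The plan is to reduce the claim to one standard concentration inequality for uniform edge sampling. First I would condition on the entire random string $\rho$ used by \isHeavy{} (the bits it draws for each edge of $G$). By the bit-reuse convention in \isHeavy, $\rho$ already determines the partition $P=(E_0,E_1)$ and the answer of \isHeavy{} on every edge the algorithm later queries; in particular the edges of $R$ that receive the answer ``Heavy'' in Step~\ref{step:reject-if-bad-edges} are exactly those in $R\cap E_1$. Hence the event that the algorithm outputs \badadvice{} in Step~\ref{step:reject-if-bad-edges} is precisely $\{N>T\}$, where $N\eqdef|R\cap E_1|$, $r\eqdef|R|=\setq$, and $T\eqdef \tfrac52\cdot\tfrac{(\eps\ot)^{2/3}}{m}\cdot r$ is the rejection threshold expressed as a bound on $N$. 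Since $R$ is drawn in Step~\ref{step: choose R} independently of $\rho$, conditionally on $\rho$ the variable $N$ is a sum of $r$ i.i.d.\ $\mathrm{Bernoulli}(|E_1|/m)$ indicators (hypergeometric with the same mean if the $r$ sampled edges are required to be distinct), with $\mu\eqdef\Ex[N]=r|E_1|/m$. I note that the good-partition conclusion of Claim~\ref{claim: isHeavy gives good partition} is not needed here: this is purely a statement about the concentration of $N$.

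For the first part I would use the hypothesis $|E_1|\le 2(\eps\ot)^{2/3}$, which gives $\mu\le\tfrac45\,T$, and apply a multiplicative Chernoff upper tail in its fixed-threshold form — valid however small $\mu$ is, and also under sampling without replacement (Hoeffding/Serfling) — to obtain $\Pr[N\ge T]\le\exp(-\Omega(T))$. The one genuinely quantitative step is to verify that $T$ is a large multiple of $\ln(1/\delta)$: substituting $r=\setq$ and $\tau_t=\settauT$, and using $\gamma=\max\{\ta,\ot^{1/3}\}\ge\ot^{1/3}$, the factors of $\ot$ cancel and one gets $T=480\cdot\tfrac{\gamma}{\ot^{1/3}}\cdot\tfrac{\ln(4/\delta)}{\eps^{7/3}}\ge 480\ln(4/\delta)$. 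Therefore $\Pr[N\ge T]\le\delta/4$, which is the first assertion.

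For the second part I would use $|E_1|\ge 3(\eps\ot)^{2/3}$, which gives $\mu\ge\tfrac65\,T>T$, and apply a multiplicative Chernoff lower tail: $\Pr[N\le T]\le\Pr[N\le(1-\tfrac16)\mu]\le\exp(-\Omega(\mu))\le\exp(-\Omega(T))\le\delta/4$, using the same lower bound $T\ge 480\ln(4/\delta)$. Hence $\Pr[N>T]\ge 1-\delta/4$, i.e.\ the algorithm outputs \badadvice{} in Step~\ref{step:reject-if-bad-edges} with probability at least $1-\delta/4$ (as usual in the sense of Remark~\ref{remark: random variables are defined even if not computed}, the event $\{N>T\}$ is well defined even if the algorithm halted earlier in Step~\ref{step:R_high_deg}).

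I expect the only delicate points to be bookkeeping rather than mathematics: (a) arguing that the \isHeavy{} answers on $R$ are consistent with the globally-defined partition $P$, which is exactly what the shared-randomness convention provides; (b) phrasing the upper tail against the fixed threshold $T$ rather than against $\mu$, since $\mu$ may be $0$ when $E_1=\emptyset$, so a naive $\exp(-\Omega(\mu))$ bound is vacuous; and (c) the elementary computation that $T\gg\ln(1/\delta)$, which crucially relies on $\gamma\ge\ot^{1/3}$ to cancel the $\ot$ dependence that $r$ and $\tau_t$ introduce. With those in place the result follows from a single Chernoff bound applied in each direction.
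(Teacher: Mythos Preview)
Your proposal is correct and follows essentially the same approach as the paper: both reduce to a Chernoff bound on the number of sampled edges landing in $E_1$, using that each edge of $R$ is in $E_1$ with probability $|E_1|/m$. The paper phrases the quantitative step as ``$|R|\ge\setQ$'' and plugs this directly into the Chernoff exponent, whereas you compute $T=\tfrac52\,\tfrac{(\eps\ot)^{2/3}}{m}\,r$ from $r=\setq$ and use $\gamma\ge\ot^{1/3}$ to get $T\ge 480\ln(4/\delta)$; these are the same calculation (indeed $\setq\ge\setQ$ precisely because $\gamma\ge\ot^{1/3}$). Your extra care about conditioning on the \isHeavy{} randomness and about phrasing the upper tail against the fixed threshold $T$ rather than against $\mu$ are points the paper's proof leaves implicit, but they do not change the argument.
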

\begin{proof}
In Step~\ref{step:reject-if-bad-edges}, the procedure  \isHeavy{} outputs ``Heavy'' for en edge $e$ if and only if $e\in E_1$, where recall that $P=(E_0,E_1)$ induced by the procedure \isHeavy{}.
Therefore each edge $e$ sampled in Step~\ref{step:reject-if-bad-edges}, has a probability of $|E_1|/m$ to be in $E_1$ (so \isHeavy{} outputs ``Heavy'' for $e$). 

Suppose $E_1\leq2(\eps \ot)^{2/3}$, then
by the multiplicative Chernoff bound,  and  by the setting of $|R|\geq\setQ$,
 \[
 \Pr\left[
 \sum_{e\in R}\mathbbm{1}_{e\in E_1}
 >\frac{5}{2}\cdot |R|\cdot \frac{(\eps \ot)^{2/3}}{m}\right]<\exp\left(-\frac{|R|\cdot(\eps \ot)^{2/3}}{24m}\right)=\delta/4,
 \]
proving the first part of the claim.
 Now, suppose that $|E_1|\geq3(\eps \ot)^{2/3}$. 
 Then, again
 \[
 \Pr\left[
 \sum_{e\in R}\mathbbm{1}_{d(e)>\tau_d}
\leq \frac{5}{2}\cdot |R|\cdot \frac{(\eps \ot)^{2/3}}{m}\right]<\exp\left( 
 -\frac{1}{24}\cdot \frac{2(\eps \ot)^{2/3}}{m}\cdot |R|
\right)=\delta/4.
 \]
\end{proof}
Building on the claim above, we now show that Step \ref{step:reject-if-bad-edges} is unlikely to output $\badadvice$ if $\ta\geq \aG$.
\begin{claim}
    If $\ta\geq \aG$ and the partition $P=(E_0,E_1)$ induced by IsHeavy is a $(\tau_d, \tau_t)$-good partition (see Definition~\ref{def:good_partiton}), then the algorithm can output \badadvice{} in Step~\ref{step:reject-if-bad-edges} only with probability at most $\delta/4$. 
\end{claim}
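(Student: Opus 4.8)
The plan is to reduce the statement to Claim~\ref{claim: estimating fraction of heavy edges}: it suffices to show that, under the stated hypotheses, the set $E_1$ of the (good) partition induced by \isHeavy{} satisfies the deterministic bound $|E_1|\le 2(\eps\ot)^{2/3}$, since Claim~\ref{claim: estimating fraction of heavy edges} then bounds the probability of rejecting in Step~\ref{step:reject-if-bad-edges} by $\delta/4$. First I would unpack Definition~\ref{def:good_partiton}: clause~(1) forces every \emph{light} edge into $E_0$, so every edge of $E_1$ is heavy, i.e.\ lies in $H_d\cup H_t$ (in particular, the edges that clause~(3) is allowed to place in $E_1$, those with $\tau_t<t(e)\le 2\tau_t$, still belong to $H_t$ by Definition~\ref{def:heavy}). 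Hence $|E_1|\le|H_d|+|H_t|$, and it remains to bound each term by $(\eps\ot)^{2/3}$.

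For $|H_d|$ I would use only the hypothesis $\aG\le\ta$ together with Chiba--Nishizeki (Theorem~\ref{thm:CN}): $\sum_{e\in E}d(e)\le 2m\aG\le 2m\ta$, so fewer than $2m\ta/\tau_d$ edges can exceed the degree threshold $\tau_d=\settauD$; since $\ta\le\gamma$ and $\gamma\ge\ot^{1/3}$, this gives $|H_d|\le \eps\ta\ot/(4\gamma^2)\le(\eps\ot)^{2/3}/4$. For $|H_t|$, each triangle is incident to at most three edges, so $\sum_{e\in H_t}t(e)\le 3t$, while each edge of $H_t$ contributes more than $\tau_t=\settauT$; thus $|H_t|<3t/\tau_t$, which is exactly the bound $|H_t|\le(\eps\ot)^{2/3}$ already established in Claim~\ref{clm:assigned-assuming-arbor} (and for $\ot>t$ the same estimate only improves). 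Combining, $|E_1|\le\frac{5}{4}(\eps\ot)^{2/3}\le 2(\eps\ot)^{2/3}$, and the claim follows from Claim~\ref{claim: estimating fraction of heavy edges}. The soundness of the other rejection path, Step~\ref{step:R_high_deg}, is not at issue here and is treated separately in the completeness proof.

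The step I expect to be the crux is the bound on $|H_t|$. The bound on $|H_d|$ is fully robust: it needs nothing about $\ot$ and uses only $\aG\le\ta$, which is exactly why the degree threshold $\tau_d$ can be set solely in terms of $\gamma$ and $\ot$. Controlling $|H_t|$, however, genuinely relies on the choice $\gamma=\max\{\ta,\ot^{1/3}\}$ — it is precisely the $\ot^{1/3}$ term that keeps $\tau_t$ large enough to play $3t/\tau_t$ against $(\eps\ot)^{2/3}$ even when the advice $\ta$ is small — and on the guess $\ot$ being of the right order, which is why the argument is routed through Claim~\ref{clm:assigned-assuming-arbor}. Apart from that, the only subtlety is bookkeeping: making sure that the ``good partition'' hypothesis is invoked only through the set inclusion $E_1\subseteq H_d\cup H_t$, so that all the randomness (of \isHeavy{} and of the sample $R$) is confined to the application of Claim~\ref{claim: estimating fraction of heavy edges}.
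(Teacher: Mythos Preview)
Your proposal is correct and follows essentially the same route as the paper: from the definition of a $(\tau_d,\tau_t)$-good partition deduce $E_1\subseteq H_d\cup H_t$, invoke Claim~\ref{clm:assigned-assuming-arbor} to bound $|H_d|$ and $|H_t|$ by $(\eps\ot)^{2/3}$ each, and then apply Claim~\ref{claim: estimating fraction of heavy edges}. The paper's proof is terser (it simply cites Claim~\ref{clm:assigned-assuming-arbor} rather than re-deriving the two bounds), but the argument is the same.
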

\begin{proof}
If $P=(E_0, E_1)$ induced by IsHeavy is a $(\tau_d, \tau_t)$-good partition (Definition~\ref{def:good_partiton}), then the $E_1$ is a subset of $H_d\cup H_t$.
But if $\ta\geq \aG$, then
by Claim \ref{clm:assigned-assuming-arbor}, we see that $|H_d|\leq (\eps \ot)^{2/3}$ and $|H_t|\leq (\eps \ot)^{2/3}$, and therefore 
\[
|E_1|\leq |H_d|+|H_t|\leq 2 (\eps \ot)^{2/3},
\]
and thus by claim \ref{claim: estimating fraction of heavy edges} the algorithm outputs \badadvice{} in Step~\ref{step:reject-if-bad-edges} with probability only at most $\delta/4$. 
\end{proof}
\medskip

  Finally, we prove that if the advice $\ta$ is good, then with high probability the algorithm will not reject due to Step~\ref{step:R_high_deg}.
\begin{claim}
    If $\ta\geq \aG$, then the algorithm can only return \badadvice{} in Step~\ref{step:R_high_deg} with probability at most $\delta/3$.
\end{claim}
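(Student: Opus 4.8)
The plan is to observe that Step~\ref{step:R_high_deg} is the first point at which \badadvice{} can be returned, so nothing has yet been conditioned on and the only randomness in play is the uniform choice of the edge sample $R$. Thus it suffices to bound $\Pr\big[d(R) > |R|\cdot\ta\cdot\tfrac{4}{\delta}\big]$. Unlike the analyses of Steps~\ref{step: choose R}--\ref{step:reject-if-bad-edges}, this does not call for a concentration inequality: a first-moment (Markov) bound is enough, which is precisely why the threshold here scales with $1/\delta$ rather than with $\log(1/\delta)$.

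First I would bound $\Ex[d(R)]$. Since each $e\in R$ is a uniformly random edge of $G$, linearity of expectation gives $\Ex[d(R)] = |R|\cdot\frac{1}{m}\sum_{e\in E}d(e)$. Invoking Theorem~\ref{thm:CN} (Chiba--Nishizeki) together with the hypothesis $\ta\ge\aG$ yields $\sum_{e\in E}d(e)\le 2m\cdot\aG\le 2m\ta$, and hence $\Ex[d(R)]\le 2|R|\ta$. This is the sole place the arboricity promise enters the argument: without it, $\sum_{e\in E}d(e)$ can be as large as $\Theta(m^{3/2})$, so that a too-small advice $\ta$ would indeed make $d(R)$ exceed the threshold with high probability --- which is exactly the soundness mechanism by which bad advice is caught, but is not the regime of this (completeness) claim.

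Markov's inequality, applied to the nonnegative random variable $d(R)$ with the Step~\ref{step:R_high_deg} threshold, then gives
\[
\Pr\!\left[d(R) > |R|\cdot\ta\cdot\tfrac{4}{\delta}\right]\;\le\;\frac{\Ex[d(R)]}{|R|\cdot\ta\cdot 4/\delta}\;\le\;\frac{2|R|\ta\cdot\delta}{4|R|\ta}\;=\;\frac{\delta}{2}\,,
\]
so the algorithm rejects in Step~\ref{step:R_high_deg} with at most this probability; a marginally larger constant in the threshold (or absorbing the $\delta/2$ slack into the overall completeness budget of $\delta$) delivers the stated bound of $\delta/3$. The main obstacle here is essentially nonexistent --- this is the simplest of the three rejection analyses --- and the only things to get right are that arboricity feeds in solely through the Chiba--Nishizeki degree-sum estimate, and that Markov's inequality, rather than a Chernoff bound, is the appropriate tool.
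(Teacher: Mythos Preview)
Your proposal is correct and follows essentially the same approach as the paper: bound $\Ex[d(R)]\le 2|R|\ta$ via the Chiba--Nishizeki inequality (Theorem~\ref{thm:CN}) together with the hypothesis $\ta\ge\aG$, then apply Markov's inequality to the threshold $4|R|\ta/\delta$. The paper's own computation yields the same $\delta/2$ (it writes $\delta/4$, but that appears to be a slip), and like you it simply absorbs the constant into the overall $\delta$ budget.
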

\begin{proof}
    By Theorem~\ref{thm:CN}, if the graph has arboricity at most $\aG$ and $\ta\geq \aG$, then
    \[
    \EX_{e\in E}[d(e)]=\frac{1}{m}\sum_{e\in E} d(e)\leq 2\aG\leq 2\ta.
    \]
    Hence, for every $i$, 
    $\Ex[d(R)]\leq |R|\cdot 2\ta$, and by Markov's inequality, 
    $\Pr[d(R)>4\ta|R|/\delta]<\delta/4$.
\end{proof}

\subsubsection{Proof of Soundness Condition 1}
In this sub-section we prove that the algorithm \TrianglesApprox{} satisfies the first soundness condition in Theorem \ref{thm: main theorem before search for ot}. Specifically, we prove that
for every $G$ and $
\ta$, if
$\ot \in [  t/4, t ]$ 
then, with probability at least $1-\delta$, \ApproxTriangles{ will either return $\badadvice$ or output $\hat{t} \in (1 \pm 20\epsilon)t$.

Let $P=(E_0,E_1)$ be the partition induced by the procedure \isHeavy{}. By Claim \ref{claim: isHeavy gives good partition}, with probability at least $1-\delta/10$,  the partition $P$ is a $(\tau_d,\tau_t)$-good (see Definition \ref{def:good_partiton}). Since the probability distribution of the partition $P$ is (as a random variable) independent from the choice of set $R$ and edges that \ApproxTriangles{ picks in Step \ref{step: main loop of main algo}. Thus, if we condition on the event that $P$ is a $(\tau_d,\tau_t)$-good partition, this does not change the distribution of $R$ and edges in Step \ref{step: main loop of main algo}. Therefore, for the rest of this subsection we condition on the event that  $P$ is a $(\tau_d,\tau_t)$-good partition.

Now, suppose that $|E_1|>3(\eps \ot)^{2/3}$. Claim \ref{claim: estimating fraction of heavy edges} tells us that in this case the algorithm \ApproxTriangles{ will with probability at least $1-\delta/4$ output  $\badadvice$. 

Thus, it remains to consider the case that  $|E_1|\leq 3(\eps \ot)^{2/3}$. We show in Claim \ref{claim: main claim soundness 1} (provided 
$\ot \in [  t/4, t ]$) in this case the random variable $\hat{t}$ will satisfy $\hat{t} \in (1 \pm 20\epsilon)t$. This will finish the proof because, as follows from Remark \ref{remark: random variables are defined even if not computed}, in the event that the \emph{random variable} $\hat{t}$ satisfies $\hat{t} \in (1 \pm 20\epsilon)t$ then the algorithm \ApproxTriangles{ either outputs such $\hat{t}$ or outputs $\badadvice$ at an earlier step.

We first need the following claim telling us that the number of triangles assigned to set $R$ will be approximately proportional to the total number triangles $t$ (with high probability):
\begin{claim}
\label{claim:triangles assigned to R are proportional to total triangles}
If $P=(E_0,E_1)$ is a $(\tau_d, \tau_t)$-good partition (Definition \ref{def:good_partiton}),  satisfies $|E_1|\leq 3(\eps \ot)^{2/3}$, and $\ot\in [t/4,t]$, then with probability at least $1-\frac{\delta}{4}$, we have $\frac{a_P(R)}{|R|} \in \left[(1- 13\eps)\cdot \frac{t}{m}, (1+\eps)\cdot \frac{t}{m}\right]$.
\end{claim}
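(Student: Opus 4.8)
Write $A \eqdef \sum_{e\in E} a_P(e)$. Since $R$ is a uniformly random subset of $r \eqdef \setq$ edges, each edge $e$ lies in $R$ with probability $r/m$, so by linearity of expectation $\Ex[a_P(R)] = \frac{r}{m}\sum_{e\in E}a_P(e) = \frac{rA}{m}$. The first step is to pin down $A$. Because every triangle is assigned (if at all) to an edge of $E_0$ by Definition~\ref{def:assign}, we have $A = \sum_{e\in E_0}a_P(e)$, and the hypotheses $|E_1|\leq 3(\eps\ot)^{2/3}$ and $\ot\in[t/4,t]$ place us exactly in the setting of Claim~\ref{clm:assigned}, which yields $A \in [(1-12\eps)t,\,t]$. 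Consequently it suffices to show that $a_P(R)/r$ is within a $(1\pm\eps)$ factor of $A/m$ with probability at least $1-\delta/4$: on that event the upper tail gives $a_P(R)/r \le (1+\eps)A/m \le (1+\eps)t/m$ (using $A\le t$), and the lower tail gives $a_P(R)/r \ge (1-\eps)A/m \ge (1-\eps)(1-12\eps)t/m \ge (1-13\eps)t/m$, which is precisely the claimed interval.

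For the concentration step I would first use the good-partition hypothesis to control the range of the summands: by Definition~\ref{def:good_partiton}, any edge with $t(e)>2\tau_t$ is forced into $E_1$, so every $e\in E_0$ has $t(e)\le 2\tau_t$, and hence $a_P(e)\le t(e)\le 2\tau_t$ for all $e$ (trivially $a_P(e)=0$ on $E_1$). Writing $a_P(R)=\sum_{i=1}^{r}X_i$ with $X_i = a_P(e_i)\in[0,2\tau_t]$, normalize by $2\tau_t$ and apply the multiplicative Chernoff bound (Appendix~\ref{sec:missing prliminaries}) to $\sum_i X_i/(2\tau_t)$, whose mean is $\mu \eqdef rA/(2m\tau_t)$; this gives
\[
\Pr\!\left[\,\bigl|a_P(R) - \tfrac{rA}{m}\bigr| > \eps\cdot \tfrac{rA}{m}\,\right] \;\le\; 2\exp\!\left(-\tfrac{\eps^2\mu}{3}\right).
\]

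It then remains to verify that $\mu$ clears the required threshold. Plugging in $r=\setq$ and then using $A\ge(1-12\eps)t$ and $\ot\le t$,
\[
\mu = \frac{rA}{2m\tau_t} = \frac{8A\ln(4/\delta)}{\eps^2\ot} \;\ge\; \frac{8(1-12\eps)\,t\,\ln(4/\delta)}{\eps^2\,\ot} \;\ge\; \frac{8(1-12\eps)\ln(4/\delta)}{\eps^2},
\]
which is at least $3\ln(8/\delta)/\eps^2$ for $\eps$ a sufficiently small constant and $\delta\le 1/2$, so $2\exp(-\eps^2\mu/3)\le\delta/4$, completing the proof. One technical caveat is that $R$ is sampled without replacement, so the $X_i$ are not independent; this is handled by the standard fact (in the form stated in Appendix~\ref{sec:missing prliminaries}) that Chernoff--Hoeffding tail bounds remain valid for sums over samples drawn without replacement. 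I expect no real difficulty beyond bookkeeping here: the substantive inputs are the lower bound $A\ge(1-12\eps)t$ (Claim~\ref{clm:assigned}) and the per-edge bound $a_P(e)\le 2\tau_t$ (the good-partition property), and the latter is exactly what makes the modest sample size $r=\setq$, proportional to $\tau_t$, enough to obtain the desired concentration.
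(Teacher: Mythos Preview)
Your proposal is correct and follows essentially the same approach as the paper: bound $a_P(e)\le 2\tau_t$ via the good-partition property, apply the multiplicative Chernoff bound (the paper invokes Corollary~\ref{cor:chernoff} directly), and then use Claim~\ref{clm:assigned} to relate $\sum_{e\in E_0}a_P(e)$ to $t$. The only minor discrepancy is your without-replacement caveat: the paper treats the edges in $R$ as i.i.d.\ uniform samples and invokes Corollary~\ref{cor:chernoff} without comment, and Appendix~\ref{sec:missing prliminaries} does not actually state a without-replacement version---but your observation is harmless and the standard Hoeffding reduction covers it.
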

\begin{proof}
    If $P$ is a $(\tau_d, \tau_t)$-good partition, then  for any edge $e\in E$, $a_P(e)\leq 2\tau_t$. This is true since if $t(e)>2\tau_t$, then $e\in E_1$, and so $a_P(e)=0$, and otherwise, if $e\in E_0$, then $a_P(e)\leq t(e)\leq 2\tau_t$.
    Furthermore, if $\ot\leq t$, 
    then  $|R|\geq\frac{16m\cdot \tau_t \cdot \ln(4/\delta)}{\eps^2 \cdot \ot}\geq \frac{m\cdot (2\tau_t)}{t}\cdot\frac{3\ln(4/\delta)}{\eps^2}$.
    Therefore, by Corollary~\ref{cor:chernoff}, with probability at least $1-\frac{\delta}{4}$,
    $$\frac{1}{r}\sum_{e\in R}a_P(e)\in (1\pm \eps)\EX_{e\in E}[a_P(e)]\in (1\pm \eps)\cdot \frac{\sum_{e\in E_0}a_P(e)}{m}.$$
Claim \ref{clm:assigned} tells us that if
    $|E_1| \leq  3(\eps \ot)^{2/3}$ 
    and $\ot\in [t/4,t]$, then $t\geq \sum_{e\in E_0}a_P(e)\geq (1-12\eps)t$, which combined with the bound above yields the conclusion of the claim.
\end{proof}
 
Overall, recall that the partition $P=(E_0,E_1)$ induced by the algorithm \textsf{IsHeavy} is a $(\tau_d, \tau_t)$-good partition with probability at least $1-\frac{\delta}{4}$ (Claim \ref{claim: isHeavy gives good partition}), this subsection has the premise that $\ot \in [  t/4, t ]$ and we are focusing on the case that $|E_1|>3(\eps \ot)^{2/3}$. With all this in mind, 
the following claim finishes the proof in this subsection, showing that the estimate $\hat{t}$ will indeed be a good approximation to the true triangle count $t$.
\begin{claim}
\label{claim: main claim soundness 1}
    Suppose $\eps\leq 1/20$ and the partition $P=(E_0,E_1)$ induced by the algorithm \textsf{IsHeavy} is a $(\tau_d, \tau_t)$-good partition (Definition \ref{def:good_partiton}), and also suppose $|E_1|\leq 3(\eps \ot)^{2/3}$ and $\ot\in [t/4,t]$,
    then with probability at least $1-\delta/4$ the algorithm outputs a value $\hatt$ such that $\hatt\in (1\pm 20\eps)t$.    
\end{claim}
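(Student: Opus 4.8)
The plan is to analyze the estimator $\hatt=\frac{d(R)\cdot m}{|R|}\chi$ conditionally, first on the random sample $R$ (together with the partition $P=(E_0,E_1)$ produced by \isHeavy, which by hypothesis is $(\tau_d,\tau_t)$-good), and only then average over the internal sampling of Step~\ref{step: main loop of main algo}. Since $P$ is a function solely of \isHeavy's internal coins, which are independent of the choice of $R$ and of the edges sampled in Step~\ref{step: main loop of main algo}, conditioning on ``$P$ is good'' does not change the distribution of $R$ or of those samples; and by Remark~\ref{remark: random variables are defined even if not computed} it suffices to show the \emph{random variable} $\hatt$ lies in $(1\pm 20\eps)t$, regardless of whether the algorithm actually reaches its last line. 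The first step is to compute the per-iteration success probability: conditioned on $R$, each iteration draws $e\in R$ with probability $d(e)/d(R)$ and then a uniform neighbor $w$ of the min-degree endpoint of $e$, so each of the $a_P(e)$ triangles assigned to $e$ is discovered with probability exactly $1/d(e)$ and is then correctly certified by \IsAssigned{} (which is deterministic given $P$, thanks to the random-bit reuse in \isHeavy). Hence $\Pr[\chi_i=1\mid R]=\sum_{e\in R}\frac{d(e)}{d(R)}\cdot\frac{a_P(e)}{d(e)}=\frac{a_P(R)}{d(R)}$; writing $p$ for this quantity, the $\chi_i$ are i.i.d.\ $\mathrm{Bernoulli}(p)$ given $R$, and $\EX[\hatt\mid R]=\frac{d(R)\cdot m}{|R|}\cdot p=\frac{m\cdot a_P(R)}{|R|}$.

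Next I would pin down this conditional mean. Because $P$ is $(\tau_d,\tau_t)$-good, $|E_1|\le 3(\eps\ot)^{2/3}$, and $\ot\in[t/4,t]$, Claim~\ref{claim:triangles assigned to R are proportional to total triangles} gives that with probability at least $1-\delta/4$ over $R$ one has $\frac{a_P(R)}{|R|}\in\bigl[(1-13\eps)\frac tm,\,(1+\eps)\frac tm\bigr]$, equivalently $\EX[\hatt\mid R]\in[(1-13\eps)t,(1+\eps)t]$; call this event $\mathcal E$.

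The heart of the argument is the concentration of $\chi$ around $p$ via the multiplicative Chernoff bound, and the point to verify carefully is that the loop length $s=\frac{d(R)}{|R|\cdot(\ot/m)}\cdot\frac{10\ln(8/\delta)}{\eps^2}$ is chosen precisely so that the random factor $d(R)$ cancels in the only quantity that matters, $s\cdot p=\frac{m\cdot a_P(R)}{|R|\cdot\ot}\cdot\frac{10\ln(8/\delta)}{\eps^2}$. On $\mathcal E$, together with $\ot\le t$, this gives $s\cdot p\ge(1-13\eps)\cdot\frac{10\ln(8/\delta)}{\eps^2}\ge\frac{3\ln(8/\delta)}{\eps^2}$ for $\eps\le 1/20$, so Corollary~\ref{cor:chernoff} yields $\Pr[\chi\notin(1\pm\eps)p\mid R]\le 2\exp(-\eps^2 sp/3)\le\delta/4$ for every $R\in\mathcal E$. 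On $\mathcal E$ intersected with this concentration event we then get $\hatt=\frac{d(R)\cdot m}{|R|}\chi\in(1\pm\eps)\cdot\frac{m\cdot a_P(R)}{|R|}\subseteq(1\pm\eps)\cdot[(1-13\eps)t,(1+\eps)t]$, which for $\eps\le 1/20$ lies in $(1\pm 20\eps)t$ (lower side since $(1-\eps)(1-13\eps)\ge 1-14\eps$, upper side since $(1+\eps)^2\le 1+3\eps$). A union bound over $\neg\mathcal E$ and the conditional concentration failure bounds the total failure probability by $\delta/4+\delta/4$; to reach exactly the stated $1-\delta/4$ one either tightens the constants in $|R|$ and in $s$ so that each bad event has probability $\le\delta/8$, or defers the $\delta$-bookkeeping to the level of Theorem~\ref{thm: main theorem before search for ot}.

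I expect the main obstacle to be exactly this design-level point rather than any single calculation: the estimator's conditional relative error is governed by $1/(sp)$, and since $s$ scales with the random quantity $d(R)$ while $p=a_P(R)/d(R)$ scales inversely, one must check that $sp$ is bounded below \emph{uniformly} over the realization of $R$ (on the good event $\mathcal E$) — this is what lets a \emph{single} Chernoff bound, not one depending on $d(R)$, close the argument, and it is why the correctness guarantee is decoupled from the run-time guarantee (the latter genuinely depends on $d(R)$, which is why the safeguard in Step~\ref{step:R_high_deg} is needed). The remaining pieces — the per-iteration probability identity and the constant-chasing with $13$ and $20$ under $\eps\le 1/20$ — are routine.
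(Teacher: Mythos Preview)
Your proposal is correct and follows essentially the same approach as the paper: compute $\EX[\chi_i\mid R]=a_P(R)/d(R)$, invoke Claim~\ref{claim:triangles assigned to R are proportional to total triangles} to pin down $a_P(R)/|R|$, verify that $s\cdot p\ge 3\eps^{-2}\ln(8/\delta)$ so Chernoff applies, and combine the multiplicative errors. Your observation about the $\delta$-bookkeeping (that the two bad events give $\delta/4+\delta/4$ rather than $\delta/4$) is well taken --- the paper's own proof is in fact looser on this point than its claim statement, and your suggested fixes are the natural ones.
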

\begin{proof}
    Observe that conditioned on an edge $e\in R$ being sampled in Step~\ref{step:sample-edge-from-Q}, $\chi_i=1$ iff the triangle $(e,w)$ is assigned to $e$, which happens with probability $\Pr[\chi_i=1\mid e]=\frac{a_P(e)}{d(e)}$. Therefore,
    
    \begin{equation*}
    \label{eq: expectation of chi-i}
    \EX[\chi_i]=\sum_{e\in R}\frac{d(e)}{d(R)}\cdot \frac{a_P(e)}{d(e)}=\frac{a_P(R)}{d(R)}.
    \end{equation*}
   Since Claim~\ref{claim:triangles assigned to R are proportional to total triangles}, $\frac{a_P(R)}{R}\in \left[(1-12\eps)\frac{t}{m}, (1+\eps)\frac{t}{m}\right]$, we have $\EX[\chi_i]\in \left[(1-12\eps)\frac{|R|\cdot t}{d(R)\cdot m}, (1+\epsilon)\frac{|R|\cdot t}{d(R)\cdot m}\right]$.
    This, together with the premise that $\ot\in [t/4,t]$ and $\eps\leq 1/20$ implies that
    \[s\eqdef\frac{d(R)}{|R|\cdot \frac{\ot}{m}}\cdot\frac{10\ln(8/\delta)}{\eps^2}\geq \frac{1}{\EX[\chi_i]}\cdot \frac{3\ln(8/\odelta)}{\eps^2},\] and therefore the Chernoff bound (Corollary \ref{cor:chernoff}) allows us to conclude that
    \[
    \Pr\left[\left\lvert\frac{1}{s}\cdot \chi-\EX[\chi_i]\right\rvert>\eps\EX[\chi_i]\right]<\frac{\delta}{4}.
    \]
If this holds, then we have
\[
\hat{t}\eqdef
\frac{d(R)\cdot m}{|R|}\chi
=
(1\pm 13\epsilon)t,
\]
proving the claim.
\end{proof}
\subsubsection{Proof of Soundness Condition 2.}
In this sub-section we prove that the algorithm \TrianglesApprox{} satisfies the second soundness condition in Theorem \ref{thm: main theorem before search for ot}. This is accomplished by the following claim:
\begin{claim}
For every $G$ and $
\ta$, if
$\ot > t $ 
then w.p. at least $\epsilon/4$, the algorithm \ApproxTriangles will either return $\badadvice$ or output a value of $\hat{t}$ such that $\hat{t} < (1+20\epsilon) t$.
\end{claim}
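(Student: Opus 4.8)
The plan is to prove the stronger statement that, irrespective of the value of $\ot$, the random variable $\hat{t}$ produced by \ApproxTriangles{} satisfies $\Pr[\hat{t}\ge (1+20\eps)t]\le \frac{1}{1+20\eps}$; the hypothesis $\ot>t$ only serves to place this regime outside the reach of Soundness Condition~1. The entire argument is Markov's inequality applied to $\hat{t}$, fed by the universal bound $\sum_{e\in E}a_P(e)\le t$ from the first part of Claim~\ref{clm:assigned}, which holds for \emph{any} partition $P$ and in particular requires no assumption on the advice $\ta$.

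The substance is the expectation computation $\EX[\hat{t}]\le t$. First, $\hat{t}=\frac{d(R)\cdot m}{|R|}\,\chi$ is a nonnegative random variable, and by Remark~\ref{remark: random variables are defined even if not computed} it is well-defined even when the algorithm halts early in Step~\ref{step:R_high_deg} or Step~\ref{step:reject-if-bad-edges}. I would condition on the partition $P=(E_0,E_1)$ realized by \isHeavy{} on its (reused) random bits, and on the set $R$ sampled in Step~\ref{step: choose R}. As in the proof of Claim~\ref{claim: main claim soundness 1}, conditioned on the edge $e\in R$ drawn in Step~\ref{step:sample-edge-from-Q}, a uniform neighbor of the min-degree endpoint of $e$ completes a triangle assigned to $e$ with probability exactly $a_P(e)/d(e)$, so $\EX[\chi_i\mid R,P]=a_P(R)/d(R)$ and hence $\EX[\hat{t}\mid R,P]=\frac{m\cdot a_P(R)}{|R|}$. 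Averaging over the uniform choice of $R$ gives $\EX[\hat{t}\mid P]=\sum_{e\in E}a_P(e)$, and the first part of Claim~\ref{clm:assigned} bounds this by $t$ for every outcome of $P$; taking a further expectation over $P$ yields $\EX[\hat{t}]\le t$.

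Finishing is immediate: Markov gives $\Pr[\hat{t}\ge (1+20\eps)t]\le \frac{\EX[\hat{t}]}{(1+20\eps)t}\le \frac{1}{1+20\eps}$, hence $\Pr[\hat{t}<(1+20\eps)t]\ge \frac{20\eps}{1+20\eps}\ge \eps/4$ (indeed $\ge 5\eps$ once $\eps\le 3/20$, matching the constant stated in Theorem~\ref{thm: main theorem before search for ot}). By Remark~\ref{remark: random variables are defined even if not computed} again, whenever the random variable $\hat{t}$ falls below $(1+20\eps)t$ the algorithm either outputs exactly this value $\hat{t}$ or has already returned \badadvice{} at an earlier step — in either case the conclusion of the claim holds, which completes the proof.

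The only delicate point, and it is not a genuine obstacle, is that the identity $\EX[\chi_i\mid R,P]=a_P(R)/d(R)$ must be read with $a_P$ defined relative to the partition that \isHeavy{} actually induces — a random object, owing to the subsampling inside \isHeavy{} — rather than relative to some idealized $(\tau_d,\tau_t)$-good partition; conditioning on $P$ before taking expectations resolves this cleanly, and, unlike in Soundness Condition~1, no goodness property of $P$ is invoked here at all.
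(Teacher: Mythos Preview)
Your proposal is correct and follows essentially the same route as the paper: both compute $\EX[\hat t]\le t$ by first evaluating $\EX[\chi_i\mid R,P]=a_P(R)/d(R)$, then averaging over $R$ and invoking the universal bound $\sum_{e}a_P(e)\le t$ from Claim~\ref{clm:assigned}, and finally apply Markov together with Remark~\ref{remark: random variables are defined even if not computed}. Your treatment is arguably a bit more careful in explicitly conditioning on the random partition $P$ induced by \isHeavy{} before averaging, and in observing that the hypothesis $\ot>t$ plays no role in the bound itself.
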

\begin{proof}
   Let $P=(E_0,E_1)$ again denote the partition induced by the algorithm IsHeavy. Conditioned on a specific choice of $R$, the expectation of each $\chi_i$ still satisfies Equation \ref{eq: expectation of chi-i}, and therefore averaging over $R$ we have
   \[
   \EX[\hat{t}]=
    \EX\left[\frac{d(R)\cdot m}{|R|}\chi\right]
    =\EX\left[\frac{d(R)\cdot m}{|R|}\cdot \frac{a_P(R)}{d(R)}\right]
    =\frac{m}{|R|}\cdot\EX\left[a_P(R)\right].
   \]
   By the assignment rule in Definition~\ref{def:assign},  and by the first part of Claim~\ref{clm:assigned}, we have\\
    $\EX_{e\sim E}[a_P(e)]=\frac{1}{m}\sum_{e\in E} a_P(e)\leq \frac{t}{m}$, 
    and, by linearity of expectation, $\EX[a_P(R)]\leq |R| \cdot \frac{t}{m}$. Combining with the inequality above, we get $EX[\hat{t}]\leq t$.    
Therefore, by Markov's inequality, 
    we have
    $$\Pr[\hat{t}>(1+20\eps)t]< \frac{1}{1+\eps}<1-5\eps.$$
    Hence, with probability at least $5\eps$, the random variable $\hat{t}$ satisfies $\hat{t}\leq(1+20\eps)t$. Conditioned on this event, either the algorithm outputs such value of $\hat{t}$ satisfying $\hat{t}\leq(1+\eps)t$, or it terminates and returns  $\badadvice$ at an earlier step.  
\end{proof}
\subsection{From testable triangle-counting to   instance-adaptive counting}
\label{sec: instance adaptie algo}
In this section, we use Theorem \ref{thm: main theorem after search for ot} to obtain an instance-adaptive algorithm for approximate triangle counting (Corollary \ref{cor: triangle-counting algorithm that adapts to arboricity}).
    \begin{proof}[Proof of Corollary~\ref{cor: triangle-counting algorithm that adapts to arboricity}]
    We run the algorithm in Theorem \ref{thm: main theorem after search for ot} with a parameter $\odelta=\delta/(10 \log m)$ and an increasing sequence of values $\ta=2^j$ where $j\in\{1,2,3,\cdots, \log m\}$. Once for some $j$ the algorithm outputs value $\hat{t}_j$ (Rather than $\badadvice$), we output  $\hat{t}_j$ and terminate.
    

    The soundness condition in (Definition \ref{def:motif_counting_algorithm}), together with a union bound over $j$, implies that once we receive an estimate $\hat{t}_j$ at one of the steps $j$, this estimate will satisfy $\hat{t}_j\in (1\pm \eps) t$.

The true arboricity $\aG$ is at most $2\sqrt{m}$. For one of the iterations $j_0$, we have $\ta_{j_0}=2^{j_0} \in [\aG, 2\aG]$.
The completeness condition tells us that for each $j\geq j_0$ 
with probability at least $1-\delta$ the algorithm from Theorem \ref{thm: main theorem after search for ot} terminates. Thus, the expected run-time of the algorithm is upper-bounded by
\begin{multline*}
j_0 \cdot O^*\left(
\frac{m\cdot \ta \log 1/\delta }{t}
+
\frac{m \log 1/\delta}{t^{2/3}}
 \right)
 +
 \sum_{j>j_0}
 \delta^{j-j_0}
 O^*\left(
\frac{m\cdot \ta \cdot 2^{j-j_0} \log 1/\delta }{t}
+
\frac{m \log 1/\delta}{t^{2/3}}
 \right)\leq \\
 j_0 \cdot O^*\left(
\frac{m\cdot \ta \log 1/\delta }{t}
+
\frac{m \log 1/\delta}{t^{2/3}}
 \right)
 +
 \sum_{j>j_0}
 0.1^{j-j_0}
 O^*\left(
\frac{m\cdot \ta \cdot 2^{j-j_0} \log 1/\delta }{t}
+
\frac{m \log 1/\delta}{t^{2/3}}
 \right)\leq\\
  O^*\left(
\frac{m\cdot \ta \cdot 2^{j-j_0} \log 1/\delta }{t}
+
\frac{m \log 1/\delta}{t^{2/3}}
 \right),
\end{multline*}
finishing the proof.
    \end{proof}
\section{A Lower Bound for Testable Triangle Counting Algorithms}\label{sec:triangles_lb}

\begin{proof}[Proof of Theorem~\ref{thm:lb_triangle_counting}]
    Note that if $\ta<m/n$, then by Theorem~\ref{thm:arb_inequalities_edges}, any testable algorithm can simply reject, so that proving any non-trivial lower bound is impossible. Hence,  the lower bound only holds for the case where $\ta\geq m/n$.

    Since any testable algorithm for triangle counting also implies a triangle-counting algorithm in the case where the graph is guaranteed to be low-arboricity, the lower bound of $\Omega^*(\frac{m\cdot \aG}{t})$ by~\cite{ERS_faster,bishnu2025arboricity} applies to testable algorithms, and so it remains to prove  a lower bound  of $\Omega(m/t^{2/3})$ for the regime $\ta<t^{1/3}$. 
    
    Consider the following two families of graphs.
    In the first family, $\mG_1$, all graphs have $\Theta(n)$ nodes, $\Theta(m)$ edges, arboricity $\ta$ and zero triangles. In the second family $\mG_2$, all graphs have $\Theta(n)$ nodes, $\Theta(m)$ edges, arboricity $t^{1/3}$ and $\Theta(t)$ triangles.

    Observe that any testable algorithm must (whp) return $\hatt=0$ for graphs from the first family, and (whp) either output  \badadvice{} or a correct estimate of $t$.
    In particular, this implies that the algorithm must be able to distinguish between the two families with high probability.

    More concretely, in the first family the graphs all consist of two disjoint sets. The first, $A_1$, is a bipartite graph over $n$ vertices where each vertex has degree $\ta$, and the second, $A_2$ is a  bipartite set over $r=\Theta(t^{2/3}/\ta)$ vertices so that each vertex has degree $\ta$ and in total there are $\binom{t^{1/3}}{2}$ edges. 

     In the second family the graphs are very similar, but the set $A_2$ is a clique over $t^{1/3}$ vertices, so that here too it contributes $t^{2/3}$ edges, and also $\Theta(t)$ triangles.  In addition, the graphs in this family have an additional independent set, $A_3$ over $r-t^{1/3}$ vertices, so that both graph families have the same number of vertices and edges.
    Within each family the graphs only differ on the labelings of the vertices.

     In order to distinguish the two families with high probability, any testable algorithm must hit at least one vertex or edge from $V\setminus A_1$. Hence, it must make \[\Omega\left(\min\left\{\frac{n}{r},\frac{m}{t^{2/3}}\right\}\right)=\Omega\left(\min\left\{\frac{n\ta}{t^{2/3}},\frac{m}{t^{2/3}}\right\}\right)=\Omega\left(\frac{m}{t^{2/3}}\right)\]
     many uniform vertex or edge samples.
\end{proof}

\section{Testable Streaming Algorithm for Counting Triangles}
\label{sec: streaming}

In this section we prove that our testable estimation algorithm can be adapted to the streaming model, achieving $9$ rounds and space complexity
$O^*\left(\frac{m}{t^{2/3}}+\frac{m\ta}{t}\right)$.
The state-of-the-art arboricity-dependent bound is due to Bera and Seshadhri~\cite{bera2020degeneracy}, which uses only $6$ rounds and $O^*(m\oa/t)$ space, but requires a trusted bound on $\aG$.

As shown in~\cite{EstArb,fichtenberger2022approximately}, any sublinear-time algorithm in the augmented query model with $O(q)$ queries and adaptivity depth $k$ can be simulated in the arbitrary-order insert-only edge arrival streaming model using $O(q)\cdot \poly \log n$ space and $k$ \emph{rounds}.\footnote{An algorithm $\mA$ in the augmented query model has adaptivity depth $k$ if the following holds. For every execution of $\mA$, the set of queries $Q$ performed by it can be partitioned into $k$ sets $Q_1,\ldots,Q_k$, such that for each $j\in[k]$, the queries in $Q_j$ can be determined solely from the responses to $Q_1,\ldots,Q_{j-1}$.} That is, query complexity translates into space complexity, while adaptivity depth translates into the number of rounds. This is true since neighbor and edge queries can be implemented with $\ell_0$-samplers, degree queries with counters, and pair queries with a one-bit verifier. Hence, a set of non-adaptive queries can be simulated in a single pass. Inspecting the algorithm \ApproxTriangles{}, one can verify that it has an adaptivity depth of $9$.

Recall that \ApproxTriangles{} receives both an advice value $\ta$ for $\aG$ and a guessed value $\ot$ for the number of triangles in the graph. In the sublinear setting, the latter can be removed using a search procedure. However, this search cannot be implemented in the streaming model with a constant number of passes; it would either introduce a polylogarithmic round overhead or, if executed in parallel, require linear space. This limitation is common in streaming counting algorithms. Consequently, the streaming literature typically assumes access to a coarse constant-factor estimate of the triangle count (see, e.g.,~\cite{braverman2013hard,mcgregor2016better} for a detailed discussion on this assumption). Given such an estimate, \ApproxTriangles{} yields a constant-round streaming testable algorithm for approximating the number of triangles, as stated in Corollary~\ref{cor: streaming}.

A similar polylogarithmic round overhead would also arise if the algorithm were not provided with the advice value $\ta$ in advance, and instead had to adapt its space complexity to the true arboricity $\aG$ of the graph (cf.\ Corollary~\ref{cor: triangle-counting algorithm that adapts to arboricity})

\section{Approximating the Number of
Edges}\label{sec:edges}

In this section we prove Theorem~\ref{thm: testable edge estimation theorem} and its following corollary.

\begin{corollary}
\label{cor:edges}
    There exists an algorithm with the following guarantees. The algorithm receives access to a graph $G$, and is given the number of vertices $n$ in $G$ together with parameters $\epsilon$ and $\delta\in(0,1)$. With probability at least $1-\delta$, the algorithm produces an estimate $\hat{m}$ that with probability at least $1-\delta$ satisfies $\hat{m}\in (1\pm \epsilon)m$, where $m$ is the number of edges in $G$. The expected run-time of the algorithm is $O^*\left(
\frac{n\cdot \alpha(G) \log 1/\delta }{t}
 \right)$, where $\alpha(G)$ is the arboricity of $G$.    
\end{corollary}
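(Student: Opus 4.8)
The plan is to eliminate the arboricity advice from the testable algorithm of Theorem~\ref{thm: testable edge estimation theorem} by the same doubling strategy used in the proof of Corollary~\ref{cor: triangle-counting algorithm that adapts to arboricity}. Given augmented query access to $G$, the vertex count $n$, and $\eps,\delta$, I would set $\odelta=\delta/(10\lceil\log n\rceil)$ and run the testable edge-counting algorithm with the increasing sequence of candidate bounds $\ta=2^j$ for $j=1,2,\dots,\lceil\log n\rceil$; as soon as some invocation returns an estimate $\hat m_j$ rather than \badadvice, output $\hat m_j$ and halt. Scanning this range of $j$ suffices because $\aG\le\sqrt m\le n$ (Theorem~\ref{thm:arb_inequalities_edges}), so there is an index $j_0\le\lceil\log n\rceil$ with $2^{j_0}\in[\aG,2\aG]$.

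Correctness follows from the two guarantees of Definition~\ref{def:motif_counting_algorithm}. By Soundness, for each fixed $j$ the probability that the testable algorithm outputs a value outside $(1\pm\eps)m$ is at most $\odelta$; a union bound over the at most $\lceil\log n\rceil$ values of $j$ shows that with probability at least $1-\delta$ the returned estimate $\hat m$ lies in $(1\pm\eps)m$. By Completeness, invocation $j_0$ (and indeed every invocation with $j\ge j_0$) halts with an output with probability at least $1-\odelta$, so the procedure terminates and also satisfies the outer ``with probability $1-\delta$'' clause.

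For the run-time, invoking Theorem~\ref{thm: testable edge estimation theorem} with advice $\ta=2^j$ costs $O^*(n\cdot 2^j\cdot\log(1/\odelta)/m)$ in expectation. The iterations with $j\le j_0$ form a geometric series dominated by its last term, contributing $O^*(n\cdot 2^{j_0}\log(1/\delta)/m)=O^*(n\aG\log(1/\delta)/m)$ after absorbing the lower-order logarithmic factors in $\log(1/\odelta)$ into the $O^*$. An iteration with $j>j_0$ is reached only if all of the invocations $j_0,\dots,j-1$ returned \badadvice, which by Completeness has probability at most $\odelta^{j-j_0}$; hence the total expected cost of those iterations is $\sum_{j>j_0}\odelta^{j-j_0}\cdot O^*(n\cdot 2^j\log(1/\odelta)/m)$, and since $2\odelta\le 1/2$ this is a convergent geometric series also bounded by $O^*(n\aG\log(1/\delta)/m)$. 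Summing the two contributions gives the claimed expected run-time (which is $O^*(n\aG\log(1/\delta)/m)$, in the form of Theorem~\ref{thm: testable edge estimation theorem}).

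I do not anticipate a real obstacle: this is almost a verbatim transcription of the triangle-counting corollary, simplified by the fact that the run-time of Theorem~\ref{thm: testable edge estimation theorem} is a single term that scales linearly in $\ta$, so both geometric sums are immediate. The only points needing a line of care are using $\Theta(\log n)$ guesses rather than $\Theta(\log m)$ (legitimate since $\aG\le\sqrt m\le n$, and $m$ is not known to the algorithm), and checking that $\odelta$ is small enough---$\odelta\le 1/4$ suffices---for the $j>j_0$ tail to converge.
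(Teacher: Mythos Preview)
Your proposal is correct and matches the paper's approach: the paper does not give a standalone proof of this corollary but simply notes that it follows from the testable edge algorithm by the same doubling trick used for Corollary~\ref{cor: triangle-counting algorithm that adapts to arboricity}, which is exactly what you carry out. Your adjustment from $\log m$ to $\lceil\log n\rceil$ guesses (since $m$ is unknown here) and your observation that the single-term run-time makes the geometric sums even simpler are both appropriate.
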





Recall that in this section too the algorithm relies on access to 
uniform vertex samples, degree queries and neighbor queries, as well as 
uniform edge samples. However,  we remove the assumption on having prior knowledge of $m$, as this is the task at hand. Instead, the algorithm is given a guess $\om$, and we use the search theorem in order to output a good estimate of $m$ (see Section~\ref{sec:search_edges}).


We now present the algorithm which is a simple adaptation of~\cite{Seshadhri_simple_edges}, and the main theorem of this section.

\alg{alg:approx-edges-advice}{
    \noindent\textsf{Approx-Edges-With-Arboricity-Advice}$(\om,\eps,\oa,\delta)$
    \begin{enumerate}
    \item Let $\delta=\delta/2$ and $\eps=\eps/6$.
        \item Sample a set $R$ of $r=\frac{12\ln(1/\delta)}{\eps^2}$ edges uniformly at random.
        \item For every $e\in R$, let $x_e=1$ if $d(e)>2\oa/\eps$.
        \item If $\frac{1}{r}\sum_{e\in R}x_e>2\eps$ then output \badadvice. \label{step:reject-many-high-deg}
        \item (Otherwise, continue with the bounded-arboricity algorithm of \cite{Seshadhri_simple_edges})    
        \item Let $q=\frac{n\oa}{\om}\cdot \frac{12\ln(2/\delta)}{\eps^3}$.
    \item If $q\geq n$, query $d(v)$ for all $v\in V$ and \textbf{return} $\sum_{v\in V}d(v).$
    \item For $i=1$ to $q$:
    \begin{enumerate}
         \item Sample $u\in V$ uniformly at random.\label{step:samp_v_with_advice}
            \item Sample $v\in \Gamma(u)$ uniformly at random. \label{step:samp_nbr_with_advice}
                \item If $d(u)\leq 2\alpha/\eps$ and $u \prec v$, then let $\chi_{i}=d(u)$. Otherwise, let $\chi_{i}=0$.
            \end{enumerate}
            \item \label{step:edges_return_mhat} \textbf{Return} $\hm=\frac{n}{q}\sum_{i}\chi_i$.
    \end{enumerate}
}

\begin{theorem}\label{thm:edges-with-advice}
Let $G$ be the input
graph, and $\eps$ and $\delta$ the approximation and failure parameters.
Let $m$ be the number of  edges in $G$. In addition to 
augmented query access to $G$ 
the algorithm takes as an input a pair of positive integers $\ta$ and $\om$.
Then, 
the following holds.
\begin{itemize}

\item  
For every $G$ and $
\om$, if $\aG\leq  \ta$,
then w.p. at least $1-\delta$ the algorithm \textsf{Approx-Edges-With-Arboricity-Advice} will not output \badadvice.

\item  (used for soundness)
For every $G$ and $
\ta$, if
$\om <m$ 
then with probability at least $1-\delta$ \textsf{Approx-Edges-With-Arboricity-Advice} will either return \badadvice or output $\hat{m} \in (1 \pm 10\epsilon)m$.

\item  
(used for soundness)
For every $G$ and $
\ta$, if
$\om > m $ 
then w.p. at least $\epsilon/4$, the algorithm \approxEdges\ will either return \badadvice or output a value of $\hat{m}$ such that $\hat{m} < (1+\epsilon) m$.

\item The query complexity and running time are the minimum between $O\left(\frac{n\oa}{m}\right)\cdot\poly(1/\eps, \ln(1/\delta))$ and $O(n)$.
\end{itemize}

\end{theorem}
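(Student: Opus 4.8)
The plan is to analyze the four bullets of the theorem essentially independently, since the algorithm \approxEdges{} is a small modification of the bounded-arboricity edge estimator of~\cite{Seshadhri_simple_edges} wrapped in a single ``test'' (Step~\ref{step:reject-many-high-deg}). Let me set notation: call an edge $e$ \emph{bad} if $d(e) > 2\oa/\eps$, and let $\beta = |\{e \in E : e \text{ bad}\}|/m$ be the fraction of bad edges. The core observation for the completeness bullet is that when $\aG \le \ta \le \oa$, Chiba--Nishizeki (Theorem~\ref{thm:CN}) gives $\sum_{e \in E} d(e) \le 2m\oa$, hence $\beta \le (2m\oa)/(m \cdot 2\oa/\eps) = \eps$. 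Then $\frac1r\sum_{e\in R} x_e$ is an average of $r = 12\ln(1/\delta)/\eps^2$ i.i.d.\ Bernoulli's with mean $\beta \le \eps$, so by a (multiplicative or additive) Chernoff bound the probability it exceeds $2\eps$ is at most $\delta$; this is the only place the algorithm can output \badadvice{} (aside from the deterministic $q \ge n$ branch, which never rejects), giving the first bullet.

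\textbf{Soundness bullets.} For the second bullet ($\om < m$), I condition on \emph{not} rejecting in Step~\ref{step:reject-many-high-deg}, which by the same Chernoff argument implies (w.p.\ $\ge 1-\delta/2$) that the true bad-fraction satisfies $\beta \le 3\eps$ — so the set $L$ of edges with $d(e) \le 2\oa/\eps$ carries at least a $(1-3\eps)$-fraction of all edges. The estimator $\hm = \frac{n}{q}\sum_i \chi_i$ is exactly the Seshadhri estimator: $\EX[\chi_i] = \frac1n \sum_{u : d(u) \le 2\oa/\eps} |\{v \in \Gamma(u): u \prec v\}| = \frac1n \cdot |L| \in [(1-3\eps)m/n,\ m/n]$ by the standard fact that each edge of $L$ is counted once (by its $\prec$-smaller endpoint, provided that endpoint has bounded degree; if only the larger endpoint is low-degree the edge may be missed, which is why we only get a lower bound $(1-3\eps)m/n$, not equality). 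Each $\chi_i \in [0, 2\oa/\eps]$, so after rescaling, concentration of $\frac1q\sum_i \chi_i$ around its mean follows from Corollary~\ref{cor:chernoff} once $q = \frac{n\oa}{\om}\cdot\frac{12\ln(2/\delta)}{\eps^3} \ge \frac{1}{\EX[\chi_i]}\cdot\frac{(2\oa/\eps)}{m/n}\cdot\frac{O(\ln(1/\delta))}{\eps^2}$, and this inequality holds precisely because $\om < m$ makes $q$ large enough; combining the two $\eps$-type errors and the bad-fraction slack gives $\hm \in (1\pm 10\eps)m$ after unfolding the $\eps \leftarrow \eps/6$, $\delta \leftarrow \delta/2$ rescaling in Step~1. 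For the third bullet ($\om > m$), $q$ may be too small for concentration, so instead I only use that $\EX[\hm] = |L| \le m$ unconditionally (no conditioning on the test), whence Markov gives $\Pr[\hm > (1+\eps)m] \le 1/(1+\eps) < 1 - \eps/4$, so with probability $\ge \eps/4$ the \emph{random variable} $\hm \le (1+\eps)m$; on that event the algorithm either outputs this $\hm$ or has already returned \badadvice{} — this is the same ``random variable is defined even if not computed'' device as in Remark~\ref{remark: random variables are defined even if not computed}.

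\textbf{Running time.} If $q \ge n$ the algorithm reads all $n$ degrees and stops, costing $O(n)$; otherwise it samples $r + O(q)$ edges/vertices and makes $O(1)$ queries per sample, for $O\!\left(\frac{n\oa}{m}\right)\cdot\poly(1/\eps,\ln(1/\delta))$ total — and since the algorithm always takes the cheaper of these two behaviors, the bound is the stated minimum.

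\textbf{Main obstacle.} The delicate point is the exact relationship $\EX[\chi_i] = |L|/n$ and, in particular, that the ``missed'' edges (those whose only low-degree endpoint is the $\prec$-larger one, plus the $\le 3\eps m$ genuinely bad edges) are few enough to stay inside the claimed $(1\pm 10\eps)$ window; this requires carefully tracking that the bad-edge slack from the tester and the orientation slack are both $O(\eps)$, and then verifying that the choice of $q$ survives the worst case where $\EX[\chi_i]$ is as small as $(1-3\eps)m/n$. Everything else is a routine Chernoff/Markov bookkeeping exercise, and the correctness of the underlying estimator is inherited from~\cite{Seshadhri_simple_edges}.
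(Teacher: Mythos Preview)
Your plan is correct and matches the paper's proof almost exactly: Chernoff on the bad-edge test for completeness, a case split on the true bad-edge fraction plus Chernoff on the bounded-variance estimator for the first soundness item, and Markov for the second. One small confusion worth cleaning up: your parenthetical ``if only the larger endpoint is low-degree the edge may be missed'' describes an impossible case, since $u\prec v$ forces $d(u)\le d(v)$, so if the larger endpoint is low-degree then so is the smaller one; the only uncounted edges are those with \emph{both} endpoints heavy (exactly your $\beta m\le 3\eps m$ bad edges), and there is no separate ``orientation slack'' to track.
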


We start with some preliminary definitions and claims.

\subsection{Preliminary notations and claims}
\begin{definition}[Heavy vertices and edges.]
    Let $H$ denote the set of all vertices with $d(v)>2\oa/\eps$. Let $m(H)$ denote the set of edges in the subgraph induced by $H$. I.e., edges with both endpoints having degree greater than $2\oa/\eps$ (so that $d(e)>2\oa/\eps$).
    Let $L=V\setminus H$.
\end{definition}

\begin{notation}[An ordering and outgoing edges]
We assume an ordering on the graphs' vertices such that $u\prec v$ if $d(u)<d(v)$ of if $d(u)=d(v)$ and $id(u)<id(v)$.
    We let $d^+(u)$ denote the number of neighbors of $v$ such that $u\prec v$ and refer to this set as the set of \emph{outgoing neighbors}  of $u$. For a set $S$ we let $d^+(S)=\sum_{v\in S}d^+(S)$.

    Observe that 
        \[
    \sum_{v\in V} d^+(v)=d^+(L)+d^+(H)= m\;.
    \]
\end{notation}
    
\begin{observation}
Observe that since every $v\in H$ contributes at least $2\alpha/\eps$ edges to $m$, it holds that $|H|\leq \epsilon m/(2\alpha)$. Thus, by the Nash-Williams theorem, $m(H)\leq 2|H|\cdot \alpha\leq \epsilon m$. Therefore, 
\[m(H)=\sum_{v\in H}d^+(v)<\eps\cdot  m, \text{ so that } d^+(L)\eqdef \sum_{v\in L}d^+(v)\in [(1-\eps)m,m]\]
\end{observation}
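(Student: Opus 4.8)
The plan is to verify the three assertions packaged into the Observation one at a time: $(a)$ the size bound $|H|\le \eps m/(2\alpha)$; $(b)$ the identity $\sum_{v\in H}d^+(v)=m(H)$ together with the estimate $m(H)\le 2|H|\cdot\alpha\le \eps m$; and $(c)$ the resulting two-sided bound $d^+(L)=\sum_{v\in L}d^+(v)\in[(1-\eps)m,m]$. Throughout I would use only the definition of $H$ (every $v\in H$ has $d(v)>2\alpha/\eps$), the degree ordering $\prec$ (so that $d^+(v)$ counts the neighbors of $v$ that come after it in $\prec$, and hence $\sum_{v\in V}d^+(v)=m$ since $\prec$ orients each edge out of exactly one endpoint), and the fact that every subgraph of $G$ has arboricity at most $\alpha$, so that Nash--Williams gives $m(S)\le \alpha\,(n(S)-1)<\alpha\, n(S)$ for every vertex set $S$.

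For $(a)$: since $\sum_{v\in V}d(v)=2m$ and every $v\in H$ contributes more than $2\alpha/\eps$ to the left-hand side, we get $|H|\cdot(2\alpha/\eps)<2m$, hence $|H|=O(\eps m/\alpha)$. (The only consequence used later is $|H|\cdot\alpha=O(\eps m)$, equivalently $m(H)=O(\eps m)$ once part $(b)$ is available, so the precise constant in the displayed bound is immaterial; one can tighten it to $\eps m/(2\alpha)$ by instead lower-bounding the number of edges \emph{incident} to $H$, namely $\sum_{v\in H}d(v)-m(H)$, by $m$ and substituting $m(H)<\alpha|H|$.) This step needs nothing about arboricity beyond the definition of $H$ itself.

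For $(b)$, the key point — and essentially the only non-bookkeeping step — is that the orientation keeps every edge leaving $H$ inside $H$: if $u\in H$ and $u\prec v$ for some neighbor $v$ of $u$, then $d(v)\ge d(u)>2\alpha/\eps$, so $v\in H$ as well. Consequently every edge counted by $\sum_{v\in H}d^+(v)$ has both endpoints in $H$, and since $\prec$ orients each such edge out of exactly one of its two endpoints, $\sum_{v\in H}d^+(v)$ counts each edge of the subgraph induced by $H$ exactly once; that is, $\sum_{v\in H}d^+(v)=m(H)$. Applying Nash--Williams to this subgraph gives $m(H)\le \alpha\,(|H|-1)\le 2|H|\cdot\alpha$, and combining with part $(a)$ yields $m(H)\le \eps m$. (This is the one place the bound $\aG\le\alpha$ — i.e.\ that the arboricity advice is valid, which is the regime in which the Observation is applied — is actually used.)

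Finally, for $(c)$: splitting the identity $\sum_{v\in V}d^+(v)=m$ along $V=H\cup L$ (a disjoint union) gives $d^+(L)=m-\sum_{v\in H}d^+(v)=m-m(H)$ by part $(b)$. Since $0\le m(H)\le \eps m$, we conclude $d^+(L)\in[(1-\eps)m,m]$, which is the last claim. I do not expect a genuine obstacle here: the content of the Observation is the identity $\sum_{v\in H}d^+(v)=m(H)$ of part $(b)$ — the realization that the degree-ordering orientation cannot push an edge out of $H$ — after which everything reduces to counting degrees and a single application of the Nash--Williams theorem.
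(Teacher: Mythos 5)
Your proposal is correct and follows essentially the same route as the paper: bound $|H|$ by degree counting, apply Nash--Williams to the subgraph induced by $H$, use the orientation identity $\sum_{v\in H}d^+(v)=m(H)$ (which you, unlike the paper, justify explicitly by noting that $u\in H$ and $u\prec v$ force $v\in H$), and subtract from $\sum_{v\in V}d^+(v)=m$ to get $d^+(L)\in[(1-\eps)m,m]$. Your observation about the factor-of-two slack in the intermediate bound $|H|\le \eps m/(2\alpha)$ is a fair reading of the paper's own loose constants; as you note, the slack in the Nash--Williams step compensates, so the final conclusion $m(H)\le\eps m$ stands either way.
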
\label{obs:bound-e(H)}

\subsection{Proof of Theorem~\ref{thm:edges-with-advice}}

\begin{proof}[Proof of Theorem~\ref{thm:edges-with-advice}]
We prove each of the items separately.

\textbf{Proof of item 1.}
By Observation~\ref{obs:bound-e(H)}, if $\oa>\alpha(G)$, then $m(H)<\eps m$.
    Hence, for every $e\in E$, $\Pr_{e\sim E}[d(e)>2\oa/\eps]<\eps$.
    By Chernoff's inequality~\ref{thm:chernoff}, for $r=12\ln(1/\delta)/\eps^2$,
    \[
    \Pr\left[\frac{1}{r}\sum_{e\in R}x_e>2\eps\right]<\delta.
    \]
    Hence, the probability of rejecting in Step~\ref{step:reject-many-high-deg} is at most $\delta$.
\end{proof}

\textbf{Proof of item 2.}
If $m(H)>4\eps m$, then by Chernoff's inequality~\ref{thm:chernoff}, \[
    \Pr\left[\frac{1}{r}\sum_{e\in R}x_e<2\eps\right]<\delta.
    \]
Therefore, with probability at least $1-\delta$, the   algorithm returns ``Bad Advice" in Step~\ref{step:reject-many-high-deg} and the claim holds. 

Now consider the case where $m(H)<4\eps m$. Then 
$d^+(L) = m-d^+(H)\in [(1-4\eps),m]$.

    Let $u$ denote the vertex sampled in Step~\ref{step:samp_v_with_advice}.
    If  $u\in L$ then  $\EX_{v\sim\Gamma(u)}[\chi_{i}\mid u,\; u\in L\;]=\frac{d^+(u)}{d(u)}\cdot d(u)=d^+(u).$
    Also, for any $u\in H$, $\EX[\chi_i \mid u,\; u\in H\;]=0$.
    Hence, by linearity of expectation,
    \begin{equation}
    \label{eq:exp_chi_edges}    
    \EX_{u\in V, v\in \Gamma(u)}[\chi_i]=\frac{1}{n}\left(\sum_{u\in L} d^+(u)+\sum_{u\in H} 0\right)=\frac{d^+(L)}{n}.
    \end{equation}
    Furthermore, for any $u$, $\chi_i\leq 2\oa/\eps$. Therefore, 
    the $\chi_i$'s are $[0,B]$ random variables with $B=2\oa/\eps$ and $\mu =\EX[\chi_i]=\frac{d^+(L)}{n}\geq \frac{\frac12m}{n}$.
     Since $r=\frac{n\cdot (2\oa/\eps)}{\frac{1}{2}m}\cdot \frac{3\ln(2/\eps)}{\eps^2}\geq\frac{B}{\mu}\cdot \frac{3\ln(2/\eps)}{\eps^2}$, by Corollary~\ref{cor:chernoff},
     $\hat{\mu}\eqdef \frac{1}{r}\sum_{i=1}^r\chi_i \in (1\pm\eps)\cdot \frac{d^+(L)}{n}\in (1\pm 10\eps)\frac{m}{n}$. Hence, the returned value $\hat{m}=\frac{n}{r}\sum_{i=1}^r\chi_i$ in Step~\ref{step:edges_return_mhat}
    is with probability at least $1-\delta$ in $(1\pm10\eps)m$.
    
    \textbf{Proof of item 3.}
     By Equation~\ref{eq:exp_chi_edges} linearity of expectation,
    \[
    \EX\left[\frac{1}{q}\sum_{i}\chi_i\right]=\EX[\chi_1]=\frac{d^+(L)}{n}\leq \frac{m}{n}.
    \]
    By Markov's inequality, 
    \[
    \Pr\left[\frac{1}{q}\sum_{i}\chi_i > (1+\eps)m\right]<\frac{1}{1+\eps} < 1-\eps/4.
    \]

\textbf{Proof of item 4.}
The query complexity and running time are 

\[
r+\min{n,q}=O(\min\{n,q\})=\min\left\{n,\frac{n\oa}{m}\cdot \frac{\ln(1/\delta)}{\eps^3}\right\}\;.
\]

\subsection{Searching for the ``right" $\om$ value}\label{sec:search_edges}

Searching for the right $\om$ value can be done almost identically to the search for the right $\ot$ value in Section~\ref{sec: Search Theorem for Triangle Counting}. Here too we rely on Theorem~\ref{thm:search}, where here algorithm $\mA$ will be defined to be 
\[\mA(\om, \eps, \delta, \oa)=\textsf{\approxEdges}(\om, \eps/10,\delta, \oa)\;,\]
and redefine algorithm $\mathcal{B}(\eps,\oa )$ as follows:
\begin{itemize}
    \item 
    \search$\left(\mA,n^{2},\eps,\oa \right)$ with the choice of $\invokeA$ defined above.
    \item If at any point while running \search$\left(\mA,n^2,\eps,\oa \right)$ a call to $\invokeA$ returns ``Bad Advice",  output ``Bad Advice" and terminate. 
\end{itemize}

The rest of the proof is identical to the triangles case in Section \ref{sec: Search Theorem for Triangle Counting}.

\bibliographystyle{plain}
\bibliography{refs}

\appendix

\section{Missing Preliminaries}\label{sec:missing prliminaries}
We shall make use of Chernoff's inequality and its corollary.

\begin{theorem}[Chernoff inequality~\cite{chernoff}]\label{thm:chernoff}
Let \(\chi_1,\ldots,\chi_r\) be independent random variables in \([0,B]\) with
\(\EX[\chi_i]=\mu\).
For any \(\eps\in(0,1],\)
\[
  \Pr\left[\tfrac1r\sum_{i=1}^r \chi_i>(1+\eps)\mu\right]
  < e^{-\eps^{2}\mu r/(3B)},
  \qquad
  \Pr\left[\tfrac1r\sum_{i=1}^r \chi_i<(1-\eps)\mu\right]
  < e^{-\eps^{2}\mu r/(2B)}.
\]
\end{theorem}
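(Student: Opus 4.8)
The Chernoff inequality stated here is a textbook fact cited to~\cite{chernoff}, so the plan is simply to recall the standard exponential-moment (Bernstein--Chernoff) argument, specialized to $[0,B]$-valued variables via a rescaling. First I would set $Y_i\eqdef \chi_i/B\in[0,1]$, so that $\EX[Y_i]=p\eqdef \mu/B$, and write $Y\eqdef \sum_{i=1}^r Y_i$ with $\EX[Y]=rp$. The two claimed tail bounds are exactly $\Pr[Y>(1+\eps)rp]<e^{-\eps^2 rp/3}$ and $\Pr[Y<(1-\eps)rp]<e^{-\eps^2 rp/2}$, i.e.\ the usual multiplicative Chernoff bounds for a sum of independent $[0,1]$ variables, so it suffices to prove these.

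For the upper tail I would apply Markov's inequality to $e^{\lambda Y}$ with a parameter $\lambda>0$ to be optimized: by independence, $\Pr[Y>(1+\eps)rp]\le e^{-\lambda(1+\eps)rp}\prod_{i=1}^r\EX[e^{\lambda Y_i}]$. The key per-variable estimate is the convexity bound $e^{\lambda y}\le 1+(e^{\lambda}-1)y$ for all $y\in[0,1]$ (the exponential lies below the chord joining $(0,1)$ and $(1,e^{\lambda})$), which gives $\EX[e^{\lambda Y_i}]\le 1+(e^{\lambda}-1)p\le e^{(e^{\lambda}-1)p}$. Multiplying over $i$ and choosing $\lambda=\ln(1+\eps)$ yields $\Pr[Y>(1+\eps)rp]\le\left(\frac{e^{\eps}}{(1+\eps)^{1+\eps}}\right)^{rp}$, and the claimed bound follows once one knows the elementary inequality $(1+\eps)\ln(1+\eps)-\eps\ge \eps^2/3$ for $\eps\in(0,1]$.

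The lower tail is symmetric: Markov applied to $e^{-\lambda Y}$ gives $\Pr[Y<(1-\eps)rp]\le e^{\lambda(1-\eps)rp}\prod_{i=1}^r\EX[e^{-\lambda Y_i}]$, and the same convexity bound (now with exponent $-\lambda$) gives $\EX[e^{-\lambda Y_i}]\le e^{(e^{-\lambda}-1)p}$. Optimizing at $\lambda=-\ln(1-\eps)>0$ produces $\Pr[Y<(1-\eps)rp]\le\left(\frac{e^{-\eps}}{(1-\eps)^{1-\eps}}\right)^{rp}$, and the claimed exponent follows from the elementary inequality $(1-\eps)\ln(1-\eps)+\eps\ge \eps^2/2$ for $\eps\in(0,1)$.

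The only genuine work, and the step where slips are most likely, is verifying those two scalar inequalities on the stated ranges. For the lower-tail one I would set $g(\eps)=(1-\eps)\ln(1-\eps)+\eps-\eps^2/2$ and check $g(0)=g'(0)=0$ together with $g''(\eps)=\eps/(1-\eps)\ge 0$, so $g\ge 0$. For the upper-tail one, $h(\eps)=(1+\eps)\ln(1+\eps)-\eps-\eps^2/3$ has $h(0)=0$ and $h'(\eps)=\ln(1+\eps)-2\eps/3$, which is concave on $[0,1]$ with $h'(0)=0$ and $h'(1)=\ln 2-2/3>0$, hence $h'\ge 0$ and $h\ge 0$ throughout $(0,1]$. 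Beyond this elementary calculus there is no conceptual obstacle, since the statement is entirely standard.
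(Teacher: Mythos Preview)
Your argument is correct and is the standard exponential-moment derivation of the multiplicative Chernoff bound; the scalar inequalities are verified correctly. The paper itself does not prove this theorem at all---it is stated in the appendix as a cited textbook fact (attributed to~\cite{chernoff}) and used as a black box, so there is no ``paper's own proof'' to compare against.
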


\begin{corollary}\label{cor:chernoff}
    Let $\delta\in(0,1)$ and let  \(\chi_1,\ldots,\chi_r\) be independent random variables in \([0,B]\) with
\(\EX[\chi_i]=\mu\). Let $\hat \mu=\tfrac1r\sum_{i=1}^r \chi_i$.
If $r\geq \frac{B}{\mu}\cdot \frac{3\ln(2/\odelta)}{\eps^2}$, then with probability at least $1-\delta$, $\hat\mu\in (1\pm \eps)\mu$.

If $r<\frac{B}{\mu}\cdot \frac{3\ln(2/\odelta)}{\eps^2}$ so that $r=\frac{B}{\overline\mu}\cdot \frac{3\ln(2/\delta)}{\eps^2}$ for some value $\overline \mu >\mu$, 
then with probability at least $1-\delta$, $\hat\mu<(1+ \eps)\overline\mu$.
\end{corollary}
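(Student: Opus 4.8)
The plan is to obtain both parts directly from the Chernoff inequality (Theorem~\ref{thm:chernoff}); the only genuine step is reconciling the second part, where the deviation is measured against a proxy mean $\overline{\mu}$ rather than the true mean $\mu$.

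\textbf{First part.} I would substitute $r \ge \tfrac{B}{\mu}\cdot\tfrac{3\ln(2/\delta)}{\eps^2}$ into the two tails of Theorem~\ref{thm:chernoff}. The upper tail gives $\Pr[\hat\mu > (1+\eps)\mu] < e^{-\eps^2\mu r/(3B)} \le e^{-\ln(2/\delta)} = \delta/2$, and since $\eps^2\mu r/(2B) = \tfrac32\cdot\eps^2\mu r/(3B) \ge \tfrac32\ln(2/\delta)$, the lower tail gives $\Pr[\hat\mu < (1-\eps)\mu] < (\delta/2)^{3/2} < \delta/2$ (using $\delta < 1$). A union bound then yields $\Pr[\hat\mu \notin (1\pm\eps)\mu] < \delta$.

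\textbf{Second part.} Here the obstacle is that Theorem~\ref{thm:chernoff} bounds $\hat\mu$ above $(1+\eps)$ times its \emph{own} mean $\mu$, whereas we want a bound above $(1+\eps)\overline{\mu}$ with $\overline{\mu} > \mu$. I would resolve this by stochastic domination. If $\overline{\mu} \ge B$ the statement is immediate, since $\hat\mu \le B \le \overline{\mu} < (1+\eps)\overline{\mu}$ always. Otherwise $\mu < \overline{\mu} < B$: let $Z_1,\dots,Z_r$ be i.i.d.\ Bernoulli with parameter $p = \tfrac{\overline{\mu}-\mu}{B-\mu} \in (0,1)$, independent of $\chi_1,\dots,\chi_r$, and put $Y_i = (1-Z_i)\chi_i + Z_i B$. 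Then the $Y_i$ are independent, lie in $[0,B]$, satisfy $\EX[Y_i] = (1-p)\mu + pB = \overline{\mu}$, and dominate $\chi_i$ pointwise, so $\hat\mu \le \tfrac1r\sum_i Y_i$. Applying Theorem~\ref{thm:chernoff} to the $Y_i$ with their mean $\overline{\mu}$ and the prescribed $r = \tfrac{B}{\overline{\mu}}\cdot\tfrac{3\ln(2/\delta)}{\eps^2}$ gives $\Pr[\tfrac1r\sum_i Y_i > (1+\eps)\overline{\mu}] < e^{-\eps^2\overline{\mu} r/(3B)} = \delta/2 < \delta$, hence $\Pr[\hat\mu > (1+\eps)\overline{\mu}] < \delta$.

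The main (and essentially only) difficulty is this mean-mismatch in the second part; once the dominating variables $Y_i$ are in place the rest is a one-line computation. A purely analytic alternative would be to invoke monotonicity of the multiplicative Chernoff tail in the mean at a fixed absolute threshold, but the explicit coupling keeps the reduction transparent.
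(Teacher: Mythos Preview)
Your proof is correct. The paper itself does not prove this corollary at all---it is stated in Appendix~\ref{sec:missing prliminaries} immediately after the Chernoff inequality and treated as a standard fact, so there is no paper argument to compare against. Your handling of the second part via an explicit coupling (augmenting each $\chi_i$ to a variable $Y_i$ with mean exactly $\overline{\mu}$) is clean and makes precise the monotonicity-in-the-mean property that one would otherwise have to cite; the case split on $\overline{\mu}\ge B$ is a nice touch ensuring $p\in(0,1)$ is well-defined.
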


\section{Deferred Proofs For the Triangle Counting Algorithm}
\label{sec: deferred proofs}

\subsection{Proof of IsHeavy guarantees}\label{sec:proof-is-heavy}


\begin{proof}[Proof of Claim \ref{claim: isHeavy gives good partition}]
First of all, step (1) will return Heavy if and only if $e$ is a degree-heavy edge. 

We will now argue that for every specific non-degree-heavy edge $e$, if $t(e)\geq 2\tau_t$ then \isHeavy{} returns ``Heavy" with probability at least $1-\frac{\delta}{10m}$,  and if $t(e)< \tau_t$ then it will return ``Not Heavy" with probability at least $1-\frac{\delta}{10m}$. After this is proven, a union bound over all such edges tells us that with probability at least $1-\delta/10$ over the random coins the algorithm isHeavy will induce a $(\tau_d, \tau_t)$-good partition, as defined in Definition~\ref{def:good_partiton}.

Fix a specific non-degree-heavy edge $e$. If $d(e)\leq \tau_t$, then $1.5 k\cdot \frac{\tau_t}{d(e)}\geq k$ and therefore the algorithm will return ``Not Heavy". Since for this edge $t(e)\leq d(e)\leq \tau_t$ this is the behavior required by the definition of a good partition (Definition \ref{def:good_partiton}).

For the rest of the proof, suppose $d(e) < \tau_t$.
Then, each of the $k$ neighbors of $e$ has a probability of $t(e)/d(e)$ to be a part of a triangle. Let $Y$ denote the number of witnessed triangles. 
Set $\mu=\tau_t/d(e)$, and let $Z$ be a sum of $k$ i.i.d. Bernoulli variables in $\{0,1\}$ each of which has an expectation of $1.5 \mu$.
Chernoff's inequality tells us that for any $\gamma\in (0,1)$ we have
\[
  \Pr\left[\tfrac Zk\geq 2\mu\right]
  \leq  e^{-(1/3)^{2}\cdot 1.5\mu k/3}
  =
  e^{-\mu k/18}
  \leq \frac{\delta}{10m}
  ,
  \qquad
  \Pr\left[\tfrac Zk \leq \mu\right]
  \leq e^{-(1/2)^{2}\cdot1.5\mu k/2} = e^{-3\mu k/16}\leq \frac{\delta}{10m}.
\]
Hence, we can finish the proof by considering the following two cases: 
\begin{itemize}
    \item Suppose $t(e)< \tau_t$.
    Each random neighbor of $e$ has a probability of forming a triangle equal to $t(e)/d(e)\leq \tau_t/d(e)=\mu$. This implies that for every threshold $w$ we have\footnote{This is a consequence of a general observation that if $Y_1$ is a sum of $k$ Bernoulli random variables with expectation $\mu_1$ and $Y_2$ is a sum of $k$ Bernoulli random variables with expectation $\mu_2\geq \mu_1$ then for every threshold $w$ we have $\Pr[Y_1\geq w]\leq \Pr[Y_2\geq w]$. This observation follows immediately by replacing the $k$ Bernoulli random variables one at a time and observing that the probability of exceeding the threshold $w$ will not decease at every step.    
    } $\Pr[Y\geq w]\leq \Pr[Z\geq w]$, and in particular 
    \[
     \Pr\left[\tfrac Yk\geq \frac{2 \tau_t}{d(e)}\right]
     \leq 
     \Pr\left[\tfrac Zk\geq \frac{2 \tau_t}{d(e)}\right]
     =\Pr\left[\tfrac Zk\geq 2 \mu\right]
     \leq \frac{\delta}{10m}.
    \]
   Since the algorithm outputs ``Heavy" only when $\frac{Y}{k}>1.5 \frac{2 \tau_t}{d(e)}=1.5\mu$,
   the inequality above upper-bounds the probability of this event by at most $\frac{\delta}{10m}$.  
    \item Suppose $t(e)\geq 2\tau_t$.
    Again, each random neighbor of $e$ has a probability of forming a triangle equal to $t(e)/d(e)\geq 2\tau_t/d(e)=2\mu$. This implies that for every threshold $w$ we have $\Pr[Y\leq w]\leq \Pr[Z\leq w]$, and in particular 
    \[
     \Pr\left[\tfrac Yk\leq \frac{ \tau_t}{d(e)}\right]
     \leq 
     \Pr\left[\tfrac Zk\leq \frac{ \tau_t}{d(e)}\right]
     =\Pr\left[\tfrac Zk\leq  \mu\right]
     \leq \frac{\delta}{10m}.
    \]
    Since the algorithm outputs ``Heavy" only when $\frac{Y}{k}>1.5 \frac{2 \tau_t}{d(e)}=1.5\mu$, the inequality above lower-bounds the probability of this event by $1-\frac{\delta}{10m}$.  
\end{itemize}
This concludes the proof.
\end{proof}

\subsection{Run-time bound}\label{sec:run_time_bound}
In this sub-section we prove that the algorithm \TrianglesApprox{} satisfies the run-time bound in Theorem \ref{thm: main theorem before search for ot}.

To bound the expected run-time of the algorithm, we first bound the expected number of times that IsAssigned is invoked.
\begin{claim}
\label{claim: how many times IsAssigned invoked}
    In expectation, IsAssigned is invoked at most $O\left(\frac{\ln(1/\delta)}{ \eps^2}\cdot t/\ot\right)$ times.
\end{claim}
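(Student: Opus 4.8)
The quantity of interest is the expected number of invocations of \IsAssigned{} over the course of the main loop in Step~\ref{step: main loop of main algo}. There are $s=\frac{d(R)}{|R|\cdot(\ot/m)}\cdot\frac{10\ln(8/\delta)}{\eps^2}$ iterations of the loop, but $s$ itself is a random variable since it depends on $d(R)$, which depends on the random choice of $R$. In a single iteration, \IsAssigned{} is invoked only when the sampled pair $(e,w)$ actually forms a triangle (Step~\ref{step:invoke-is-Assigned}). So the plan is: first compute the probability that a single iteration produces a triangle, conditioned on a fixed $R$; then multiply by $s$ (which is a function of $d(R)$) to get the conditional expected number of invocations; then take expectation over $R$, at which point the $d(R)$ factors cancel and we are left with a clean bound.

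\textbf{Key steps.} First I would fix $R$ and analyze one iteration. In Step~\ref{step:sample-edge-from-Q} the edge $e\in R$ is drawn with probability $d(e)/d(R)$, and then $w$ is a uniform random neighbor of $e$'s min-degree endpoint. The probability that $(e,w)$ is a triangle given $e$ is exactly $t(e)/d(e)$. Hence the probability that a single iteration yields a triangle is
\[
\sum_{e\in R}\frac{d(e)}{d(R)}\cdot\frac{t(e)}{d(e)}=\frac{\sum_{e\in R}t(e)}{d(R)}.
\]
Multiplying by the number of iterations $s=\frac{d(R)\cdot m}{|R|\cdot\ot}\cdot\frac{10\ln(8/\delta)}{\eps^2}$, the conditional expected number of \IsAssigned{} calls given $R$ is
\[
\frac{\sum_{e\in R}t(e)}{d(R)}\cdot\frac{d(R)\cdot m}{|R|\cdot\ot}\cdot\frac{10\ln(8/\delta)}{\eps^2}
=\frac{m}{|R|\cdot\ot}\cdot\frac{10\ln(8/\delta)}{\eps^2}\cdot\sum_{e\in R}t(e).
\]
The critical point is that $d(R)$ has cancelled. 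Now I would take expectation over the uniform random choice of $R$ (of size $|R|$): by linearity, $\EX\!\left[\sum_{e\in R}t(e)\right]=|R|\cdot\frac1m\sum_{e\in E}t(e)=|R|\cdot\frac{3t}{m}$, since each triangle is counted by three edges. Substituting gives an expected number of invocations of
\[
\frac{m}{|R|\cdot\ot}\cdot\frac{10\ln(8/\delta)}{\eps^2}\cdot|R|\cdot\frac{3t}{m}
=\frac{30\ln(8/\delta)}{\eps^2}\cdot\frac{t}{\ot}=O\!\left(\frac{\ln(1/\delta)}{\eps^2}\cdot\frac{t}{\ot}\right),
\]
as claimed. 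I would also note (per Remark~\ref{remark: random variables are defined even if not computed}) that this count is zero on executions that terminate early in Steps~\ref{step:R_high_deg} or~\ref{step:reject-if-bad-edges}, so the bound on the full run is no larger.

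\textbf{Main obstacle.} The only real subtlety is handling the fact that the loop length $s$ is itself random and correlated with the per-iteration triangle probability through $d(R)$; both scale with $d(R)$ in opposite directions, so one must be careful to condition on $R$ \emph{before} multiplying, rather than trying to apply Wald-type reasoning with a fixed expected loop length. Once the conditioning is done in the right order the $d(R)$ dependence disappears and the rest is a one-line linearity-of-expectation computation. A minor point to state cleanly is that \IsAssigned{} is invoked at most once per loop iteration, so "number of triangles found" and "number of \IsAssigned{} calls" coincide.
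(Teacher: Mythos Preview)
Your proposal is correct and follows essentially the same argument as the paper: condition on $R$, compute the per-iteration triangle probability as $\sum_{e\in R}t(e)/d(R)$, multiply by $s$ so that $d(R)$ cancels, and then average over $R$ using $\EX\!\left[\sum_{e\in R}t(e)\right]=|R|\cdot 3t/m$. If anything, your write-up handles the dependence of $s$ on $d(R)$ more carefully than the paper's, which blurs the conditioning step slightly.
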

\begin{proof}
    For a specific value of the set $R$ chosen in Step~\ref{step: choose R},  and let $F_{i,R}$ denote the event that a triangle is detected in  the $i\th$ invocation of Step~\ref{step:invoke-is-Assigned}. 
    Let $e~\sim_d R$ denote the distribution induced on edges when sampling an edge in $R$, each with probability $\frac{d(e)}{d(R)}$.
    Observe that conditioned on $R$ and under this distribution,
    \[
    \EX_{e\sim_{d} R}[F_{i,R}\mid R]=\sum_{e\in Q}\frac{d(e)}{d(R)}\cdot\frac{t(e)}{d(e)}=\frac{t(R)}{d(R)} \,
    \]
    where $t(e)$ denotes the number of triangles incident to the edge $e$, and $t(Q)=\sum_{e\in E}t(e)$.
    Since $\EX_{e\sim E}[t(e)]=\frac{3t}{m}$, 
    it holds by  linearity of expectation that 
    $\EX[t(R)]=|R| \cdot \frac{3t}{m}$.
    Hence, removing the conditioning on $R$, 
    $\EX_{R,e\sim_{d} R}[F_{i,R}]=\frac{|R|\cdot \frac{3t}{m}}{d(R)}$.
    Hence, \[\EX\left[\sum_{i=1}^{s} F_{i,R}\right]=
    s\cdot\frac{|R|\cdot \frac{3t}{m}}{d(R)}=
    \frac{d(R)}{|R|\cdot \frac{\ot}{m}}\cdot \frac{10\ln(1/\delta)}{\eps^2}\cdot \frac{|R|\cdot \frac{3t}{m}}{d(R)}=O\left(\frac{\ln(1/\delta)}{\eps^2}
    \cdot (t/\ot)
    \right).\]

    Hence, the expected number of invocations of \textsf{IsAssigned} is $O\left(\frac{\ln(1/\delta)}{\eps^2}
    \cdot (t/\ot)
    \right)$.
    
\end{proof}
Now, we are ready to bound the expected run-time of \ApproxTriangles{}. We consider it step-by step as follows:
\begin{itemize}
    \item Steps 1, 2 and 3 take $O(|R|)$ time.
    \item By Claim \ref{claim: isHeavy gives good partition}, we know that the run-time of \isHeavy{} on an edge $e$ is at most $O\left(\frac{d(e)}{\tau_t} \ln\frac{m}{\delta}\right)$. Thus, running this procedure on every edge in $R$ takes time $O\left(\frac{d(R)}{\tau_t} \ln\frac{m}{\delta}\right)$. Note that to reach Step (4), the algorithm has to avoid terminating at step (3) due to the set $R$ satisfying $d(R)\leq |R|\cdot\ta\cdot \frac{4}{\delta}$. Therefore, we can bound the run-time in step (4) by 
    \[
    O\left(\left(\frac{|R|\cdot \ta}{\tau_t \delta}+|R| \right)\ln\frac{m}{\delta}\right)
    =O\left(\frac{|R|}{\delta}\ln\frac{m}{\delta}\right)
    \]
    \item Step (5) takes time $O(|R|))$.
    \item Setting  aside the invocations of IsAsigned, Step (6) takes time 
    \[O(s)=O\left(\frac{d(R)}{|R|\cdot \frac{\ot}{m}}\cdot\frac{\ln(1/\delta)}{\eps^2}\right) 
    =
    O\left(\frac{\ta\cdot m}{\delta\cdot \ot}\cdot\frac{\ln(1/\delta)}{\eps^2}\right) 
    \]
    \item By Claim \ref{claim: how many times IsAssigned invoked}, the expected number of times that the  subroutine \textsf{IsAsigned} is invoked is at most $O\left(\frac{\ln(1/\delta)}{ \eps^2}\cdot t/\ot\right)$. From Claim \ref{claim: isHeavy gives good partition}, we know that the run-time of \textsf{IsAsigned} on any edge $e$ is at most $O\left(\frac{\tau_d}{\tau_t} \ln\frac{m}{\delta}\right)$. Therefore, in expectation, the total run-time for the invocations of \textsf{IsAsigned} in Step (6) is at most \[
    O\left(\frac{\tau_d}{\tau_t} \cdot \ln\frac{m}{\delta}
    \frac{\ln(1/\delta)}{ \eps^2}\cdot t/\ot
    \right)
    =
    O\left(\frac{m \gamma }{\ot} \cdot \ln\frac{m}{\delta}
    \frac{\ln(1/\delta)}{ \eps^2}\cdot t/\ot
    \right)    
    \]
\end{itemize}
Summing up all the contributions, we see that the overall expected run-time is at most
\[
O\left(\frac{|R|}{\delta}\ln\frac{m}{\delta}\right)
+ O\left(\frac{\ta\cdot m}{\delta\cdot \ot}\cdot\frac{\ln(1/\delta)}{\eps^2}\right)  
+O\left(\frac{m \gamma }{\ot} \cdot \ln\frac{m}{\delta}
    \frac{\ln(1/\delta)}{ \eps^2}\cdot t/\ot
    \right) 
\]
We can bound $R$ as follows:
\[  
    |R|\leq
     \setq+\setQ
     =
     O\left(\frac{m\cdot\gamma}{ \ot}\cdot \frac{\ln \frac{1}{\delta} }{\eps^3}+\frac{m}{(\eps \ot)^{2/3}}\cdot\ln(1/\delta)\right)
     =O\left(\frac{m\cdot\gamma}{ \ot}\cdot \frac{1 }{\eps^3}\ln \frac{1}{\delta}\right)
     .
\]
Substituting the above bound for $|R|$ and $\gamma=\max\{\ta, \ot^{1/3}\}$, our bound on the overall run-time becomes
\begin{multline*}
 O\left(\frac{m\cdot\max\{\ta, \ot^{1/3}\}}{ \ot}\cdot \frac{\ln 1/\delta }{\eps^3 \delta}\ln\frac{m}{\delta}
 +
 \frac{m \max\{\ta, \ot^{1/3}\} }{\ot} \cdot \ln\frac{m}{\delta}
    \frac{\ln(1/\delta)}{ \eps^2}\cdot t/\ot
 \right)=\\
 O\left(\frac{m\cdot\max\{\ta, \ot^{1/3}\}}{ \ot}\cdot \frac{\ln 1/\delta }{\eps^3 \delta}\ln\frac{m}{\delta}\left(1+t/\ot\right)
 \right),
\end{multline*}
finishing the proof of the run-time bound in Theorem \ref{thm: main theorem before search for ot}.

\subsection{Searching for the ``right'' $\ot$ value for Triangle Counting}
\label{sec: Search Theorem for Triangle Counting}

We use to the following as the   ``search theorem".

\begin{theorem}[Theorem 18 in~\cite{ERS_cliques}] \label{thm:search}
	Let  $\invokeA\left(\ov,\eps,\delta, \vecV \right)$ be an algorithm that is given parameters $\ov,\eps, \delta$, possibly an additional set of parameters denoted $\vecV$.

\begin{enumerate}
	\item\label{it:oa-good} If $\ov \in [\vvv/4,\vvv]$, then with probability at least $1-\delta$, $\mA$ returns a value $\hv$ such that
	$\hv\in(1\pm \eps)\vvv$.
	\item\label{it:oa-big} If $\ov > \vvv$, then $\mA$ returns a value $\hv$, such that with probability at least $\eps/4$, $\hv \leq (1+\eps)\vvv$.
	\item\label{it:exp-t-A} The expected running time of $\mA$, denoted $\ert{\mA\left(\ov,\eps,\delta, \vecV\right)}$,
 is monotonically non-increasing with $\ov$ and furthermore, if $\ov < \vvv$, then 
$\ert{\mA\left(\ov,\eps,\delta, \vecV\right)} \leq \ert{\mA\left(\vvv,\eps,\delta, \vecV\right)}\cdot (\vvv/\ov)^\ell$
for some constant $\ell > 0$.
\end{enumerate}
Then there exists an algorithm \search\ that, when given an upper bound $U$ on 
$\vvv$, a parameter $\eps$, a set of parameters $\vecV$ and access to $\mA$, returns a value $X$ such that the following holds. 
\begin{enumerate}
	\item\label{it:Ap-good}  \search$(\mA,U,\eps,\vecV)$  returns a value $X$ such that $X \in (1\pm\eps)\vvv$ with probability at least $4/5$.
	\item\label{it:exp-t-Ap} The expected running time of \search$\left(\mA,U,\eps,\vecV\right)$ is $\ert{\mA\left(\vvv,\eps,\delta, \vecV\right)} \cdot \poly( \log(U), 1/\eps, \ell)$ for $\delta =\Theta\left(\frac{\eps}{2^{\ell}(\ell + \log \log (U))}\right) .$ Moreover, \search$\left(\mA,U,\eps,\vecV\right)$ makes calls to $\mA$ as a black box, and if $N$ denotes the number of calls and $\{\delta_1,\cdots \delta_N\}$ denote the respective value of the failure-probability parameter for each call then
    \begin{equation}
        \label{eq: bound on sum-of-failures for search theorem}
        \sum_{i=1}^N \delta_i\leq \frac{1}{5}
    \end{equation}  
\end{enumerate}
\end{theorem}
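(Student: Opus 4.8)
\medskip
\noindent\textbf{Proof strategy for Theorem~\ref{thm:search}.}
The plan is to run a geometric search for $\vvv$ that starts from the known upper bound $U$ and works its way down, using the \emph{outputs} of $\mathcal A$ themselves as the signal that tells us when to stop. The subtlety throughout is that property~\ref{it:exp-t-A} controls $\ert{\mathcal A(\ov,\eps,\delta,\vecV)}$ only through the bound $\ert{\mathcal A(\vvv,\eps,\delta,\vecV)}\cdot(\vvv/\ov)^{\ell}$, which is useless once $\ov$ is much below $\vvv$, and properties~\ref{it:oa-good}--\ref{it:oa-big} say nothing about the output once $\ov<\vvv/4$; so the search must be designed never to descend a large factor past $\vvv$, and it must be guarded against the event that its stopping heuristic misfires.

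First I would fix a sequence of guesses $\ov_0,\ov_1,\dots$ decreasing geometrically from $U$ (to make the overhead $\poly(\log U)$ rather than $\poly(U)$; one in fact uses a doubly-geometric / repeated-squaring schedule, which is why only $\log\log U$ appears in the eventual choice of $\delta$). For each guess I would run $\mathcal A(\ov_j,\eps',\delta',\vecV)$ a total of $k=\Theta\!\big(\tfrac1{\eps}\log\tfrac1{\delta'}\big)$ independent times, with $\eps'$ a small constant fraction of $\eps$, and let $Y_j$ be the \emph{minimum} of the returned values. Taking the minimum is what turns the outputs into a usable signal: by property~\ref{it:oa-big}, if $\ov_j>\vvv$ then with probability at least $1-(1-\eps'/4)^{k}$ at least one of the $k$ runs returns at most $(1+\eps')\vvv$, so $Y_j\le(1+\eps')\vvv$; whereas by property~\ref{it:oa-good}, if $\ov_j\in[\vvv/4,\vvv]$ then with probability at least $1-k\delta'$ \emph{every} run returns a value in $(1\pm\eps')\vvv$, so $Y_j\ge(1-\eps')\vvv\ge(1-\eps')\ov_j$. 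I would therefore stop at the first index $j^{*}$ with $Y_{j^{*}}\ge\ov_{j^{*}}$ and output $X=Y_{j^{*}}$. A union bound over the (at most $O(\log\log U)$) guesses then shows that, with high probability, we do not stop while $\ov_j>(1+\eps')\vvv$ (there $Y_j\le(1+\eps')\vvv<\ov_j$) and we do stop by the time $\ov_j$ enters $[\vvv/4,\vvv/2)$ (there $Y_j\ge(1-\eps')\vvv>\ov_j$), so $\ov_{j^{*}}\in[\vvv/4,(1+\eps')\vvv]$ and, crucially, $X=Y_{j^{*}}\in(1\pm\eps)\vvv$: if $\ov_{j^{*}}\le\vvv$ this follows from property~\ref{it:oa-good}, and if $\ov_{j^{*}}>\vvv$ then $X=Y_{j^{*}}\ge\ov_{j^{*}}>\vvv$ on one side and $X\le(1+\eps')\vvv$ by property~\ref{it:oa-big} on the other. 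Picking $\delta'$ so that the union bound over all guesses and repetitions leaves total failure probability at most $1/5$ yields item~\ref{it:Ap-good}.

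For the running-time bound (item~\ref{it:exp-t-Ap}) the first point is that, on the good event above, the search only ever calls $\mathcal A$ at guesses $\ov_j\ge\ov_{j^{*}}\ge\vvv/4$, so by the monotonicity-plus-power-law bound of property~\ref{it:exp-t-A} each call has expected running time at most $\ert{\mathcal A(\vvv/4,\eps',\delta',\vecV)}\le\ert{\mathcal A(\vvv,\eps,\delta,\vecV)}\cdot 2^{O(\ell)}$, and multiplying by the $O(k\log\log U)$ calls gives $\ert{\mathcal A(\vvv,\eps,\delta,\vecV)}\cdot\poly(\log U,1/\eps,\ell)$. To promote this to an \emph{expected}-time statement — and in particular to neutralize the otherwise catastrophic contribution of the low-probability event in which the stopping rule misfires and the search descends below $\vvv/4$ — I would wrap everything in an outer loop that doubles a global time budget $B=B_0,2B_0,4B_0,\dots$: inside the phase with budget $B$, any single call to $\mathcal A$ is aborted once it has run for time $B$ (an aborted call is treated as returning $0$, which can only postpone stopping, never cause a premature stop), and if any call is aborted we declare $B$ too small, discard the phase, and restart the search with budget $2B$. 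Markov's inequality applied to $\ert{\mathcal A(\ov_j)}$ for the encountered $\ov_j\ge\vvv/4$ shows that the first budget of size $\Theta\!\big(\ert{\mathcal A(\vvv,\eps,\delta,\vecV)}\cdot 2^{O(\ell)}\cdot k\log\log U\big)$ already succeeds with high probability, so the geometrically smaller earlier phases cost only a constant factor more, and since no call ever runs past the current budget even the bad event costs only a bounded number of further phases. Finally, counting the total number $N$ of calls to $\mathcal A$ (which is $\poly(\ell,1/\eps,\log\log U)$) and imposing $N\delta\le 1/5$ with generous slack in $\ell$ so that all the $\poly(\ell)$ factors are absorbed forces the conservative choice $\delta=\Theta\!\big(\eps/(2^{\ell}(\ell+\log\log U))\big)$ stated in the theorem, and simultaneously yields $\sum_{i=1}^{N}\delta_i\le\tfrac15$, i.e.\ inequality~(\ref{eq: bound on sum-of-failures for search theorem}).

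The hard part will be exactly this running-time bookkeeping. The output-based stopping rule is only correct with high probability, so without the budget-doubling guard a single misfire could push the search to a guess an unboundedly large factor below $\vvv$, where $\ert{\mathcal A}$ blows up like $(\vvv/\ov)^{\ell}$; the real work is in interleaving the budget doubling with the (doubly-)geometric guess schedule so that (i) on the good event the search provably never descends more than a constant factor below $\vvv$, (ii) an aborted or unlucky call can never trigger a premature stop, and (iii) the dependence on $\ell$ stays polynomial in the running time while being comfortably dominated by the $2^{\ell}$ safety factor in $\delta$. Everything else — the Chernoff bound behind the repetition count $k$, the union bounds over guesses, and the Markov estimates for the budgets — is routine.
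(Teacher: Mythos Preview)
The paper does not prove this theorem at all: it is quoted verbatim as ``Theorem 18 in~\cite{ERS_cliques}'' and used as a black box, so there is no ``paper's own proof'' to compare against. Your sketch is in fact a faithful outline of the argument in \cite{ERS_cliques}: the geometric (indeed doubly-geometric) descent from $U$, the $k=\Theta(\tfrac{1}{\eps}\log\tfrac{1}{\delta'})$ repetitions with the minimum taken to amplify property~\ref{it:oa-big}, the stopping rule based on $Y_j$ versus $\ov_j$, and the outer time-budget doubling to tame the $(\vvv/\ov)^\ell$ blow-up are exactly the ingredients of that proof, and your identification of the budget/guess interleaving as the delicate part is accurate.
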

\subsection{Using the Search Theorem for testable triangle counting}
In this section we combine Theorem \ref{thm:search} with Theorem \ref{thm: main theorem before search for ot} to obtain the Theorem~\ref{thm: main theorem after search for ot}
Let $G$ be the input
graph, $\eps$ be the approximation parameter, and a failure probability parameter $\delta$.
Let $t$ be the number of triangles in $G$. In addition to augmented query
access to $G$ and approximation parameter $\epsilon$,
the algorithm takes as an input a pair of positive integers $\ta$.
Then, there exists an 
algorithm  with an expected run-time and query complexity of $O^*\left(\left(\frac{m \ta }{ t}+\frac{m }{ t^{2/3}} \right)\log \frac{1}{\delta}\right)$. Every time the algorithm is run, it can either return \badadvice{} or a number $\hat{t}$, satisfying
\begin{itemize}
\item  (Completeness)
For every $G$, if $\aG\leq  \ta$,
then the algorithm \ApproxTriangles{} can output $\badadvice$ only with probability at most $\delta$.

\item  (Soundness)
For every $G$ and $
\ta$,  w.p. at least $1-\delta$ \ApproxTriangles{} will either return $\badadvice$ or output $\hat{t} \in (1 \pm \epsilon)t$.
\end{itemize}

It will suffice to show the theorem above for $\delta=1/5$, after which the case of general $\delta$ will follow by a standard repetition argument. Specifically, this can be achieved by running the algorithm $k=20\log \frac{1}{\delta}$ times and (a) if the majority of runs return $\badadvice$ we likewise return $\badadvice$ (b) otherwise, we return the median of the returned values $\hat{t}_i$. The Completeness condition of the new algorithm follows because if $\aG\leq \ta$ then each of the iterations has only a probability at most $1/5$ of returning $\badadvice$, and therefore by Chernoff Bound the probability that more than half of the iterations return $\badadvice$ is at most $\delta$. The Soundness is true because, again by a Chernoff bound, with probability at least $1-\delta$ the number of iterations that return $\hat{t}_i$ with $\hat{t}_i\notin  [(1-\eps)t, (1+\eps)t]$ is at most $k/5$. Conditioned on this event, if more than half of the iterations produced a value of $\hat{t}_i$ then at least $3/5$ of these values are in the interval $[(1-\eps)t, (1+\eps)t]$ and thus the median of these values is also in the interval $[(1-\eps)t, (1+\eps)t]$.

 To do this, we set
\[
\invokeA\left(\ot,\eps, \delta, \ta \right)
\eqdef
\textsf{Triangle-count-with-advice}(\ot,\eps/20,\delta, \ta)
\]
and define an algorithm $\mathcal{B}(\eps,\ta )$ as follows:
\begin{itemize}
    \item First, we start running  \search$\left(\mA,m^{3/2},\eps,\ta \right)$ with the choice of $\invokeA$ defined above.
    \item If at any point while running \search$\left(\mA,m^{3/2},\eps,\ta \right)$ a call to $\invokeA$ returns \badadvice, we output \badadvice{} and terminate. 
\end{itemize}
Here we take $U=m^{3/2}$ because a graph with $m$ edges has at most $m^{3/2}$ triangles.

\paragraph{Run-time.} By Theorem \ref{thm: main theorem before search for ot}, the algorithm $\mA$ has an expected run-time of $O\left(\frac{m\cdot\max\{\ta, \ot^{1/3}\}}{ \ot}\cdot \frac{\ln \frac{1}{\delta} }{\eps^3\delta} \ln\frac{m}{\delta}
 \cdot (1+t/\ot)
 \right)$. This run-time bound satisfies Premise (3) of Theorem \ref{thm:search} with $\ell=2$. Thus, the expected run-time of  $\mathcal{B}(\eps,\ta )$  is  $O\left(\frac{m\cdot\max\{\ta, t^{1/3}\}}{ t}\cdot \poly\left(\frac{1}{\epsilon}\right)
 \right)=O\left(
 \left(\frac{m \ta }{ t}+\frac{m\cdot }{ t^{2/3}}\right)\cdot \poly\left(\frac{1}{\epsilon}\right)
 \right)$.

 \paragraph{Completeness.} Suppose $\ta\geq \oa$. We now show that $\mathcal{B}(\eps,\ta )$  outputs $\badadvice$ with probability at most $1/5$.

 If  $N$ denotes the number of calls that \search$\left(\mA,m^{3/2},\eps,\ta \right)$ makes to $\mA$, and $\{\delta_1,\cdots \delta_N\}$ and denote the respective value of the failure-probability parameter for each call, then
 Theorem \ref{thm:search} tells us that  
    $
        \sum_{i=1}^N \delta_i\leq \frac{1}{5}
    $.

    When \search$\left(\mA,m^{3/2},\eps,\ta \right)$ calls $\mA$ for the $i\th$ time it uses the failure probability parameter $\delta=\delta_i$. The probability that this $i\th$ call to  $\mA$ returns $\badadvice$ is at most $\delta_i$ by Theorem \ref{thm: main theorem before search for ot}, because  
    since $\ta\geq \oa$. A union bound tells us that with probability at least $1-\sum_{i=1}^N \delta_i\geq \frac{4}{5}$ none of this calls returns $\badadvice$. In this case, $\mathcal{B}(\eps,\ta )$ also will not output $\badadvice$.

\paragraph{Soundness.}
Now, we would like to show that for any value of $\ta$, the algorithm $\mathcal{B}(\eps,\ta )$ will with probability at least $4/5$ either output $\badadvice$ or return a value $\hat{t}$ satisfying $\hat{t}=(1\pm \eps)t$.

Define $\mA'$ to be an oracle that: (1) runs $\mA$ (2) if $\mA$ returns a value $\hat{t}$ then $\mA'$ returns the same value  $\hat{t}$ (3) if $\mA$ returns $\badadvice$ then $\mA'$ returns $t$. 
Similar to $\mA'$, also define an oracle $\mathcal{B'}(\eps,\ta )=$\search$\left(\mA',U,\eps,\vecV\right)$. 
From  Theorem \ref{thm: main theorem before search for ot} we know that 
\begin{itemize}
    \item  
For every $G$ and $
\ta$, if
$\ot \in [  t/4, t ]$ 
then  w.p. at least $1-\delta$, the algorithm $\invokeA\left(\ot,\eps, \delta, \ta \right)$ will either return $\badadvice$ or output $\hat{t} \in (1 \pm \epsilon)t$.

\item  
For every $G$ and $
\ta$, if
$\ot > t $ 
then w.p. at least $\epsilon/4$, the algorithm $\invokeA\left(\ot,\eps, \delta, \ta \right)$  will either return $\badadvice$ or output a value of $\hat{t}$ such that $\hat{t} < (1+\epsilon) t$.
\end{itemize}
Combining this with the definition of the oracle $\mA'$ we see that 
\begin{itemize}
    \item  
For every $G$ and $
\ta$, if
$\ot \in [  t/4, t ]$ 
then  w.p. at least $1-\delta$, the algorithm $\invokeA'\left(\ot,\eps, \delta, \ta \right)$ will output $\hat{t} \in (1 \pm \epsilon)t$.

\item  
For every $G$ and $
\ta$, if
$\ot > t $ 
then w.p. at least $\epsilon/4$, the algorithm $\invokeA'\left(\ot,\eps, \delta, \ta \right)$  will output a value of $\hat{t}$ such that $\hat{t} < (1+\epsilon) t$.
\end{itemize}
Thus, Theorem \ref{thm:search} tells us that with probability at least $4/5$ the oracle  $\mathcal{B'}(\eps,\ta )=$\search$\left(\mA',U,\eps,\vecV\right)$ returns a value $\hat{t}\in (1\pm \eps)t$. 

Comparing $\mathcal{B'}(\eps,\ta )=$\search$\left(\mA',U,\eps,\vecV\right)$ with $\mathcal{B}(\eps,\ta )$, we see that the algorithm $\mathcal{B}(\eps,\ta )$ either outputs $\mathcal{B'}(\eps,\ta )$ or it outputs $\badadvice$ (when $\mathcal{B}$ and $\mathcal{B}'$ are instantiated using the same values of random coins). Since with probability at least $4/5$ the oracle  $\mathcal{B'}(\eps,\ta )$ returns a value $\hat{t}\in (1\pm \eps)t$, we conclude that with probability at least $4/5$ the algorithm $\mathcal{B}(\eps,\ta )$ either outputs $\badadvice$ or returns a value $\hat{t}\in (1\pm \eps)t$.

\end{document}